\title{Broadcasting under Structural Restrictions}
\titlerunning{Broadcasting under Structural Restrictions}
\author{Yudai Egami}
{Kyushu University, Fukuoka, Japan}
{egami.yudai.637@s.kyushu-u.ac.jp}
{}
{}
\author{Tatsuya Gima}
{Hokkaido University, Sapporo, Japan}
{gima@ist.hokudai.ac.jp}
{https://orcid.org/0000-0003-2815-5699}
{JSPS KAKENHI Grant Numbers
JP24K23847, 
JP25K03077. 
}
\author{Tesshu Hanaka}
{Kyushu University, Fukuoka, Japan
\and \url{https://sites.google.com/view/tesshu-hanaka/home}}
{hanaka@inf.kyushu-u.ac.jp}
{https://orcid.org/0000-0001-6943-856X}
{JSPS KAKENHI Grant Numbers
JP21H05852, 
JP21K17707, 
JP22H00513, 
JP23H04388, 
JP25K03077, 
and JST, CRONOS, Japan Grant Number JPMJCS24K2.
}
\author{Yasuaki Kobayashi}
{Hokkaido University, Sapporo, Japan}
{koba@ist.hokudai.ac.jp}
{https://orcid.org/0000-0003-3244-6915}
{JSPS KAKENHI Grant Numbers JP20H00595, JP23K28034, JP24H00686, and JP24H00697}
\author{Michael Lampis}
{Universit\'{e} Paris-Dauphine, PSL University, CNRS UMR7243, LAMSADE, Paris, France}
{michail.lampis@dauphine.fr}
{https://orcid.org/0000-0002-5791-0887}{}
\author{Valia Mitsou}
{Universit\'{e} Paris Cit\'{e}, IRIF, CNRS, 75205, Paris, France}
{vmitsou@irif.fr}{}{}
\author{Edouard Nemery}
{Universit\'{e} Paris Cit\'{e}, IRIF, CNRS, 75205, Paris, France}
{edouard.nemery@etu.u-paris.fr}
{https://orcid.org/0009-0007-6977-9330}{}
\author{Yota Otachi}
{Nagoya University, Nagoya, Japan
\and \url{https://www.math.mi.i.nagoya-u.ac.jp/~otachi/}}
{otachi@nagoya-u.jp}
{https://orcid.org/0000-0002-0087-853X}
{JSPS KAKENHI Grant Numbers
JP21K11752, 
JP22H00513, 
JP24H00697, 
JP25K03076, 
JP25K03077. 
}
\author{Manolis Vasilakis}
{Universit\'{e} Paris-Dauphine, PSL University, CNRS UMR7243, LAMSADE, Paris, France}
{emmanouil.vasilakis@dauphine.eu}
{https://orcid.org/0000-0001-6505-2977}{}
\author{Daniel Vaz}
{LIGM, Université Gustave Eiffel, CNRS, ESIEE Paris, F-77454 Marne-la-Vallée, France}
{daniel.ramosvaz@esiee.fr}
{https://orcid.org/0000-0003-2224-2185}{}
\authorrunning{Egami et al.}
\keywords{Parameterized Complexity, Structural Graph Parameters, Telephone Broadcast}
\begin{document}

\maketitle

\begin{abstract}
In the \textsc{Telephone Broadcast} problem we are given a graph $G=(V,E)$ with a designated source
vertex $s\in V$. Our goal is to transmit a message, which is initially known
only to $s$, to all vertices of the graph by using a process where in each
round an informed vertex may transmit the message to one of its uninformed
neighbors. The optimization objective is to minimize the number of rounds. 

Following up on several recent works, we investigate the structurally
parameterized complexity of \textsc{Telephone Broadcast}. In particular, we first strengthen existing
NP-hardness results by showing that the problem remains NP-complete on graphs
of bounded tree-depth and also on cactus graphs which are one vertex deletion
away from being path forests. Motivated by this (severe) hardness, we study
several other parameterizations of the problem and obtain FPT algorithms
parameterized by vertex integrity (generalizing a recent FPT algorithm
parameterized by vertex cover by Fomin, Fraigniaud, and Golovach~[TCS 2024])
and by distance to clique, as well as FPT approximation algorithms
parameterized by clique-cover and cluster vertex deletion. Furthermore, we
obtain structural results that relate the length of the optimal broadcast
protocol of a graph $G$ with its pathwidth and tree-depth. By presenting a
substantial improvement over the best previously known bound for pathwidth
(Aminian, Kamali, Seyed-Javadi, and Sumedha~[arXiv 2025]) we exponentially
improve the approximation ratio achievable in polynomial time on graphs of
bounded pathwidth from $\mathcal{O}(4^\mathrm{pw})$ to $\mathcal{O}(\mathrm{pw})$.

\end{abstract}

 \newpage

\section{Introduction}\label{sec:introduction}

Broadcasting in graphs is a well-studied field, with one of the most basic models being the \emph{telephone model}~\cite{networks/HedetniemiHL88}:
at the start of the broadcasting process exactly one vertex of the graph (the \emph{source} vertex) is informed (i.e., knows the message),
and at the beginning of each round each informed vertex can send the message to at most one of its uninformed neighbors
which receives the message by the end of that round.
To fix some notation, let $G$ be a connected graph on $n$ vertices and $s \in V(G)$ denote the source vertex.
Then, $b(G,s)$ denotes the minimum number of rounds needed to disseminate the message to all vertices of $G$ under the previously described model.
As an example, if $G$ is a path and $s$ is one of the degree-$1$ vertices, 
then $b(G,s) = n-1$,
while if $G$ is a complete graph, then $b(G,s) = \lceil \log n \rceil$.
Notice that for any graph $G$ it holds that $\lceil \log n \rceil \le b(G,s) \le n-1$,
and due to the previous examples there exist graphs for which those bounds are tight.

In this paper we study several structural parameterizations%
\footnote{Throughout the paper we assume that the reader is familiar with the basics of parameterized complexity,
as given in standard textbooks~\cite{books/CyganFKLMPPS15}.}
of the {\TB} problem which asks, given a graph $G$, a source vertex $s \in V(G)$, and an integer $t$, whether $b(G,s) \le t$.
In 1981, Slater, Cockayne, and Hedetniemi~\cite{siamcomp/SlaterCH81} introduced {\TB}
and showed that it is NP-complete in general but polynomial-time solvable on trees.
Among several results obtained in the intensive study of the problem, the most relevant to us are the very recent ones due to
Fomin et al.~\cite{tcs/FominFG24}, Tale~\cite{arxiv/Tale24}, and Aminian et al.~\cite{arxiv/AminianKSS25}, described below.

Fomin, Fraigniaud, and Golovach~\cite{tcs/FominFG24} initiated the study of the parameterized complexity of \TB.
They developed a $3^n n^{\bO(1)}$-time algorithm, where $n$ is the number of vertices,
as well as showed that the problem is fixed-parameter tractable parameterized
by feedback edge set number, by vertex cover number, and by the number of vertices minus the number of rounds;
notice that their first result, along with the fact that $n \le 2^t$,
further implies that the problem is FPT by the number of rounds and leads to an algorithm of double-exponential parametric dependence.
Among other open questions they posed, they asked the complexity of {\TB} parameterized by other structural parameters
and explicitly mentioned treewidth and feedback vertex set number.

Answering the previous question, Tale~\cite{arxiv/Tale24} showed that {\TB} is NP-complete on graphs of feedback vertex set number~$1$
(and thus treewidth~$2$).
As a matter of fact, the proof shows NP-completeness on graphs of vertex deletion distance~$2$ to path forest (and thus pathwidth~$3$).
Regarding the double-exponential dependence parameterized by the number of rounds $t$, 
as discussed in the addendum of~\cite{arxiv/Tale24}, a result of Papadimitriou and Yannakakis~\cite{jacm/PapadimitriouY82} 
implies that {\TB} parameterized by $t$ does not admit an algorithm with parameter dependence better than double-exponential (under the ETH). Therefore, the bound obtained by the observation $n\le 2^t$ applied to the $3^nn^{\bO(1)}$-time algorithm of~\cite{tcs/FominFG24} is essentially optimal.



In a very recent work,
Aminian, Kamali, Seyed-Javadi, and Sumedha~\cite{arxiv/AminianKSS25} studied {\TB} on cactus graphs and on graphs of bounded pathwidth.
They showed that the problem is NP-complete on cactus graphs of pathwidth~$2$,
while they presented a factor-$2$ approximation algorithm for cactus graphs.
Furthermore, they established a connection between the pathwidth $\pw$ of the input graph $G$ and $b(G,s)$,
by proving that for all $s \in V(G)$ it holds that $b(G,s) = \Omega (n^{4^{-(\pw+1)}})$.
A consequence of this result is that the polynomial-time approximation algorithm of Elkin and Kortsarz~\cite{jcss/ElkinK06}
achieves an approximation ratio of $\bO(4^{\pw})$ for graphs of pathwidth~$\pw$,
improving over the factor $\bO (\frac{\log n}{\log{\log n}})$ achieved in general graphs.

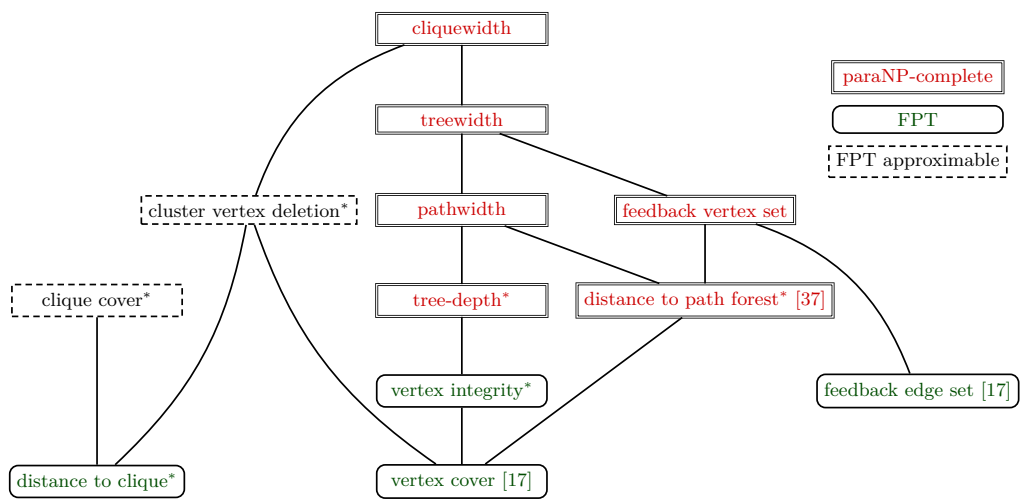
\begin{figure}[ht]
\centering
\scalebox{0.8}{ 
\begin{tikzpicture}[
  every node/.style={draw, rectangle, rounded corners, align=center, minimum width=2.8cm, font=\small},
  paraNP/.style={text=red!80!black,draw, rectangle,  double, sharp corners},
  XP/.style={text=blue, draw, rectangle, sharp corners},
  FPTAS/.style={text=orange,dash pattern=on 3pt off 2pt},
  FPT/.style={text=green!30!black,thick},
  Open/.style={text=black,dash pattern=on 3pt off 2pt,sharp corners,thick},
]

\node[Open] (ccn) at (-6.5,-0.5) {clique cover$^*$};
\node[Open] (cvd) at (-4,1) {cluster vertex deletion$^*$};
\node[FPT] (dtc) at (-6.5,-3.5) {distance to clique$^*$};


\node[paraNP] (cw) at (-0.5,4) {cliquewidth};
\node[paraNP] (tw) at (-0.5,2.5) {treewidth};
\node[paraNP] (pw) at (-0.5,1) {pathwidth};
\node[paraNP] (td) at (-0.5,-0.5) {tree-depth$^*$};
\node[FPT] (vi) at (-0.5,-2) {vertex integrity$^*$};
\node[FPT] (vc) at (-0.5,-3.5) {vertex cover~\cite{tcs/FominFG24}};

\node[paraNP] (fvs) at (3.5,1) {feedback vertex set};
\node[paraNP] (dtp) at (3.5,-0.5) {distance to path forest$^*$ \cite{arxiv/Tale24}};
\node[FPT] (fes) at (7,-2) {feedback edge set~\cite{tcs/FominFG24}};


\draw [thick] (cw) -- (tw);
\draw [thick] (tw) -- (pw);
\draw [thick] (pw) -- (td);
\draw [thick] (tw) -- (fvs);
\draw [thick] (td) -- (vi);
\draw [thick] (vi) -- (vc);
\draw [thick] (ccn) -- (dtc);

\draw [thick] (cw) to[bend right=25] (cvd);
\draw [thick] (cvd) to[bend right=18] (vc);
\draw [thick] (cvd) to[bend left=18] (dtc);
\draw [thick] (fvs) -- (dtp);
\draw [thick] (pw) -- (dtp);
\draw [thick] (fvs) to[bend left=25] (fes);
\draw [thick] (dtp) -- (vc);

\node[paraNP] at (7,3.2) {\textcolor{red!80!black}{paraNP-complete}};
\node[FPT] at (7,2.5) {\textcolor{green!30!black}{FPT}};
\node[Open] at (7,1.8) {FPT approximable};

\end{tikzpicture}
}
\caption{Parameterized complexity of {\TB} with respect to structural graph parameters.
The connection between two parameters represents that the upper parameter $p$ is bounded by some
computable function $f(\cdot)$ of the lower parameter $q$, i.e., $p \le f (q)$.
The parameters marked with $*$ are studied in this paper.
The double, rounded, and dotted rectangles indicate paraNP-complete, fixed-parameter tractable, and parameterized approximable cases, respectively.
The exact parameterized complexity of the parameterizations in the last case remains unsettled.
}

\label{fig:graph-parameters}
\end{figure}

\subparagraph{Our contribution.}
Our aim in this paper is to thoroughly study {\TB} under the perspective of parameterized complexity;
for an overview of our results and the relationship between the mentioned parameters we refer to \cref{fig:graph-parameters}. 
First we show in \cref{sec:np_hardness} that the problem remains NP-complete even for very restricted families of graphs,
namely (i) cactus graphs of vertex deletion distance~$1$ to path forest (\cref{thm:hardness:cactus}),
and (ii) graphs of constant tree-depth (\cref{thm:hardness:treedepth}).
The reduction employed in \cref{thm:hardness:cactus} follows along the lines of the one presented by Tale~\cite{arxiv/Tale24},
starting from a variation of \textsc{$3$-Partition}.
However, by selecting a more appropriate variant of \textsc{$3$-Partition} we are able to significantly simplify both the reduction itself and the constructed graph,
thus improving over the initial result by Tale~\cite{arxiv/Tale24} and the subsequent one by Aminian et al.~\cite{arxiv/AminianKSS25} (in the sense that graphs with vertex deletion distance~$1$ from a path forest, as given in our reduction, are a strict subclass of the snowflake graphs given in \cite{arxiv/AminianKSS25}).
More importantly, by modifying said reduction, we further prove that the problem remains NP-hard for graphs of \emph{constant tree-depth}. 
Note that tree-depth is a parameter not covered by the previous reductions, which heavily relied on constructions of long paths (which inherently have large tree-depth).

Our negative results extend the intractability of the problem from bounded pathwidth to bounded tree-depth instances. It is therefore natural to ask which (more restrictive) parameterizations \emph{do} render the problem tractable.
We present several positive results in this direction in \cref{sec:fpt-alg}.
In particular, in \cref{thm:fpt:vi} we show that the problem is FPT parameterized by vertex integrity,
which nicely complements the paraNP-hardness in case of parameterization by tree-depth.
To obtain this result,
we generalize some of the ideas employed by Fomin et al.~\cite{tcs/FominFG24} to show fixed-parameter tractability by vertex cover number.
Next, in \cref{thm:fpt:dtc} we consider graphs which are $k$ vertex deletions away from being a clique 
and we develop a $2^{\bO(k^2)} n^{\bO(1)}$-time algorithm.
Moving on, we consider the parameterization by both $t$ and the treewidth $\tw$ of the input graph,
and in \cref{thm:fpt:twt} we develop a single-exponential $2^{\bO(t \cdot \tw)} n^{\bO(1)}$-time algorithm based on dynamic programming (contrast this with the fact that the problem is intractable when parameterized solely by treewidth and requires a double-exponential dependence when parameterized solely by $t$).
As our last algorithmic results, in \cref{sec:para-approx} we develop FPT approximation algorithms when parameterized
by clique cover number and by cluster vertex deletion number. With these results we clearly trace the limits of tractability for sparse parameters (tree-depth, vertex integrity) but also improve our understanding of some dense graph parameters.

The $2^{\bO(t \cdot \tw)} n^{\bO(1)}$ algorithm is of particular note because
it identifies a small island of tractability for \TB, namely, it implies that
\TB\ is polynomial-time solvable for graphs of bounded treewidth (XP) whenever
we are interested in a \emph{fast} broadcast schedule ($t=\bO(\log n)$). This
is interesting, because in general the problem seems quite intractable for most
parameters and because all hardness reductions we mentioned construct instances
which require a large (polynomial in $n$) number of rounds. This further
motivates the question of which graph families may admit a schedule with
\emph{few} rounds, which, as we mention below, has been extensively studied and
also has algorithmic implications in terms of approximation.  Closing the
paper, we improve our understanding of this (structural) question by more
carefully studying the relationship between the minimum number of necessary
rounds and the pathwidth and tree-depth of a graph.  In particular, in
\cref{sec:structural_lb} we show as a main result
(\cref{corollary:structural_lb:pw}) that $b(G,s) = \Omega (n^{(2\pw+2)^{-1}})$.
This bound exponentially improves over the corresponding bound of Aminian et
al.~\cite{arxiv/AminianKSS25} (who showed that $b(G,s) = \Omega
(n^{4^{-(\pw+1)}})$). As an immediate application
(\cref{corollary:structural_lb:apx_ratio}), by applying an algorithm of Elkin
and Kortsarz~\cite{jcss/ElkinK06}, we obtain a polynomial-time algorithm with
approximation ratio $\bO(\pw)$ (compared to the ratio of $\bO(4^\pw)$ given in
\cite{arxiv/AminianKSS25}).

\subparagraph{Related work.}
There has been an extensive line of research with regards to broadcasting under the telephone model in graphs.
The computational complexity of {\TB} has been studied for multiple classes of graphs,
with the problem being polynomial-time solvable for trees~\cite{siamcomp/SlaterCH81},
a subclass of cactus graphs~\cite{jco/CevnikZ17},
fully connected trees~\cite{join/GholamiHM23}, 
generalized windmill graphs (which is equivalent to graphs of twin-cover number~$1$)~\cite{algorithms/AmbashankarH15},
and unicyclic graphs~\cite{jco/HarutyunyanM08},
with the latter result having been generalized to the previously mentioned FPT
algorithm parameterized by feedback edge set number by Fomin et
al.~\cite{tcs/FominFG24}.  On the other hand, Jansen and
M\"{u}ller~\cite{tcs/JansenM95} showed that {\TB} remains NP-complete on
several restricted graph classes such as chordal and grid graphs.  We also
refer to some relevant surveys in~\cite{dam/FraigniaudL94,sp/Hromkovic05}.  In
another line of research, Grigni and Peleg~\cite{siamdm/GrigniP91} have
provided bounds on the minimum number of edges and the minimum maxdegree for
graphs $G$ with the property that $b(G,s) = \lceil \log n \rceil$ for all $s
\in V(G)$.

The problem is also well-studied in terms of its polynomial-time approximability.
Ravi~\cite{focs/Ravi94} presented a $\bO(\frac{\log^2n}{\log{\log n}})$-factor approximation algorithm,
while Kortsarz and Peleg~\cite{siamdm/KortsarzP95} presented (among other results) an algorithm that computes a broadcast protocol with
at most $2b(G,s) + \bO(\sqrt{n})$ rounds.
Subsequently, Bar-Noy, Guha, Naor, and Schieber~\cite{siamcomp/Bar-NoyGNS00} improved the approximation ratio to $\bO(\log n)$,
while Elkin and Kortsarz~\cite{siamcomp/ElkinK05} provided a \emph{combinatorial} approximation algorithm of the same approximation ratio,
as opposed to the algorithms of~\cite{siamcomp/Bar-NoyGNS00,focs/Ravi94} that involve solving LPs,
that additionally generalizes to directed graphs.
The current best approximation ratio of $\bO(\frac{\log n}{\log{\log n}})$ is due to Elkin and Kortsarz~\cite{jcss/ElkinK06};
as a matter of fact, the ratio of their algorithm is $\bO(\frac{\log n}{\log{b(G,s)}})$,
thus whenever $b(G,s) = \Omega (n^{\delta})$ for some constant $\delta > 0$,
they obtain a \emph{constant} approximation ratio $\bO(1 / \delta)$.
Better approximation ratios are known for specific graph classes~\cite{caldam/BhabakH15,join/BhabakH19,ciac/HarutyunyanH23,siamdm/KortsarzP95},
while there are also some inapproximability results~\cite{siamcomp/ElkinK05,approx/Schindelhauer00}.

There are several related problems in the literature (see e.g.~\cite{caldam/XuL24} or the survey~\cite{networks/HedetniemiHL88}),
and other models of communication have been considered, such as the \emph{open-line} and \emph{open-path} model~\cite{networks/Farley80,siamdm/KortsarzP95}.
An important generalization of {\TB} that has been studied is the \textsc{Telephone Multicast} problem,
where the objective is to minimize the number of rounds needed to inform a specific \emph{subset} of vertices of the input graph.
Most of the previously mentioned approximation algorithms (\cite{siamcomp/Bar-NoyGNS00,siamcomp/ElkinK05,jcss/ElkinK06,focs/Ravi94})
generalize to this problem resulting in improved approximation ratios,
while directed versions (and variations) of it have also been studied~\cite{algorithmica/ElkinK06,approx/HathcockK024}.

\section{Preliminaries}\label{sec:preliminaries}
Throughout the paper we use standard graph notations~\cite{Diestel17},
and we assume familiarity with the basic notions of parameterized complexity~\cite{books/CyganFKLMPPS15}.
For $x, y \in \Z$, let $[x, y] = \setdef{z \in \Z}{x \leq z \leq y}$, while $[x] = [1,x]$.

All graphs considered are undirected without loops.
Given a graph $G=(V,E)$ and a subset of its vertices $S \subseteq V$,
$G[S]$ denotes the subgraph induced by $S$ while $G - S$ denotes $G[V \setminus S]$.
We denote the set of connected components of $G$ by $\cc(G)$.
For two vertices $u,v \in V$, the distance between $u$ and $v$ in $G$, denoted by $d(u,v)$, is the length of a $u$-$v$ shortest path.
For $C \in \cc(G)$, $d_{C} (u,v)$ denotes the distance between $u$ and $v$ in the connected component $C$.
The diameter of a connected graph $G$, denoted by $\diam(G)$,
is the maximum among the lengths of its shortest paths, i.e., $\max_{u,v \in V} d(u,v)$.

For a connected graph $G = (V,E)$ and a source $s \in V(G)$,
$b(G,s)$ denotes the minimum number of rounds for broadcasting a message from $s$ to all the other vertices.

A \emph{broadcast protocol}, which represents how a message is broadcast from
the originator $s$, is a spanning tree $T$ rooted at $s$ together with a
time-stamp function $\tau \colon E(T)\to [t]$, informally indicating when an
edge is used to transmit a message. More formally, we require that $\tau$ must
be a proper coloring of the edges of $T$ with the additional property that for
each $v\in V\setminus \{s\}$ the edge connecting $v$ to its parent must have
minimum value assigned by $\tau$ over all edges incident on $v$. We say that
$u$ forwards the message to $v$ at time $t_{uv}$ if $u$ is the parent of $v$ in
the tree and $\tau(uv)=t_{uv}$.

Finally, we give a formal definition of {\TB} as follows.

\problemdef{\TB}
{Connected graph $G = (V,E)$, $s \in V$, and $t \in \mathbb{Z}^+$.}
{Determine whether $b(G,s) \leq t$.}

\section{NP-hardness Results}\label{sec:np_hardness}

In this section we prove that {\TB} remains NP-complete even on very restricted families of graphs,
namely on (i) cactus graphs that are distance-$1$ to path forest (\cref{thm:hardness:cactus})
and (ii) graphs of constant tree-depth (\cref{thm:hardness:treedepth}).
Our work builds upon previous work by Tale~\cite{arxiv/Tale24},
who showed that the problem is NP-complete on graphs of distance-$2$ to path forest.
We note that \cref{thm:hardness:cactus} improves and significantly simplifies both the initial result by Tale~\cite{arxiv/Tale24} as well as
a recent result by Aminian et al.~\cite{arxiv/AminianKSS25} who,
via a long series of reductions, proved that {\TB} remains NP-complete on a special type of cactus graphs coined \emph{snowflake graphs}.%
\footnote{Roughly speaking, snowflake graphs are those that are distance-1 to caterpillar forest, so they are a strict superclass of graphs which are distance-$1$ from path forests.}

Since {\TB} clearly belongs to NP, it suffices to show the NP-hardness for each case.
The starting point of both of our reductions is a restricted version of \NMTS.
Given three multisets $A,B,C$ each containing $n$ positive integers such that $\sum_{a \in A} a + \sum_{b \in B} b = \sum_{c \in C} c$,
{\NMTS} asks to determine whether it is possible to partition the $3n$ given numbers into $n$ triples $S_1, \ldots, S_n$
such that each triple contains one $c_i \in C$, one $a_j \in A$, and one $b_k \in B$,
with $c_i = a_j + b_k$.
{\NMTS} is known to be strongly NP-complete~\cite{fm/GareyJ79},
even when all input numbers are distinct~\cite[Appendix]{orl/HulettWW08},
that is, when $A \cup B \cup C$ is a set of cardinality $3n$.
Starting from such an instance,
by first doubling all elements of $A \cup B \cup C$ and then adding $1$ to those of $B \cup C$,
one can show that the following restricted version of the problem is NP-complete as well.

\problemdef{\RNMTS}
{Disjoint sets $A, B, C$, each containing $n$ positive integers in unary,
such that $\sum_{a \in A} a + \sum_{b \in B} b = \sum_{c \in C} c$,
and the elements of $A$ are even, while those of $B \cup C$ are odd.}
{Determine whether it is possible to partition the $3n$ given numbers into $n$ triples $S_1, \ldots, S_n$,
such that each triple contains one $c_i \in C$, one $a_j \in A$, and one $b_k \in B$,
with $c_i = a_j + b_k$.}

The ideas behind both reductions of \cref{thm:hardness:cactus,thm:hardness:treedepth} are quite similar.
We start by introducing the source vertex $s$, and setting $t$ to be the maximum target sum.
Then, for every target sum, we introduce a gadget on roughly that many vertices and connect both of its endpoints to $s$.
On a high level, if the target sum $c_i$ is due to $a_j + b_k$,
at rounds $t - a_j$ and $t - b_k$ the source $s$ is supposed to broadcast towards the said gadget;
notice that it is therefore crucial that $A \cap B = \emptyset$.
To restrict the broadcasting of $s$ during the rest of the rounds,
namely on round $h \in [t] \setminus \setdef{t - x}{x \in A \cup B}$,
we attach either a long path or a large star to $s$ that essentially forces it to broadcast towards it at round $h$.
Implementing the previous with the gadgets being long paths is sufficient for \cref{thm:hardness:cactus} to go through.
For \cref{thm:hardness:treedepth} however, we need to proceed in two steps and first reduce to the natural generalization of {\TB}
where there are multiple source vertices before eventually reducing to \TB.

\subsection{Cactus Graphs}
\label{ssec:cactus}

We start with the hardness result for cactus graphs that are distance-$1$ to path forest.

\begin{theorem}\label{thm:hardness:cactus}
    {\TB} is NP-complete for cactus graphs that are distance-$1$ to path forest.
\end{theorem}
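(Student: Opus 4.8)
The plan is to reduce from \RNMTS. Given an instance with sets $A,B,C$ (elements of $A$ even, of $B\cup C$ odd, and $\sum_{a\in A}a+\sum_{b\in B}b=\sum_{c\in C}c$), I would build a connected graph $G$ with a single source $s$ and set the deadline to $t=\max_{c\in C}c$. For each $c_i\in C$ I attach a \emph{cycle gadget}: an internal path $p^i_1,\dots,p^i_{m_i}$ with $m_i=c_i+2$ whose two ends $p^i_1,p^i_{m_i}$ are both joined to $s$, so that the gadget together with $s$ forms a cycle through $s$. To occupy the rounds that should \emph{not} be spent feeding gadgets, for each round $h$ in $H:=[t]\setminus\setdef{t-x}{x\in A\cup B}$ I attach to $s$ a \emph{filler path} $P_h$ of length $t-h+1$. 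Since all cycles meet only at $s$ and no filler lies on a cycle, $G$ is a cactus; deleting $s$ turns every cycle gadget and every filler into a path, so $G$ is distance-$1$ from a path forest. As \TB\ is in NP, it then remains to prove the two directions of correctness.

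For completeness, suppose \RNMTS\ has a solution pairing each $c_i$ with $a_{j(i)}\in A$ and $b_{k(i)}\in B$, where $c_i=a_{j(i)}+b_{k(i)}$. I let $s$ transmit into gadget $i$ at rounds $t-a_{j(i)}$ and $t-b_{k(i)}$, one transmission to each endpoint, and into each filler $P_h$ at round $h$. Because $A\cap B=\emptyset$ and all numbers are distinct, the $2n$ gadget rounds $\setdef{t-a}{a\in A}\cup\setdef{t-b}{b\in B}$ are pairwise distinct and disjoint from $H$, so $s$ transmits exactly once per round. In gadget $i$ the wave entering at round $t-a_{j(i)}$ informs $a_{j(i)}+1$ vertices by round $t$, the wave from the other endpoint informs $b_{k(i)}+1$ vertices, and since $(a_{j(i)}+1)+(b_{k(i)}+1)=c_i+2=m_i$ the two waves exactly cover the gadget with no gap or overlap. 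Each filler $P_h$ is informed to its far end by round $h+(t-h+1)-1=t$. Hence $b(G,s)\le t$.

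For soundness I would argue through a global \emph{shift} budget, where the shift of a transmission of $s$ at round $r$ is $t-r$. Every filler must be entered by its deadline $h$, contributing shift at least $t-h$, and $\sum_{h\in H}(t-h)=\sum_{v=0}^{t-1}v-\sum_{x\in A\cup B}x=\frac{t(t-1)}{2}-\sum_{c\in C}c$. A gadget can only be reached through its (at most two) edges to $s$: if it is fed by a single transmission its shift must be at least $c_i+1$, whereas if it is fed by two transmissions the two shifts must sum to at least $c_i$. Since the total shift over all transmissions of $s$ is at most $\sum_{r=1}^{t}(t-r)=\frac{t(t-1)}{2}$, adding the filler bound to the gadget bounds shows that the total required shift is at least $\frac{t(t-1)}{2}$ plus the number of singly-fed gadgets; comparing with the upper bound forces that number to be zero and every inequality to be tight. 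Consequently each gadget receives exactly two transmissions whose shifts sum to exactly $c_i$, the $2n$ gadget shifts are precisely the elements of $A\cup B$, and each $c_i$ is written as $\alpha_i+\beta_i$ with $\alpha_i,\beta_i\in A\cup B$. Finally, since every $c_i$ is odd while $A$ is even and $B$ is odd, in each pair exactly one shift lies in $A$ and one in $B$, which is exactly a solution of \RNMTS.

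The hard part is precisely the possibility I must rule out in the soundness direction: a short gadget could in principle be covered by a \emph{single} wave started very early rather than by two waves, which would decouple its shift from the intended value $c_i$ and break the correspondence with the numerical instance. The crux of the argument is that the filler paths consume almost the entire shift budget $\frac{t(t-1)}{2}$, so no transmission can afford the extra shift that a single-wave covering would demand; this is exactly what pins every gadget down to two transmissions and the gadget shifts down to $A\cup B$, making the counting argument go through.
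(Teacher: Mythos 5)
Your construction is identical to the paper's (the same gadget for each $c_i$: a path on $c_i+2$ vertices with both endpoints joined to $s$, and the same filler paths on $t-h+1$ vertices for each round $h\in[t]\setminus\{t-x : x\in A\cup B\}$), and your completeness direction matches the paper's verbatim. For soundness the paper counts vertices globally, showing $|V(G)|=1+t(t+1)/2$ equals the maximum number of vertices informable in $t$ rounds because every non-source vertex has degree at most $2$, while you run the equivalent dual accounting over the shifts $t-r$ of $s$'s transmissions; both arguments force the same tightness everywhere (each filler entered exactly at round $h$, each gadget fed exactly twice with shifts summing to $c_i$) and then close with the same parity argument, so your proof is correct and essentially the same.
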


\begin{proof}
    Let $\mathcal{I} = (A,B,C)$ be an instance of \RNMTS,
    where $A = \setdef{a_i}{i \in [n]}$, $B = \setdef{b_i}{i \in [n]}$, and $C = \setdef{c_i}{i \in [n]}$.
    We will construct an equivalent instance $\mathcal{J} = (G,s,t)$ of \TB,
    where $t = \max_{c \in C} c$.
    Note that $t > \max_{x \in A \cup B} x$.

    We construct the graph $G$ as follows:
    \begin{itemize}
        \item Introduce a vertex $s$;
        \item For every $i \in [n]$, introduce a path $\mathcal{P}_{i}$ on $c_i + 2$ vertices with endpoints
        $u_i$ and $v_i$, and add edges $\braces{s, u_i}$ and $\braces{s, v_i}$;
        \item For every $h \in [t] \setminus \setdef{t - x}{x \in A \cup B}$,
        introduce a path $\mathcal{Q}_{h}$ on $t - h + 1$ vertices,
        and add an edge from one of its endpoints to $s$.
        Name the other endpoint $q_{h}$ (for $h = t$,
        we name the single vertex of $\mathcal{Q}_t$ as $q_t$ instead).
    \end{itemize}
    This concludes the construction of $G$.
    Notice that $G$ is a cactus graph, while $G-s$ is a collection of paths.
    Before proving the equivalence of the two instances, we first compute the size of graph $G$.

    \begin{claim}\label{claim:hardness_path_forest:total_size}
        It holds that $|V(G)| = 1 + \frac{t(t+1)}{2}$.
    \end{claim}

    \begin{claimproof}
        Notice that $G$ is comprised of vertex $s$, as well as the vertices present in $\mathcal{P}_i$ and $\mathcal{Q}_{h}$, for $i \in [n]$ and $h \in [t] \setminus \setdef{t - x}{x \in A \cup B}$.
        Consequently,
        \begin{align*}
            |V(G)| &= 1 + \sum_{i=1}^n (c_i+2) + \sum_{h \in [t] \setminus \setdef{t - x}{x \in A \cup B}} (t - h+1)\\
            &= 1 + 2n + \sum_{i=1}^n c_i + (t-2n) + \sum_{h \in [t] \setminus \setdef{t - x}{x \in A \cup B}} (t - h)\\
            &= 1 + t + \sum_{i=1}^n c_i + \sum_{h \in [t]} (t - h) - \sum_{h \in \setdef{t - x}{x \in A \cup B}} (t - h)\\
            &= 1 + \sum_{i=1}^n c_i + \sum_{h \in [t]} h - \sum_{x \in A \cup B} x\\
            &= 1 + \frac{t(t+1)}{2},
        \end{align*}
        and the statement follows.
    \end{claimproof}

    \begin{claim}\label{claim:hardness_path_forest:rnmts->tb}
        If $\mathcal{I}$ is a yes-instance of {\RNMTS},
        then $\mathcal{J}$ is a yes-instance of {\TB}.
    \end{claim}

    \begin{claimproof}
        Let $(S_1, \ldots, S_n)$ be a partition of $A \cup B \cup C$ that certifies that $\mathcal{I}$ is a yes-instance.
        Rename appropriately the elements of $A \cup B \cup C$ so that $S_i = \{a_i,b_i,c_i\}$ for all $i \in [n]$.        
        Let $h \in [t]$ denote the current round, and consider the following broadcast protocol.
        \begin{itemize}
            \item If $h \in [t] \setminus \setdef{t - x}{x \in A \cup B}$,
            then set $s$ to forward the message along the path $\mathcal{Q}_{h}$.
            In the subsequent rounds, the path from $s$ to $q_{h}$ keeps propelling the message towards
            $q_{h}$.
            
            \item Otherwise, it holds that $h = t - x$ for some $x \in A \cup B$.
            If $x = a_i$, then set $s$ to forward the message to $u_i$, else if $x=b_i$,
            then towards $v_i$ instead.
            In the subsequent rounds, the message keeps getting propelled along the path $\mathcal{P}_i$
            towards its other endpoint.
        \end{itemize}
        This completes the description of the protocol.

        We now prove that every vertex of $G$ gets the message by the end of round $t$.
        Regarding the vertices belonging to $\mathcal{Q}_{h}$,
        notice that since $s$ forwards the message at round $h$,
        and $\mathcal{Q}_{h}$ contains $t - h + 1$ vertices,
        all vertices of $\mathcal{Q}_{h}$ receive the message by the end of round $t$;
        as a matter of fact, $q_{h}$ receives the message exactly on the last round $t$.

        It remains to argue for the vertices of $\mathcal{P}_{i}$.
        By the protocol, if $S_i = \braces{a_i, b_i, c_i}$, one endpoint of $\mathcal{P}_i$ gets the message at round
        $t - a_i$, while the other at round $t - b_i$.
        These two vertices forward the message towards each other to convey it to all the vertices in the path 
        connecting them.
        In that case, the number of vertices of $\mathcal{P}_i$ that receive the message by round $t$ is
        at least $t - (t - a_i) + 1 + t - (t - b_i) + 1 = a_i + b_i + 2 = c_i + 2$,
        thus all of its vertices receive the message by the end of round~$t$.
    \end{claimproof}

    \begin{claim}\label{claim:hardness_path_forest:tb->rnmts}
        If $\mathcal{J}$ is a yes-instance of {\TB},
        then $\mathcal{I}$ is a yes-instance of {\RNMTS}.
    \end{claim}

    \begin{claimproof}
        Assume there exists a broadcast protocol such that every vertex of $G$ receives the message by
        the end of round $t$.
        We claim that the maximum number of vertices that have received the message after $h \in [t]$ rounds
        of transmission is $1 + \sum_{i=1}^{h} i = 1 + \frac{h(h+1)}{2}$.
        To see this, notice that every vertex apart from $s$ has degree at most $2$, thus it can broadcast
        the message at most once.%
        \footnote{We assume that no vertex broadcasts the message towards its already informed neighbors.}
        In that case, the number of vertices receiving the message at round~$h$ is at most those that received the message at round $h-1$ plus $1$ due to $s$.

        Since $|V(G)| = 1 + \frac{t(t+1)}{2}$ due to \cref{claim:hardness_path_forest:total_size},
        it follows that this is indeed the case,
        thus every vertex apart from $s$ and those that are informed in the last round broadcasts
        the message exactly once. 
        Consequently, every vertex of degree $1$ receives the message at the last round,
        thus $s$ broadcasts the message towards the path $\mathcal{Q}_{h}$ at round $h$,
        for all $h \in [t] \setminus \setdef{t - x}{x \in A \cup B}$.

        At round $h \in \setdef{t - x}{x \in A \cup B}$, $s$ broadcasts the message towards some path $\mathcal{P}_{i}$, and this broadcast results in a total of $x+1$ vertices of path $\mathcal{P}_i$ having received the message by round $t$.
        Consequently, it follows that for every path $\mathcal{P}_i$, $s$ broadcasts towards $u_i$ and $v_i$ at rounds
        $t - x^1_i$ and $t - x^2_i$, where
        $t - (t - x^1_i) + 1 + t - (t - x^2_i) + 1 = x^1_i + x^2_i + 2 = c_i + 2$.
        Since $c_i$ is odd, that is also the case for exactly one among $x^1_i$ and $x^2_i$,
        thus $\braces{x^1_i, x^2_i}$ contains an element from both $A$ and $B$.
        In that case, $(S_1, \ldots, S_n)$ certifies that $\mathcal{I}$ is a yes-instance,
        since it is a partition of $A \cup B \cup C$ where $S_i = \braces{c_i, x^1_i, x^2_i}$,
        each $S_i$ contains a single element out of $A$, $B$, and $C$, while $c_i = x^1_i + x^2_i$.
        This completes the proof.
    \end{claimproof}
    By \cref{claim:hardness_path_forest:rnmts->tb,claim:hardness_path_forest:tb->rnmts}, \cref{thm:hardness:cactus} follows.
\end{proof}

\subsection{Bounded Tree-depth}
\label{ssec:treedepth}

Now we show the hardness for graphs of tree-depth at most~$6$.
To this end, we introduce a generalization of {\TB} as an intermediate problem,
where instead of a unique vertex as the initial source, we have a set of vertices.
We call the variant {\TBMS} and formalize it as follows, where $b(G,S)$ is defined analogously to $b(G,s)$.

\problemdef{\TBMS}
{Connected graph $G = (V,E)$, $S \subseteq V$, and $t \in \mathbb{Z}^+$.}
{Determine whether $b(G,S) \leq t$.}

Our proof thus consists of the following two steps.
First, we reduce {\RNMTS} to {\TBMS}.
On a high level, the reduction is similar to the one of \cref{thm:hardness:cactus},
albeit with a few supplementary steps to bound the tree-depth.
The main difference is that we use star-like structures equipped with multiple sources instead of paths.
Next, we present a reduction from {\TBMS} to {\TB} that does not increase tree-depth too much;
as a matter of fact, if this reduction is applied to a graph output by the first step, 
then the tree-depth does not increase at all.
We again use star-like structures here.

We highlight that instead of independently presenting the two reductions in distinct theorems,
we choose to present them in one, as doing so allows us to present a tighter bound on the tree-depth of the final graph.

\begin{theoremrep}\label{thm:hardness:treedepth}
    {\TB} is NP-complete for graphs of tree-depth at most~$6$.
\end{theoremrep}

\begin{proof}
    We follow the two steps described above.
    
    \proofsubparagraph*{Reduction from {\RNMTS} to \TBMS.}
    
    Let $\mathcal{I} = (A,B,C)$ be an instance of \RNMTS,
    where $A = \setdef{a_i}{i \in [n]}$, $B = \setdef{b_i}{i \in [n]}$, and $C = \setdef{c_i}{i \in [n]}$.
    From $\mathcal{I}$, we construct an equivalent instance $\mathcal{J} = (G,S,t)$ of \TBMS.
    Set $t = \max_{c \in C} c$. Note that $t > \max_{x \in A \cup B} x$.
    We construct the graph $G$ as follows (see \cref{fig:reduction_treedepth_G,fig:reduction_treedepth_Pi}):
    \begin{itemize}
        \item introduce a vertex $s$;
        \item for $h \in [t] \setminus \setdef{t-x}{x \in A \cup B}$, introduce a star with center $v_h$ and $t-h$ leaves, and the edge $\{s,v_h\}$;
        \item for $i \in [n]$, introduce a gadget $\mathcal{P}_{i}$ composed of the following vertices and edges:
        \begin{itemize}
            \item two vertices $l_i$ and $r_i$ adjacent to $s$;
            \item for $j \in [c_{i}]$, vertices $l_{i,j}$, $s_{i,j}$, $r_{i,j}$ and
    	edges $\{l_i,l_{i,j}\}$, $\{l_{i,j},s_{i,j}\}$,  $\{s_{i,j}, r_{i,j}\}$, $\{r_{i,j}, r_i\}$;
            \item for $j \in [c_{i}]$ and $k \in [t-1]$, a star with a center $v_{i,j,k}$ and $t-k$ leaves, and the edge $\{s_{i,j},v_{i,j,k}\}$.
        \end{itemize}
    \end{itemize}
    We set $S = \{s\} \cup \setdef{s_{i,j}}{i \in [n], \, j \in [c_{i}]}$.
    Note that $|S| = 1 + \sum_{i=1}^{n} c_{i}$.
    Prior to proving the equivalence of instances $\mathcal{I}$ and $\mathcal{J}$ in \cref{lem:rnmts->tbms,lem:tbms->rnmts},
    we first show in \cref{lem:hardness_td1:total_size} that $|V(G)| = |S|(1 + t(t+1)/2)$,
    by a calculation similar to the one for \cref{claim:hardness_path_forest:total_size}.
    
    \begin{figure}[tbh]
        \centering
        \includegraphics{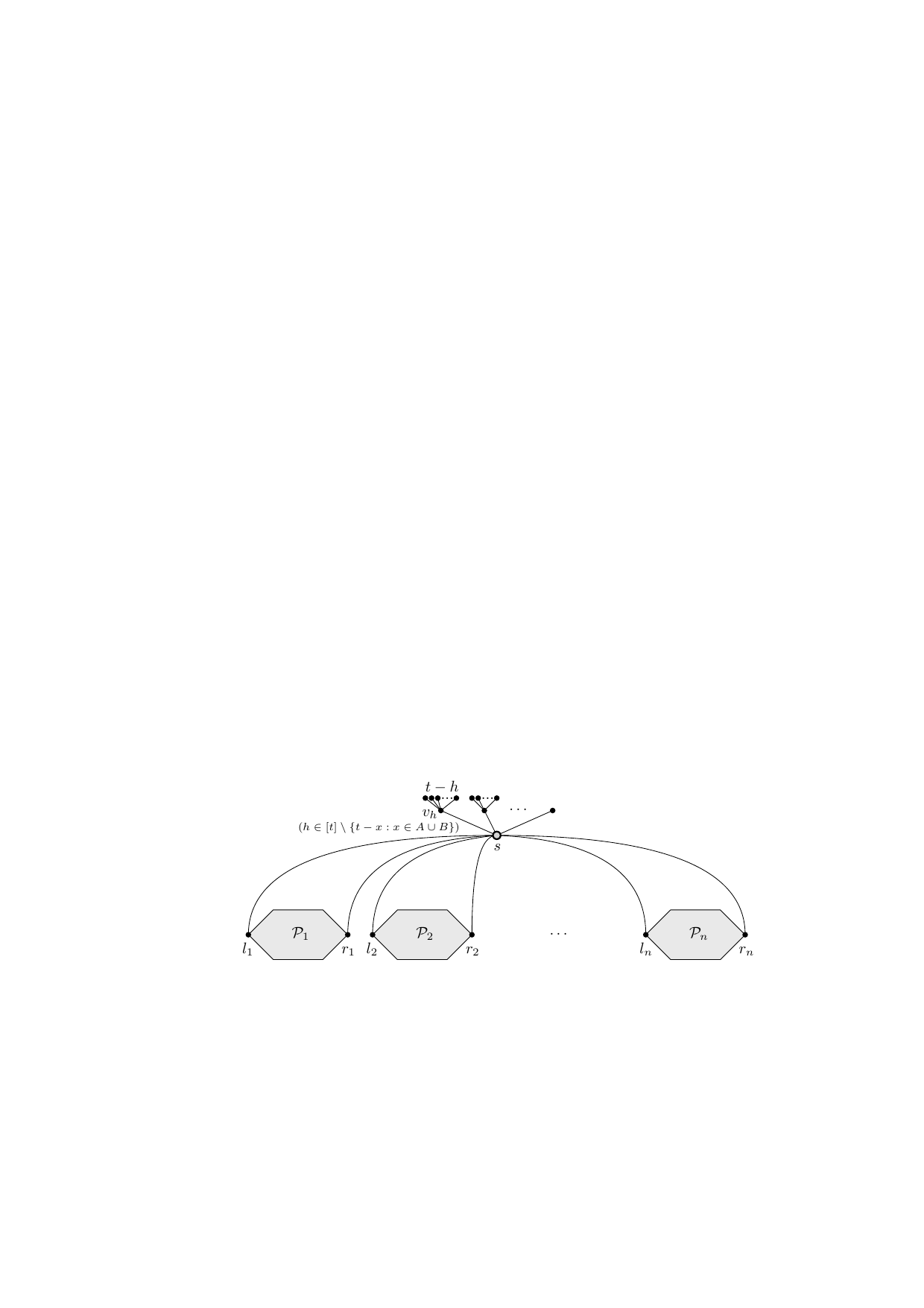}
        \caption{The graph $G$ in \cref{thm:hardness:treedepth}.}
        \label{fig:reduction_treedepth_G}
    \end{figure}
    \begin{figure}[tbh]
        \centering
        \includegraphics{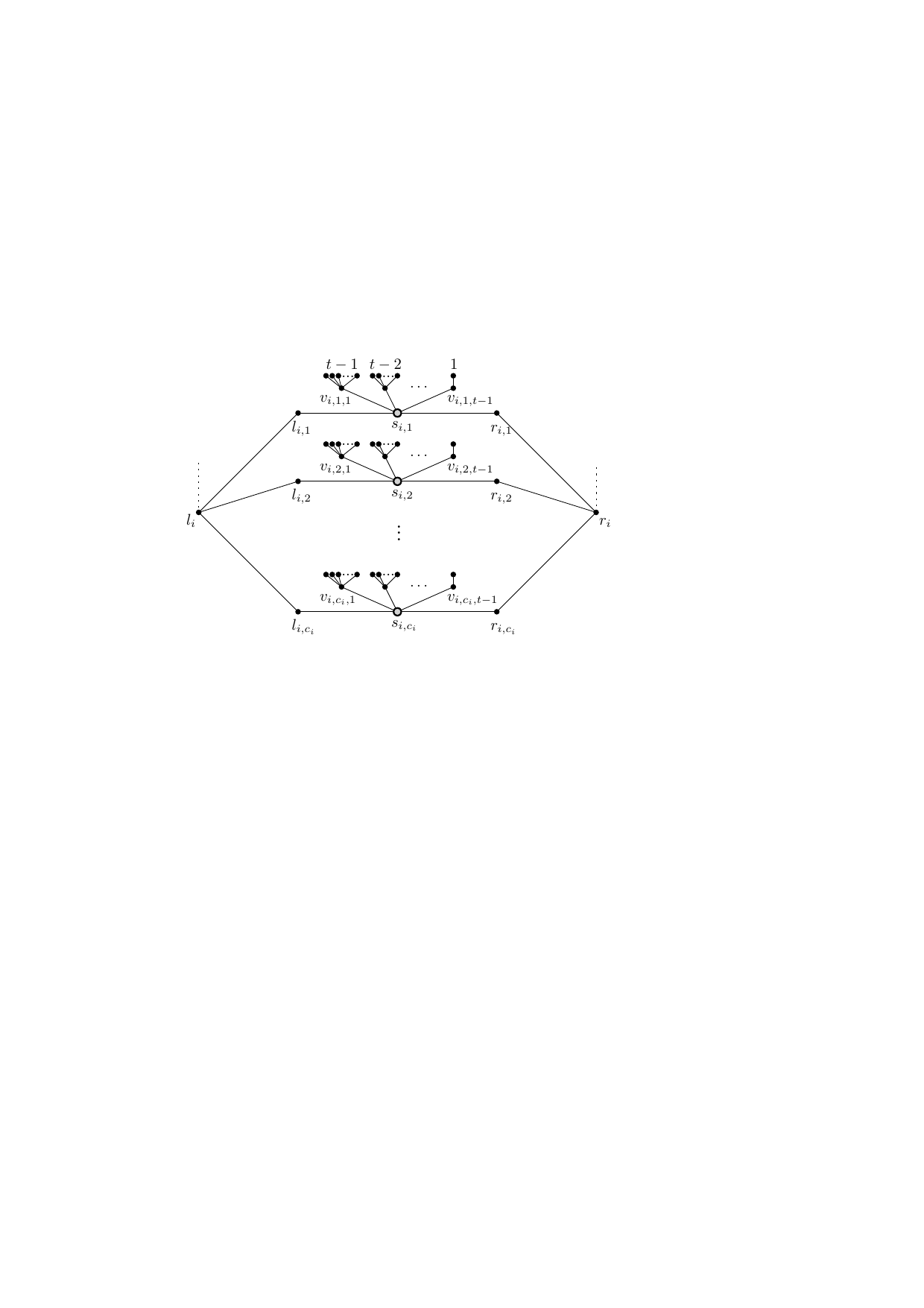}
        \caption{The gadget $\mathcal{P}_{i}$ in \cref{thm:hardness:treedepth}.}
        \label{fig:reduction_treedepth_Pi}
    \end{figure}

    \begin{lemma}\label{lem:hardness_td1:total_size}
        It holds that $|V(G)| = |S|(1 + t(t+1)/2)$.
    \end{lemma}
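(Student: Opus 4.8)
The plan is to count the vertices of $G$ directly, grouping them by the parts of the construction, and then to algebraically massage the total into the factored form $|S|(1 + t(t+1)/2)$. Writing $T = \sum_{i=1}^n c_i$ for brevity, I would first record the two facts that make the bookkeeping clean. Since $A$, $B$, $C$ are disjoint and each has $n$ elements, we have $|A \cup B| = 2n$; and since the instance of {\RNMTS} satisfies $\sum_{a \in A} a + \sum_{b \in B} b = \sum_{c \in C} c$, we get $\sum_{x \in A \cup B} x = \sum_{c \in C} c = T$. Note also that $|S| = 1 + T$, as already recorded immediately after the construction. This mirrors exactly the calculation already carried out in \cref{claim:hardness_path_forest:total_size}.

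Next I would split $V(G)$ into five groups and count each. The vertex $s$ contributes $1$. The stars hanging directly off $s$, indexed by $h \in [t] \setminus \{t - x : x \in A \cup B\}$, each contribute a center plus $t-h$ leaves, i.e.\ $1 + (t-h)$ vertices; summing over the $t - 2n$ surviving values of $h$ and using $t - h = x$ on the removed indices, this group totals $(t - 2n) + \sum_{h=1}^t (t-h) - \sum_{x \in A \cup B} x = (t - 2n) + \frac{t(t-1)}{2} - T$. Inside the gadgets, the vertices $l_i, r_i$ contribute $2n$ in total, and the triples $l_{i,j}, s_{i,j}, r_{i,j}$ contribute $3\sum_i c_i = 3T$. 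Finally, the nested stars centered at the $v_{i,j,k}$ contribute, for each of the $T$ pairs $(i,j)$, the inner sum $\sum_{k=1}^{t-1}\bigl(1 + (t-k)\bigr) = (t-1) + \frac{t(t-1)}{2} = \frac{t(t+1)}{2} - 1$ over $k$, so this group totals $T\bigl(\frac{t(t+1)}{2} - 1\bigr)$.

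Adding the five groups, the terms linear in $n$ cancel (the $-2n$ from the direct stars against the $+2n$ from the $l_i, r_i$), the $-T$ from the direct-star count combines with the $+3T$ from the triples, and collecting the remainder gives $1 + \frac{t(t+1)}{2} + T + T\cdot\frac{t(t+1)}{2}$, which factors exactly as $(1 + T)\bigl(1 + \frac{t(t+1)}{2}\bigr) = |S|\bigl(1 + t(t+1)/2\bigr)$, as claimed.

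I do not expect a genuine obstacle here, since the statement is a pure counting identity whose proof parallels \cref{claim:hardness_path_forest:total_size}. The only point requiring care is to invoke the balance condition $\sum_{x \in A \cup B} x = T$ at the right moment, so that the quadratic-in-$t$ terms and the linear-in-$n$ terms combine into the clean product rather than leaving stray residue; getting the index set $[t] \setminus \{t - x : x \in A \cup B\}$ and the off-by-one in the leaf counts (a star for index $h$ has $t-h$ leaves, a nested star for index $k$ has $t-k$ leaves) exactly right is the only place a routine slip could creep in.
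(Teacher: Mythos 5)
Your proof is correct and takes essentially the same route as the paper: a direct count of the vertices grouped by the parts of the construction, followed by the same algebraic simplification using the balance condition $\sum_{x \in A \cup B} x = \sum_{c \in C} c$ and the identity $|S| = 1 + \sum_i c_i$. The paper merely packages the per-gadget count as $2 + c_i\,(2 + t(t+1)/2)$ before summing, which is the same bookkeeping in a slightly different order.
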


    \begin{nestedproof}
        Notice that $G$ is comprised of vertex $s$, the stars attached to it,
        as well as the vertices present in $\mathcal{P}_i$, for $i \in [n]$.
        The vertices present in $\mathcal{P}_i$ are exactly
        \[
            2 + c_i \cdot \parens*{3 + \sum_{j=2}^t j} =
            2 + c_i \cdot \parens*{2 + \frac{t (t+1)}{2}},
        \]
        consequently it follows that
        \begin{align*}
            |V(G)| &= 1 + \sum_{i=1}^n \parens*{2 + c_i \cdot \parens*{2 + \frac{t (t+1)}{2}}} + \sum_{h \in [t] \setminus \setdef{t - x}{x \in A \cup B}} (t - h + 1)\\
            &= 1 + 2n
            + \parens*{2 + \frac{t (t+1)}{2}} \cdot \parens*{\sum_{i=1}^n c_i}
            + (t-2n) + \sum_{h \in [t] \setminus \setdef{t - x}{x \in A \cup B}} (t - h)\\
            &= 1 + t + \parens*{2 + \frac{t (t+1)}{2}} \cdot \parens*{\sum_{i=1}^n c_i} + \sum_{h \in [t]} (t - h) - \sum_{h \in \setdef{t - x}{x \in A \cup B}} (t - h)\\
            &= 1 + \parens*{2 + \frac{t (t+1)}{2}} \cdot \parens*{\sum_{i=1}^n c_i} + \sum_{h \in [t]} h -
            \sum_{x \in A \cup B} x\\
            &= \parens*{1 + \frac{t (t+1)}{2}} \cdot \parens*{\sum_{i=1}^n c_i} + 1 + \frac{t (t+1)}{2}\\
            &= |S| \cdot \parens*{1 + \frac{t(t+1)}{2}}. \qedhere
        \end{align*}
    \end{nestedproof}

    \begin{lemma}\label{lem:rnmts->tbms}
        If $\mathcal{I}$ is a yes-instance of \RNMTS, then $\mathcal{J}$ is a yes-instance of \TBMS.
    \end{lemma}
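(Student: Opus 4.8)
The plan is to take the partition certifying that $\mathcal{I}$ is a yes-instance of \RNMTS\ and turn it into an explicit broadcast protocol for $\mathcal{J}$ that finishes within $t$ rounds. Rename the elements so that the triples are $S_i = \{a_i, b_i, c_i\}$ with $c_i = a_i + b_i$, where $c_i$ is the target sum realized by the gadget $\mathcal{P}_i$. The guiding principle is a bandwidth-counting argument: in the schedule I construct, every source will forward to a fresh vertex in each of the $t$ rounds, with no round wasted. This is exactly what \cref{lem:hardness_td1:total_size} predicts, since $|V(G)| = |S|(1 + t(t+1)/2)$, and a single source that forwards at every round $h \in [t]$ into a depth-two star-like structure informs precisely $1 + \sum_{h=1}^{t}(t-h+1) = 1 + t(t+1)/2$ vertices. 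So I will arrange this optimal spreading to happen simultaneously at $s$ and at every $s_{i,j}$.

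First I would schedule the global source $s$. For each round $h \in [t] \setminus \{t-x : x \in A \cup B\}$, let $s$ forward the message to the star center $v_h$; since $v_h$ has $t-h$ leaves, it can inform one leaf per round over rounds $h+1, \dots, t$ and finish exactly at round $t$. The remaining rounds form precisely $\{t - a_i, t - b_i : i \in [n]\}$, a set of $2n$ distinct values since $A \cup B$ has size $2n$; on them let $s$ forward to the rail hub $l_i$ at round $t-a_i$ and to $r_i$ at round $t-b_i$. Each round is used once, so $s$'s schedule is conflict-free.

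Next I would handle the interior sources $s_{i,j}$. Such a source has the rung neighbors $l_{i,j}, r_{i,j}$ together with the $t-1$ private star centers $v_{i,j,1}, \dots, v_{i,j,t-1}$, i.e.\ $t+1$ neighbors but only $t$ rounds; crucially the star centers can be informed \emph{only} by $s_{i,j}$ itself. So I let $s_{i,j}$ forward to $v_{i,j,k}$ at round $k$ for $k \in [t-1]$ (each $v_{i,j,k}$ then informs its $t-k$ leaves by round $t$), leaving a single free slot at round $t$. The division of labor along the rungs mirrors the equation $c_i = a_i + b_i$: I split the $c_i$ rungs of $\mathcal{P}_i$ into a set of $a_i$ \emph{left} rungs and $b_i$ \emph{right} rungs. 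On a left rung the hub $l_i$ (informed at round $t-a_i$, hence with exactly $a_i$ forwarding rounds) delivers the message to $l_{i,j}$, while $s_{i,j}$ spends its round-$t$ slot on $r_{i,j}$; on a right rung the roles of the two rails and $s_{i,j}$ are swapped. Since $l_i$ has exactly $a_i$ forwarding rounds and there are $a_i$ left rungs, and symmetrically $b_i$ and $b_i$ on the right, every endpoint is covered exactly once, and this is precisely where the identity $c_i = a_i+b_i$ is consumed.

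The main thing to verify — and the only place something can break — is that this collection of local schedules is globally consistent: no vertex forwards to two neighbors in one round, and no vertex forwards to an already-informed neighbor. I expect this to be the crux rather than the timing, which is tight by construction. The potential clash sits on the rung endpoints, since $l_{i,j}$ could in principle be reached both from $l_i$ and from $s_{i,j}$; the left/right assignment rules this out, as $l_i$ serves only left-rung endpoints and $s_{i,j}$ serves only its rung's unique rail-uncovered endpoint. Once this is checked, a round-by-round accounting (mechanical, and matching the tight count of \cref{lem:hardness_td1:total_size}) confirms that every vertex is informed by round $t$, so $b(G,S) \le t$ and $\mathcal{J}$ is a yes-instance of \TBMS.
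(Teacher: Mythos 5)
Your construction is correct and matches the paper's proof essentially verbatim: the same schedule for $s$ (stars on the non-$\{t-x\}$ rounds, $l_i$ at round $t-a_i$, $r_i$ at round $t-b_i$), the same schedule for each $s_{i,j}$ (private star centers in rounds $1,\dots,t-1$, one rung endpoint at round $t$), and the same split of the $c_i$ rungs into $a_i$ served by $l_i$ and $b_i$ served by $r_i$, with $s_{i,j}$ covering the opposite endpoint. The paper simply makes the left/right assignment concrete by taking the left rungs to be $j\in[a_i]$ and the right rungs to be $j\in[a_i+1,c_i]$, which settles the consistency check you flag as the remaining step.
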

    
    \begin{nestedproof}
    Since $\mathcal{I}$ is a yes-instance of \RNMTS,
    we can rename the elements of $A$ and $B$ so that $c_i=a_i+b_i$ for every $i \in [n]$.
    We construct a broadcast protocol for $G$ as follows:
    \begin{itemize}
        \item for $h \in [t]$,
        \begin{itemize}
            \item if $h \in [t] \setminus \setdef{t-x}{x \in A \cup B}$, $s$ forwards the message to $v_h$ at round $h$;
            \item if $h=t-a_i$, $s$ forwards the message to $l_i$ at round $h$;
            \item if $h=t-b_i$, $s$ forwards the message to $r_i$ at round $h$;
        \end{itemize}
        \item for $h \in [t] \setminus \setdef{t-x}{x \in A \cup B}$ and $h'\in [h+1,t]$, $v_h$ forwards the message to one of its $t-h$ leaves at round $h'$;
        \item for $i \in [n]$ and $j \in [a_i]$, $l_i$ forwards the message to $l_{i, j}$ at round $t-a_i +j$;
        \item for $i \in [n]$ and $j \in [b_i]$, $r_i$ forwards the message to $r_{i, a_{i}+j}$ at round $t-b_i +j$;
        \item for $i \in [n]$, $j\in [c_i]$, and $k\in [t]$
        \begin{itemize}
            \item if $k\in [t]$, $s_{i,j}$ forwards the message to $v_{i,j,k}$ at round $k$;
            \item if $k=t$ and $j\in [a_{i}]$, $s_{i,j}$ forwards the message to $r_{i,j}$ at round $k$;
            \item if $k=t$ and $j\in [a_i+1,a_i+b_{i}]$, $s_{i,j}$ forwards the message to $l_{i,j}$ at round $k$.
        \end{itemize}
        \item for $i \in [n]$, $j\in [c_i]$, $k\in [t-1]$, and $h\in [k+1,t]$, $v_{i,j,k}$ forwards the message to one of its $t-k$ leaves at time $h$;
    \end{itemize}
    
    We can see that this is a broadcast protocol for $(G,S)$, showing that $b(G,S) \le t$: each center of a star with $k$ leaves is broadcasted to at time $t-k$, and so it has exactly the time to broadcast to each of its leaves; furthermore, because $A,B,C$ is a yes-instance of \RNMTS, all vertices $l_{i,j}$ and $r_{i,j}$ are broadcasted to at time $t$.
    \end{nestedproof}
    
    \begin{lemma}\label{lem:tbms->rnmts}
        If $\mathcal{J}$ is a yes-instance of \TBMS, then $\mathcal{I}$ is a yes-instance of \RNMTS.
    \end{lemma}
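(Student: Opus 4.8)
\begin{nestedproof}
The plan is to run a global counting argument in the spirit of \cref{claim:hardness_path_forest:tb->rnmts}, the essential new difficulty being that $G$ now contains high-degree vertices (the star centers and the hubs $l_i,r_i$), so the naive ``each non-source vertex broadcasts at most once'' bound no longer applies. Instead I would exploit the star gadgets as \emph{deadlines}: a star center with $m$ leaves informed at round $\tau$ can finish informing all its leaves by round $t$ only if $\tau\le t-m$, and such a center can be reached only through its unique non-leaf neighbor. Throughout, I work with the broadcast forest induced by a successful protocol, rooted at the sources in $S$, which partitions $V(G)$.

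First I would pin down the behaviour of each source $s_{i,j}$. Since $v_{i,j,k}$ has $t-k$ leaves and its only non-leaf neighbor is $s_{i,j}$, it must be informed by $s_{i,j}$ no later than round $k$. As the deadlines $1,2,\dots,t-1$ are consecutive and $s_{i,j}$ forwards at most once per round, an earliest-deadline-first argument forces $s_{i,j}$ to inform $v_{i,j,k}$ at \emph{exactly} round $k$ for every $k\in[t-1]$. Hence rounds $1,\dots,t-1$ of $s_{i,j}$ are fully committed, its only remaining move is at round $t$, and that move can reach at most one of $\{l_{i,j},r_{i,j}\}$ (which then forwards nothing, round $t$ being last). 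Consequently the subtree rooted at $s_{i,j}$ consists of $s_{i,j}$, its $t-1$ centers together with all their leaves, and at most one further vertex, i.e.\ at most $1+t(t+1)/2$ vertices.

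Next I would bound the subtree rooted at $s$ by the same quantity $1+t(t+1)/2$; this is the step I expect to be the main obstacle, because a priori $s$'s tree could snake arbitrarily deep into the ladders. The centers $v_h$ force $s$ to inform each $v_h$ by round $h$, so $s$ spends $t-2n$ of its rounds on centers and at most $2n$ on the hubs $l_i,r_i$. Moreover, since every $l_{i,j}$ (resp.\ $r_{i,j}$) reachable from $s_{i,j}$ is reached only at the final round $t$, no hub can be informed through a back-path, so all $2n$ hubs are informed directly by $s$ at the rounds left free by the center deadlines. A hub informed at round $\rho$ forwards to at most $t-\rho$ ladder-middles, so the number of ladder-middles reached from hubs is at most $\sum_{\rho\in R}(t-\rho)$, where $R$ is the set of hub-rounds. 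Rewriting this as $\sum_{r=1}^{t-1}\lvert R\cap[1,r]\rvert$ and using that the center deadlines force $\lvert R\cap[1,r]\rvert\le\lvert\setdef{x\in A\cup B}{x\ge t-r}\rvert$ for every $r$, a telescoping over $r$ together with the balance condition $\sum_{a\in A}a+\sum_{b\in B}b=\sum_{c\in C}c$ yields the bound $\sum_{c\in C}c$ (here $t>\max_{x\in A\cup B}x$ guarantees $t-x\ge 1$). Adding the $s$-vertex, the fully informed $v_h$-stars, and the $2n$ hubs then gives exactly $1+t(t+1)/2$.

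Finally, combining the two bounds with $\lvert V(G)\rvert=\lvert S\rvert(1+t(t+1)/2)$ from \cref{lem:hardness_td1:total_size}, and using that the subtrees partition $V(G)$, every inequality above must be tight. Tightness of the $s_{i,j}$-bound means each $s_{i,j}$ informs \emph{exactly} one of $l_{i,j},r_{i,j}$, so for each gadget $i$ the hubs together inform exactly the remaining $c_i$ ladder-middles. Tightness of the hub computation forces $R=\setdef{t-x}{x\in A\cup B}$ and forces a hub informed at round $t-x$ to reach exactly $x$ middles; in particular the $2n$ values read off the hubs are precisely the $2n$ elements of $A\cup B$, each used once. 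For gadget $i$, if $l_i$ and $r_i$ are informed at rounds $t-x$ and $t-y$, then $x+y=c_i$. Because $c_i$ is odd while the elements of $A$ are even and those of $B$ are odd, exactly one of $x,y$ lies in $A$ and the other in $B$; setting $S_i=\{x,y,c_i\}$ therefore yields a valid solution of \RNMTS, completing the proof.
\end{nestedproof}
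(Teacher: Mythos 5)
Your proof is correct and reaches the paper's conclusion by the same overall strategy — forcing each star center to be informed at exactly its deadline, observing that the hubs $l_i,r_i$ can only be informed by $s$ and that ladder vertices informed by sources or hubs never forward, exploiting the exact count $|V(G)|=|S|(1+t(t+1)/2)$ to make every estimate tight, and finishing with the parity argument — but the counting is bookkept differently. The paper first shows that any vertex informed by a non-source never forwards the message, deduces that the number of forwarding vertices grows by at most $|S|$ per round, and gets the bound $|S|(1+t(t+1)/2)$ from a single global round-by-round invariant; tightness then immediately forces each $v_h$ to be informed at exactly round $h$, hence the hub rounds lie in $\{t-x : x\in A\cup B\}$, and the identity $x_i^l+x_i^r=c_i$ is read off from a local count of the $2c_i$ ladder vertices of gadget $i$. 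You instead bound each source's broadcast subtree separately by $1+t(t+1)/2$; this is immediate for the $s_{i,j}$ but for the root $s$ requires your prefix-sum/telescoping estimate $\sum_{\rho\in R}(t-\rho)\le\sum_{r=1}^{t-1}|\{x\in A\cup B : x\ge t-r\}|=\sum_{x\in A\cup B}x=\sum_{c\in C}c$, which invokes the balance condition $\sum_{a\in A}a+\sum_{b\in B}b=\sum_{c\in C}c$ explicitly. Your version localizes the argument per source and makes the role of the balance condition visible, at the price of a heavier computation for the $s$-subtree; the paper's per-round invariant obtains the same tightness more cheaply but needs the separate structural lemma about non-source-informed vertices. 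I checked the telescoping bound and the tightness extraction (in particular that $R=\{t-x : x\in A\cup B\}$ follows from equality of all prefix counts and $|R|=2n$), and both steps are sound; the argument is complete.
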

        
    \begin{nestedproof}
        Throughout this proof, we fix a broadcast protocol of $(G,S)$ with at most $t$ rounds.
        
        For $i \in [n]$, $j \in [c_i]$, $k \in [t-1]$, 
        $v_{i,j,k}$ has $t-k$ leaves and thus $s_{i,j}$ has to forward the message to $v_{i,j,k}$ by round $k$.
        This implies that $s_{i,j}$ forwards the message to $v_{i,j,k}$ exactly at round~$k$ for $k \in [t-1]$.
        Thus, only at the final round $t$ can $s_{i,j}$ forward the message to either $l_{i,j}$ or~$r_{i,j}$.
        
        We can see that a vertex informed by a non-source vertex never forwards the message as follows.
        It is clear for the degree-$1$ vertices.
        The centers of the stars attached to the sources receive the message directly from the sources.
        Clearly, $l_{i,j}$ never forwards the message to the source~$s_{i,j}$.
        If $l_{i,j}$ forwards the message to $l_{i}$, it has to receive the message from $s_{i,j}$ by round $t-1$, which is impossible as discussed above. 
        Similarly, we can see that $r_{i,j}$ never forwards the message.
        This implies that $l_{i}$ and $r_{i}$ receive the message from $s$.
        Thus the claim holds.
        
        The claim in the previous paragraph implies that 
        the number of informed vertices that forward the message increases by at most $|S|$ in each round:
        in the first round, there are at most $|S|$ such vertices;
        in the second round, there are at most $2|S|$ such vertices;
        and in general, there are at most $i |S|$ such vertices in the $i$-th round.
        Since there are $|S|$ informed vertices at the beginning
        and at most $i |S|$ vertices are newly informed in round $i$,
        there are at most $|S| + \sum_{i = 1}^{t} i |S| = |S| (1 + t(t+1)/2)$ informed vertices after $t$ rounds.
        
        Indeed, since $|V(G)| = |S|(1 + t(t+1)/2)$ by \cref{lem:hardness_td1:total_size},
        we need exactly $i |S|$ vertices that forward the message in each round $i$,
        and thus the source vertices and the vertices directly informed by them have to forward the message at every round after they are informed.
        In particular, each $v_{h}$ has to receive the message exactly at round~$h$.
        Hence, for $h \in [t] \setminus \setdef{t-x}{x \in A \cup B}$, $s$~forwards the message to $v_{h}$ at round $h$.
        This implies that for each $i \in [n]$, the rounds that $l_i$ and $r_i$ receive the message from $s$ belong to $\setdef{t-x}{x \in A\cup B}$.
        
        For $i\in [n]$, let $x_i^l, x_i^r \in A\cup B$ such that $l_i$ and $r_i$ receive the message at rounds $t-x_i^l$ and $t-x_i^r$, respectively.
        After $l_{i}$ is informed, it forwards the message to $x_{i}^{l}$ vertices in $\setdef{l_{i,j}}{j \in [c_{i}]}$.
        Similarly, $r_{i}$ forwards the message to $x_{i}^{r}$ vertices in $\setdef{r_{i,j}}{j \in [c_{i}]}$.
        Furthermore, for $j \in [c_{i}]$, $s_{i,j}$ forwards the message to exactly one of $l_{i,j}$ and $r_{i,j}$ at round $t$.
        In total, $x_{i}^{l} + x_{i}^{r} + c_{i}$ messages are forwarded to the $2c_{i}$ vertices in $\setdef{l_{i,j}}{j \in [c_{i}]} \cup \setdef{r_{i,j}}{j \in [c_{i}]}$,
        which gives us $x_{i}^{l} + x_{i}^{r} = c_{i}$.
        As the elements of $A$ are even and the ones of $B$ and $C$ are odd,
        one of $x_{i}^{l}$ and $x_{i}^{r}$ belongs to $A$ (we call this one $a'_{i}$) and the other belongs to $B$ (we call this one $b'_{i}$).
        Now the triples $S_{i} = \{a'_{i}, b'_{i}, c_{i}\}$ for $i \in [n]$ certify that $\mathcal{I}$ is a yes-instance of \RNMTS.
    \end{nestedproof}

    \proofsubparagraph*{Reduction from {\TBMS} to {\TB}.}
    
    Let $\mathcal{J}=(G,S,t)$ be an instance of \TBMS, where $S = \{s_{0}, s_{1}, \dots, s_{k}\}$.
    From $\mathcal{J}$, we construct an instance $\mathcal{K}=(G',s',t')$ of {\TB} by setting $s'= s_{0}$ and $t'=t+k+1$,
    and constructing $G'$ from $G$ as follows (see \cref{fig:reduction_treedepth_G-prime}):
    \begin{itemize}
      \item for $i \in [k]$, introduce the edge $\{s_{0}, s_{i}\}$;
      \item for $i \in [k]$ and $j \in [i+1,k+1]$, introduce a star with center $v_{i,j}$
      and $t+k-j+1$ leaves, and the edge $\{s_i,v_{i,j}\}$;
      \item add a new vertex $z$, $t$ leaves attached to it, and the edge $\{s_{0}, z\}$.
    \end{itemize}
    
    \begin{figure}[tbh]
      \centering
      \includegraphics{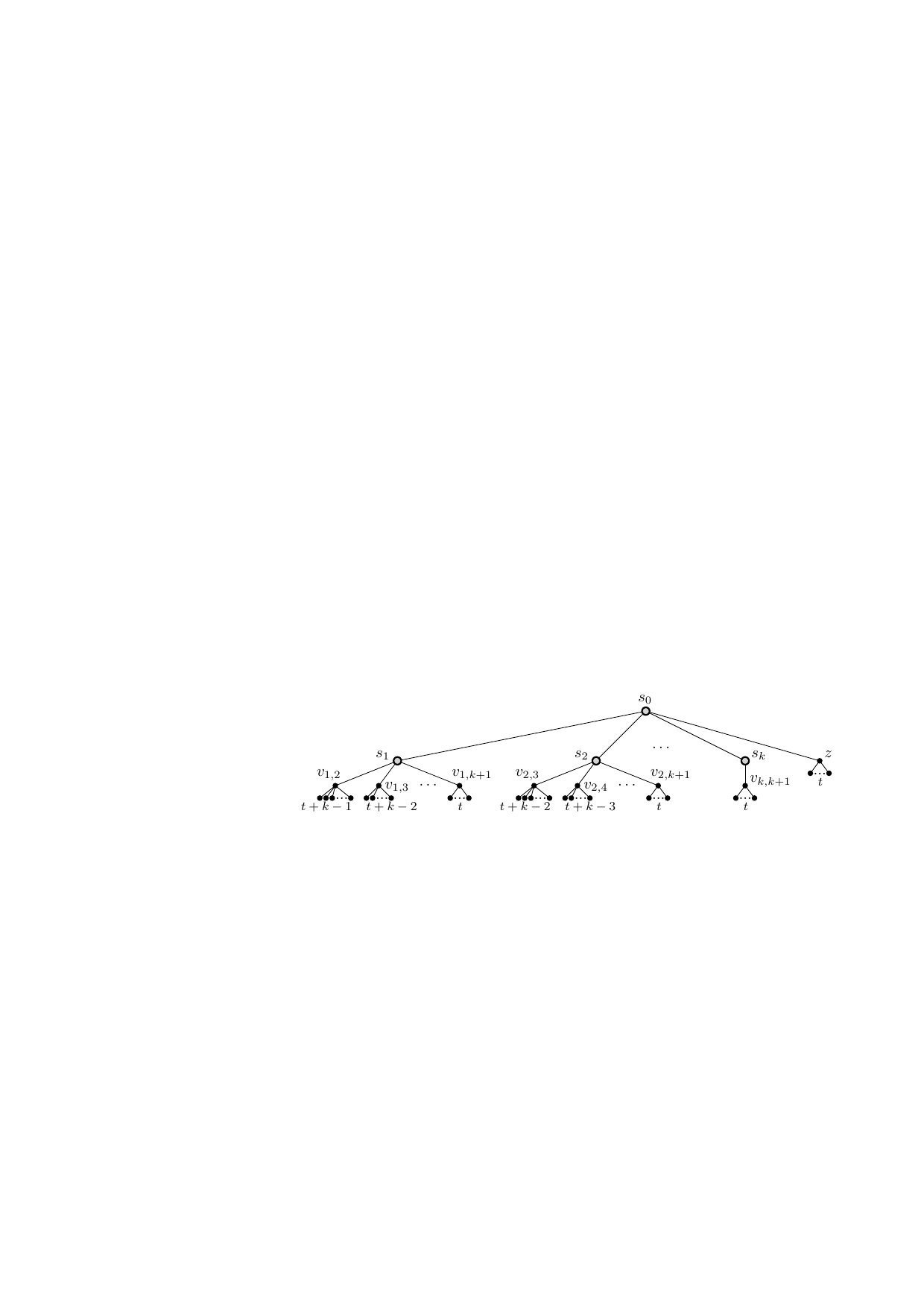}
      \caption{The reduction from the multiple-source case to the single-source case.}
      \label{fig:reduction_treedepth_G-prime}
    \end{figure}

    The idea of this reduction is to force the source $s_{0}$ to forward the message to $s_{1}, \dots, s_{k}$ and $z$ in the first $k+1$ rounds, while preventing them to forward the message to any other vertices of $G$ in the meantime by forcing them to forward the message to the new stars. From round $k+2$ onwards, the vertices of $S$ can forward the message to the original vertices of $G$, simulating the instance of \TBMS.
    
    \begin{lemma}\label{lem:tbms->tb}
        If $\mathcal{J}$ is a yes-instance of \TBMS, then $\mathcal{K}$ is a yes-instance of \TB.
    \end{lemma}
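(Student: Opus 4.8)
The plan is to take a broadcast protocol witnessing $b(G,S)\le t$ and extend it to a protocol for $(G',s_0)$ of length $t'=t+k+1$, organized into two phases. During the first $k+1$ rounds I would propagate the message from $s_0$ along the newly added edges to all the other sources $s_1,\dots,s_k$ and to $z$, while simultaneously keeping each already-informed source busy feeding its attached stars; from round $k+2$ onwards every source is informed and idle, so I can then replay the given \TBMS{} protocol, shifted forward in time by $k+1$ rounds.

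Concretely, in the first phase I would let $s_0$ forward to $s_i$ at round $i$ for each $i\in[k]$ and to $z$ at round $k+1$, which uses $s_0$ exactly once per round over rounds $1,\dots,k+1$. Each source $s_i$, once informed at round $i$, would forward to its star center $v_{i,j}$ at round $j$ for every $j\in[i+1,k+1]$; these target rounds are distinct and all strictly greater than $i$, so the schedule is conflict-free. The crucial bookkeeping is that the gadget sizes leave each center exactly enough time: $v_{i,j}$ has $t+k-j+1$ leaves and is informed at round $j$, so it has the $t'-j=t+k+1-j$ rounds $j+1,\dots,t'$ to serve them one per round, which matches precisely; likewise $z$ has $t$ leaves and is informed at round $k+1$, leaving exactly the $t$ rounds $k+2,\dots,t'$. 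Hence all leaves of the stars and of $z$ are informed by round $t'$, and no original vertex of $G$ outside $S$ is touched during this phase.

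For the second phase I would note that at the end of round $k+1$ the informed original vertices are exactly those of $S$, and every source has discharged its last star obligation at round $k+1$ and so is free from round $k+2$ on. I would then superimpose the given \TBMS{} protocol, mapping each of its rounds $r\in[t]$ to round $k+1+r$ of the new protocol. Since the first phase uses only edges incident to the new vertices $z$ and $v_{i,j}$ together with the source--source edges $\{s_0,s_i\}$, whereas the shifted \TBMS{} protocol forwards only along edges of $G$ and never between two sources (which are informed from the start), the two phases share no forwarding edge, and the combined time-stamp function is a valid broadcast protocol. Because the \TBMS{} protocol informs all of $G$ within $t$ rounds, its shifted copy informs all of $G$ by round $k+1+t=t'$, so $G'$ is fully broadcast within $t'$ rounds. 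I expect the only real care to lie in the arithmetic confirming that each star center and $z$ have precisely the right number of remaining rounds to serve their leaves, and in checking that every source becomes free exactly at round $k+2$ so that the two phases dovetail without any scheduling conflict.
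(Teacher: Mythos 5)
Your proposal is correct and follows essentially the same two-phase construction as the paper: $s_0$ informs $s_1,\dots,s_k$ at rounds $1,\dots,k$ and $z$ at round $k+1$, each $s_i$ feeds its star centers $v_{i,j}$ at the forced rounds $j\in[i+1,k+1]$, the centers and $z$ have exactly the right number of remaining rounds for their leaves, and the given \TBMS{} protocol is replayed shifted by $k+1$ rounds. The arithmetic checks you flag ($t+k-j+1$ leaves versus $t'-j$ remaining rounds, and $t$ leaves of $z$ versus rounds $k+2,\dots,t'$) are precisely the ones the paper verifies.
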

        
    \begin{nestedproof}
        Let $P$ be a broadcast protocol for $(G,S)$ with at most $t$ rounds.
        We construct a broadcast protocol $P'$ for $(G',s_{0})$ as follows:
        \begin{itemize}
            \item for $i \in [k]$, $s_{0}$ forwards the message to $s_i$ at round $i$; 
            \item $s_{0}$ forwards the message to $z$ at round $k+1$;
            \item for $i \in [k]$ and $j \in [i+1,k+1]$, $s_i$ forwards the message to $v_{i,j}$ at round $j$;
            \item for $i \in [k]$, $j \in [i+1,k+1]$, and $h \in [j+1,t+k+1]$,
            $v_{i,j}$ forwards the message to one of its $t+k-j+1$ leaves at round $h$;
            \item for $h \in [k+2, t+k+1]$, $z$ forwards the message to one of its $t$ leaves at round $h$;
            \item for any two vertices $v, v' \in V(G)$ and for every $h \in [k+2,t+k+1]$, 
            $v$ forwards the message to $v'$ at round $h$ in $P'$ if and only if $v$ forwards the message to $v'$ at round $h-k-1$ in $P$.
        \end{itemize}
        
        We argue that every vertex of $G'$ has received the message by the end of round $t'$.
        At the end of round $k+1$, all vertices in $S \cup \{z\}$ and all centers of the stars attached to $S$ have been informed.
        Each center $v_{i,j}$ forwards the message to its $t+k-j+1$ leaves in rounds $j+1$ to $t+k+1$.
        Furthermore, $z$ forwards the message to its $t$ leaves in rounds $k+2$ to $t+k+1$.
        Finally, since $P$ forwards the message to each vertex of $V(G)$ by round $t$, $P'$ forwards the message to each vertex of $V(G)$ by round $t+k+1$.
    \end{nestedproof}
    
    \begin{lemma}\label{lem:tb->tbms}
        If $\mathcal{K}$ is a yes-instance of \TB, then $\mathcal{J}$ is a yes-instance of \TBMS.
    \end{lemma}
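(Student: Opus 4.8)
The plan is to show that \emph{every} broadcast protocol $P'$ for $(G',s_0)$ using at most $t'=t+k+1$ rounds is forced, during its first $k+1$ rounds, to behave essentially like the protocol built in \cref{lem:tbms->tb}: it must spend these rounds informing $s_1,\dots,s_k$, the vertex $z$, and all star centers, while leaving every vertex of $V(G)\setminus S$ untouched. Once this rigidity is established, I would obtain the protocol for $(G,S)$ by shifting the last $t$ rounds of $P'$ back by $k+1$.

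First I would record the hard deadlines coming from the gadget leaves. Since the star centred at $v_{i,j}$ has $t+k-j+1$ leaves, each informable only by $v_{i,j}$, the center $v_{i,j}$ must be informed by round $j$; likewise $z$ must be informed by round $k+1$. As the only non-leaf neighbour of $v_{i,j}$ is $s_i$ and the only non-leaf neighbour of $z$ is $s_0$, such centers can be informed only by the corresponding source, and $s_h$ in turn inherits a deadline of round $h$ from its tightest center $v_{h,h+1}$. I would also observe that the edges added to form $G'$ are incident only to $S$, so each $v\in V(G)\setminus S$ has the same neighbourhood in $G'$ as in $G$ and can therefore be informed only through an edge of $G$.

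The heart of the argument is an induction on $h=1,\dots,k+1$ establishing the invariant that, at the end of round $h$: no vertex of $V(G)\setminus S$ is informed; the informed vertices of $S$ are exactly $s_0,\dots,s_{\min(h,k)}$, with $s_i$ informed at round $i$ by $s_0$; and each $s_i$ has forwarded to $v_{i,i+1},\dots,v_{i,h}$ in earliest-deadline order. Granting the invariant through round $h-1$, the only informed vertices able to forward to a non-leaf are $s_0,\dots,s_{h-1}$, since informed centers and leaves are dead ends. Vertex $s_h$, having deadline $h$ and (by the invariant) reachable only through $s_0$, forces $s_0$---which has already used rounds $1,\dots,h-1$ on $s_1,\dots,s_{h-1}$---to send to $s_h$ at round $h$ (or to $z$ when $h=k+1$). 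Symmetrically, each $s_i$ with $i<h$ still owes the centers $v_{i,h},\dots,v_{i,k+1}$, whose deadlines $h,\dots,k+1$ exactly match its remaining rounds, so it is forced to send to $v_{i,h}$ at round $h$. Hence at round $h$ no source forwards into $V(G)\setminus S$, re-establishing the invariant and completing the step.

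The subtle point, and the step I expect to be the main obstacle, is ruling out that $P'$ informs some vertex of $V(G)\setminus S$ early and exploits it: a per-source deadline count alone leaves slack when a source is informed strictly before its deadline, and a naive global capacity count fails because an early-informed vertex of $G$ could ``pay for itself'' by cascading the message through $G$. Carrying the ``no early $V(G)\setminus S$ vertex'' clause \emph{inside} the induction hypothesis is exactly what removes this slack, as it guarantees each $s_h$ is reachable only through $s_0$ and thereby pins $s_0$ to the tight schedule. Finally, with the invariant in hand at $h=k+1$---all of $S$, all centers, and $z$ informed, and no vertex of $V(G)\setminus S$ informed---I would define $P$ by taking every edge $uv$ of $G$ used in $P'$ to inform $v\in V(G)\setminus S$ at a round $h\ge k+2$ and using it at round $h-(k+1)\in[1,t]$. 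Properness, the parent-before-child ordering, and the one-forward-per-round condition are all preserved by this uniform shift; since $S$ is informed from the outset in the multiple-source setting while $P'$ informs every vertex of $V(G)\setminus S$ by round $t'$, the shifted schedule informs all of $V(G)$ within $t$ rounds, giving $b(G,S)\le t$.
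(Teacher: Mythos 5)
Your proposal is correct and follows essentially the same route as the paper's proof: derive hard deadlines for the star centers $v_{i,j}$ and $z$ from their leaf counts, propagate these to deadlines of round $i$ for each $s_i$, conclude that $s_0$ must inform $s_1,\dots,s_k,z$ in rounds $1,\dots,k+1$ while each $s_i$ is locked into serving its centers and cannot touch $V(G)\setminus S$, and then shift the remaining $t$ rounds restricted to $G$ back by $k+1$. Your version merely makes explicit, as a formal induction on the round, the rigidity argument that the paper states more tersely.
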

    
    \begin{nestedproof}
        Let $P$ be a broadcast protocol for $(G', s_{0})$ with at most $t+k+1$ rounds.
        We can see that no vertex in $V(G) \setminus S$ is adjacent to a new vertex in $G'$.
        Hence, it suffices to show that at the end of round $k+1$,
        the only vertices of $V(G)$ that have received the message are exactly those belonging to $S$,
        as we can use the rest of the rounds (i.e., rounds $k+2$ to $t+k+1$) of $P$ restricted to $G$ as a broadcast protocol for $(G,S)$ with at most $t$ rounds.
        
        For $i \in [k]$ and $j \in [i+1, k+1]$,
        the center $v_{i,j}$ has to be informed by round $j$
        since it has to forward the message to the $t+k-j+1$ leaves attached to it
        by the final round $t+k+1$.
        Let $i \in [k]$. Since $s_{i}$ has to forward the message to $v_{i,j}$ by round $j$ for all $j \in [i+1, k+1]$,
        $s_{i}$ has to be informed by round $i$.
        Furthermore, if $s_{i}$ is informed exactly at round $i$,
        then it has to forward the message to $v_{i,j}$ at round $j$ for each $j \in [i+1,k+1]$,
        and thus, $s_{i}$ cannot forward the message to any vertex in $V(G)$ during the first $k+1$ rounds.
        The discussion so far implies that, for each $i \in [k]$, $s_{0}$ forwards the message to $s_{i}$
        and $s_{i}$ never forwards the message to a vertex in $V(G)$ in the first $k+1$ rounds.
        At round $k+1$, $s_{0}$ has to forward the message to $z$ because $z$ has $t$ leaves attached to it.
    \end{nestedproof}
    
    \proofsubparagraph*{Bounding the tree-depth.}
    Let $\mathcal{I} = (A,B,C)$ be an instance of {\RNMTS},
    $\mathcal{J} = (G,S,t)$ the instance of {\TBMS} obtained from $\mathcal{I}$ by the first reduction, 
    and $\mathcal{K} = (G',s',t')$ the instance of {\TB} obtained from $\mathcal{J}$ by the second reduction with $s_{0} = s$.
    By \cref{lem:tbms->tb,lem:tb->tbms,lem:rnmts->tbms,lem:tbms->rnmts}, 
    $\mathcal{I}$ is a yes-instance of {\RNMTS} if and only if $\mathcal{K}$ is a yes-instance of \TB. 
    
    Now we show that the graph $G'$ is of tree-depth at most~$6$.
    Recall that $G'$ is obtained from $G$ by adding some stars attached to the vertices in $S$
    and edges $\{s_{0}, s_{i}\}$ for $i \in [k]$.
    Let $G''$ be the graph obtained from $G'$ by removing the vertex $s_{0}$.
    Each connected component of $G''$ is either a star 
    or some $\mathcal{P}_{i}$ with additional stars attached to each $s_{i,j}$, which we call $\mathcal{P}'_{i}$.
    Let $\mathcal{P}''_{i}$ be the graph obtained from $\mathcal{P}'_{i}$ by removing the vertices $l_{i}$ and $r_{i}$.
    Each connected component of $\mathcal{P}''_{i}$ has vertices $l_{i,j}$, $s_{i,j}$, $r_{i,j}$ for some $j \in [c_{i}]$ and some stars attached to $s_{i,j}$.
    Removing $s_{i,j}$ from $\mathcal{P}''_{i}$ leaves only stars and singletons as connected components, and thus $\td(\mathcal{P}''_{i}) \le 3$.
    Therefore, we have $\td(G') \le \td(G'') + 1 \le  \td(\mathcal{P}'_{i}) + 1 \le \td(\mathcal{P}''_{i}) + 3 \le 6$.
\end{proof}



\section{Fixed-parameter Algorithms}\label{sec:fpt-alg}


\subsection{FPT Algorithm Parameterized by Vertex Integrity}\label{sec:vi}

Let $G = (V,E)$ be a graph.
The \emph{vertex integrity} of $G$, denoted $\vi(G)$, is the minimum integer $k$
such that there is a vertex set $S \subseteq V$ with $|S| + \max_{C \in \cc(G-S)} |V(C)| \le k$.
It is known that such a set can be computed by a fixed-parameter tractable algorithm parameterized by~$k$~\cite{algorithmica/DrangeDH16}.

In this section, we present a fixed-parameter tractable algorithm for {\TB} parameterized by vertex integrity.

\begin{theorem}\label{thm:fpt:vi}
    {\TB} is fixed-parameter tractable parameterized by vertex integrity.
\end{theorem}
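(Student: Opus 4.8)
The plan is to exploit the separator structure behind vertex integrity, generalizing the vertex-cover argument of Fomin et al.~\cite{tcs/FominFG24}. First I would invoke the algorithm of~\cite{algorithmica/DrangeDH16} to compute, in FPT time, a set $S \subseteq V$ with $|S| \le k := \vi(G)$ such that every $C \in \cc(G-S)$ has at most $k$ vertices; by adding the source to $S$ we may assume $s \in S$ at the cost of $|S| \le k+1$. The decisive property is that $S$ is a separator: for any component $C \in \cc(G-S)$, the only way the message enters $C$ is through edges from $S$, so any vertex of $C$ informed from outside $C$ is informed directly by a neighbour in $S$. Consequently the protocol's effect on $C$ is fully captured by its \emph{feeding schedule}, i.e.\ the set of pairs recording which vertices of $C$ are informed by $S$-vertices and in which rounds; given this, whether $C$ is completely informed by round $t$ is decided by an internal broadcast on at most $k$ vertices. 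I would next group components into \emph{types}, where the type of $C$ is its isomorphism class together with, for each vertex of $C$, its neighbourhood within $S$. Since $|C| \le k$ and $|S| \le k+1$, the number of types is bounded by $2^{\bO(k^2)}$, and components of the same type are interchangeable in any protocol.

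The core of the argument is a normalization lemma asserting that some optimal protocol has a restriction to $S$ describable by data of size bounded by a function of $k$. Concretely, I would guess (i) the broadcast subtree induced on $S$, (ii) the relative order in which the $\le k+1$ vertices of $S$ become informed, and (iii) for each $v \in S$, how $v$ partitions its available sending rounds between informing other $S$-vertices and pumping the message into its adjacent components. Since $|S|$ is bounded, there are only $f(k)$ such skeletons, so I would branch over all of them. The key observation that makes the residual problem finite is that a component of size $\le k$ is finished within $k$ rounds of receiving its first useful entry; hence the deadline information relevant to completing a component of a given type lies either in a single early entry or within the last $\bO(k)$ rounds before $t$. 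Thus each type contributes only a bounded number of admissible \emph{feeding demand profiles}, each prescribing which entry vertices must be fed and by which deadline offset from $t$.

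Having fixed the $S$-skeleton, the remaining task is purely a scheduling problem: decide, for each type and each admissible demand profile, how many components are served, and route each required entry to some $S$-vertex respecting the constraint that every $S$-vertex issues at most one send per round. I would formulate this as an integer program whose variables count, for each triple (type, profile, serving $S$-vertex), the number of components served that way; the dimension is $f(k)$, and the one-send-per-round constraints aggregate over the (large) horizon into linear capacity inequalities once the per-$S$-vertex time budgets are fixed by the skeleton. Feasibility of such a fixed-dimension ILP is decidable in FPT time. The main obstacle is precisely this coordination: the time horizon $t$ may be polynomial in $n$, so one cannot enumerate round-by-round schedules, and sends from different $S$-vertices into the same component interact. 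I expect to resolve it by an exchange argument that canonicalizes the feeding order—within each type, serving the most time-critical demand profiles first and matching them greedily to the $S$-vertices with the earliest available slots—thereby reducing the genuine scheduling to the counting constraints captured by the bounded-dimension ILP, and certifying that the guessed skeleton together with an optimal ILP solution yields a valid protocol iff $b(G,s) \le t$.
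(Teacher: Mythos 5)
Your overall architecture --- compute the separator $S$, observe that the message enters each small component only through $S$, group the components of $\cc(G-S)$ into $2^{\bO(k^2)}$ interchangeable types, normalize the protocol's restriction to $S$ into a bounded-size skeleton, and then solve the residual feeding of the many components by an aggregated counting argument --- matches the paper's. Your fixed-dimension ILP for the middle phase is a legitimate substitute for the paper's reduction to the polynomial-time multi-source subroutine of Fomin et al.\ (essentially a matching computation after contracting each leaf component to a single vertex), so that part is fine in spirit.

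The genuine gap is the normalization step, which you assert rather than prove, and which is the technical heart of the result. You propose to guess the broadcast structure on $S$, the \emph{relative} order in which the $S$-vertices are informed, and ``how each $v\in S$ partitions its available sending rounds.'' The relative order is not enough: the actual rounds at which $S$-vertices become informed matter, and a priori they could lie anywhere in $[t]$ with $t$ polynomial in $n$ (an $S$-vertex can be informed at round $t/2$ via a chain through a component), so the description of how $v$ splits its sending budget is not of size $f(k)$ without further argument. The natural fix --- show that some optimal protocol informs all of $S$ within the first $\mathrm{poly}(k)$ rounds, as in the vertex-cover case --- is \emph{false} here: the paper exhibits an instance in which every optimal protocol must inform some vertex of $S$ last. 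What the paper proves instead (\cref{lem:vi-important-vertices}) is that some optimal protocol confines all ``important'' events (informing $S$-vertices and the components that ever send the message back to $S$) to the first $(2k-3)k$ and the last $2k$ rounds; establishing this requires a delicate exchange argument over $S$-lazy, break-free, greedy protocols, together with the companion fact that each early leaf component is entered exactly once. Your sentence that the deadline information ``lies either in a single early entry or within the last $\bO(k)$ rounds before $t$'' is precisely the statement that needs proof; the justification you give (a component on at most $k$ vertices finishes within $k$ rounds of its first entry) only bounds the component's internal lifetime, not \emph{when} in $[t]$ that lifetime occurs nor when the $S$-vertices themselves are informed. Until this confinement lemma is established, the skeleton enumeration is not bounded by $f(k)$ and the algorithm does not go through.
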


\subparagraph{Overview.}

Our algorithm can be seen as a generalization of the fixed-parameter tractable algorithm parameterized by vertex cover number
due to Fomin et al.~\cite{tcs/FominFG24}.
Since the ideas are similar, let us present a brief overview of their algorithm.
Let $G = (V,E)$ be a graph with a vertex cover $S$ of size $k$. 
They first show that there is an optimal broadcast protocol such that 
all vertices in the vertex cover $S$ receive the message in the first $2k-1$ rounds.
Since a broadcast protocol with $2k-1$ rounds involves at most $2^{2k-1}$ vertices
and the vertices in $V \setminus S$ can be partitioned into $2^{k}$ classes of \emph{twins},
one can guess which vertices are involved in the first $2k-1$ rounds.
After testing the feasibility of the guessed vertices,
it only remains to check whether the completely informed vertex cover $S$
can send the message to the independent set $V \setminus S$ within a given number of rounds.
This problem can be solved in polynomial time by a reduction to the \textsc{Bipartite Matching} problem.

Our algorithm follows the same general strategy,
however there are two obstacles one needs to overcome in order to generalize from vertex cover number to vertex integrity.
Let $S \subseteq V$ be a set satisfying $|S|+ \max_{C \in \cc(G-S)} |V(C)| \le k$.
The first and main obstacle is showing the existence of an optimal broadcast protocol
in which the vertices in $S$ receive the message in the first few rounds.
Unfortunately, there are instances that do not admit such optimal broadcast protocols (see \cref{fig:vi-ce});
what we show instead is that there is an optimal broadcast protocol in which the vertices in $S$ receive
the message in the \emph{first} and \emph{last} few rounds.
The second obstacle is that after the removal of $S$, we have a set of small connected components instead of an independent set.
To deal with this, we show that most of the components of $G-S$ receive the message from $S$ only once and do not send the message back to $S$.
This property allows us to handle these components of $G-S$ as single vertices forming an independent set in some sense.

\begin{figure}[tbh]
  \centering
  \includegraphics{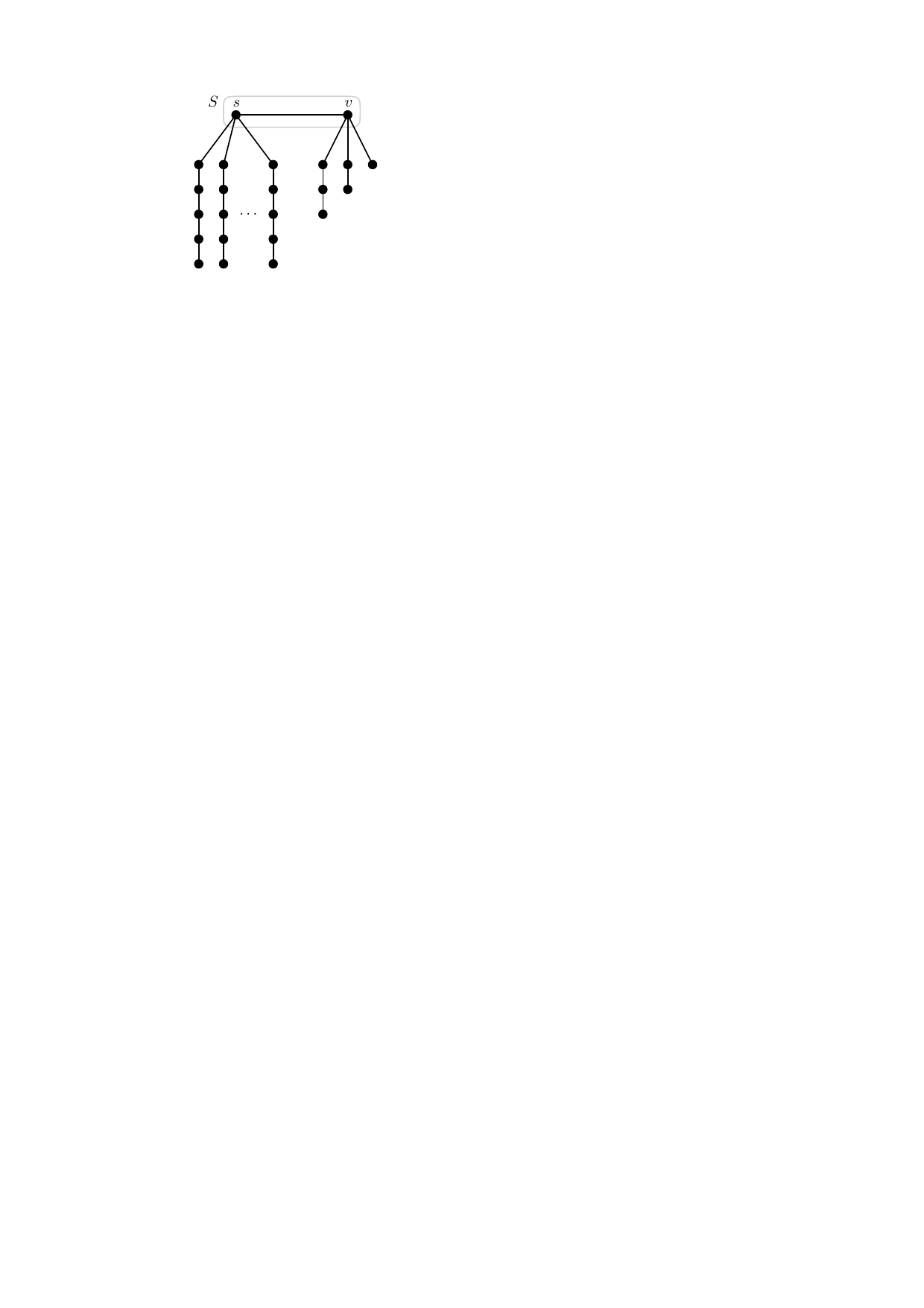}
  \caption{The set $S = \{s,v\}$ is the unique set satisfying that $|S| + \max_{C \in \cc(G-S)} |V(C)| \le 7$ ($= \vi(G)$).
  In all optimal broadcast protocols, $v$ is the last vertex that receives the message from~$s$
  as $v$ needs only three rounds after it receives the message, while other neighbors needs four rounds.}
  \label{fig:vi-ce}
\end{figure}

\subparagraph{Some definitions.}

In the following, we fix a graph $G = (V,E)$ and a source vertex $s \in V$.
We also fix a vertex set $S \subseteq V$ that minimizes the sum $|S| + \max_{C \in \cc(G-S)} |V(C)|$ under the condition that $s \in S$.
Such $S$ can be found by first finding a set $S'$ minimizing $|S'| + \max_{C \in \cc((G - s)-S')} |V(C)|$, and then setting $S = S' \cup \{s\}$.
Let $k = |S| + \max_{C \in \cc(G-S)} |V(C)|$. Note that $k \le \vi(G) + 1$.

We say that $C$ and $C'$ in $\cc(G-S)$ have the same \emph{type}
if $G$ admits an automorphism $\eta$ such that $\eta(V(C)) = V(C')$, $\eta(V(C')) = V(C)$, and
$\eta(v) = v$ for each $v \in V \setminus (V(C) \cup V(C'))$.
By the equivalence relation of having the same type,
we partition $\cc(G-S)$ into the equivalence classes $\mathcal{C}_{1}, \dots, \mathcal{C}_{p}$.
We can upper-bound $p$, the number of types, by a function of $k$; e.g., $p < 2^{k^{2}}$,
which is an upper bound of the number of labeled graphs formed by at most $k$ vertices,
i.e., by the vertices in $S$ and a component $C \in \cc(G-S)$.
Observe that, in a broadcast protocol, two components of $G-S$ with the same type can be used interchangeably.


Let $P$ be a broadcast protocol for $(G,s)$.
We denote by $r(P)$ the number of rounds in $P$.
For $C \in \cc(G-S)$, let $a_{P}(C)$ denote the \emph{arrival time} at $C$ in $P$;
that is, $a_{P}(C)$ is the first round when some vertex in $S$ forwards the message to some vertex in $C$.
We call $C \in \cc(G-S)$ a \emph{leaf component} of $P$ if no vertex of $C$ forwards the message to vertices of $S$.
We call the vertices in the leaf components \emph{unimportant} and the rest of the vertices \emph{important}.
That is, the important vertices are those belonging to $S$ as well as in the non-leaf components of $G-S$.

\subsubsection{Optimal Broadcast Protocols with Nice Properties}

We say that an optimal broadcast protocol $P$ for $(G,s)$ is \emph{$S$-lazy}
if, among all optimal broadcast protocols, $P$ minimizes the total number of message-forwardings from~$S$.

\begin{lemmarep}\label{lem:vi-leaf-components}
If $P$ is an $S$-lazy optimal broadcast protocol for $(G,s)$,
then every leaf component $C$ with $a_{P}(C) \le r(P)-k+1$ receives the message from $S$ exactly once.
\end{lemmarep}

\begin{proof}
Let $P$ be an $S$-lazy optimal broadcast protocol.
Suppose to the contrary that $S$ forwards the message to some leaf component $C$ multiple times.
Let $P'$ be the broadcast protocol obtained from $P$ by removing 
the message-forwardings between vertices in $C$ and 
the message-forwardings from $S$ to $C$ except for an arbitrary one at round $a_{P}(C)$.
Observe that $P'$ broadcasts the message to $V \setminus V(C)$ by round $r(P)$
and to one vertex, say $v$, in $C$ at round $a_{P}(C)$.
Since $C$ is connected and has at most $k$ vertices, there is a broadcast protocol $P''$ for $(C, v)$ with at most $k-1$ rounds.
Now we can combine $P'$ and $P''$ by executing $P'$ for the first $a_{P}(C)$ steps
and then executing  $P''$ and the rest of $P'$ in parallel for the remaining $r(P) - a_{P}(C) \ge k-1$ steps.
The combined protocol, which we call $P'''$, broadcasts the message to all vertices.
This contradicts the assumptions on $P$
since $r(P''') \le r(P)$ and $P'''$ has a strictly smaller number of message-forwardings from $S$.
\end{proof}

\begin{lemmarep}
\label{lem:vi-important-vertices}
There is an $S$-lazy optimal broadcast protocol $P$ such that
all important vertices receive the message only in the first $(2k-3)k$ rounds and the last $2k$ rounds.
\end{lemmarep}
\begin{proof}
Let $P_{0}$ be an $S$-lazy optimal broadcast protocol for $(G,s)$
that satisfies the following two assumptions.
(1) No vertex takes a \emph{break} in $P_{0}$;
that is, if a vertex $v$ forwards the message to any neighbors,
then $v$ does so in the consecutive rounds starting at the round right after it receives the message.
(2) The vertices in $V \setminus S$ are \emph{greedy} in $P_{0}$;
that is, the following situation does not happen in $P_{0}$:
at some round $q$, an already informed vertex $u \in V \setminus S$ does not forward the message
although an uninformed neighbor $v \in N(u)$ does not receive the message.
It is easy to see that such an $S$-lazy optimal broadcast protocol exists.

We obtain $P$ from $P_{0}$ by exhaustively applying the following modification:
\begin{enumerate}
  \item find a vertex $u \in S$, an unimportant vertex $v$, and an important vertex $w$ such that 
  $u$ forwards the message to $v$ at some round $q < r(P_{0})-k+1$ and to $w$ at round $q + 1$;

  \item swap the ordering of the message-forwardings from $u$ to $v$ and $w$,
  that is, force $u$ to forward the message to $w$ at round $q$ and to $v$ at round $q+1$;

  \item shift the subsequent message-forwardings starting at $v$ one round later and the ones starting at $w$ one round earlier.
\end{enumerate}

Observe that $P$ is an optimal protocol:
the message-forwardings starting at $v$ are restricted in a leaf component $C$,
and thus having $r(P_{0}) - (q+1) \ge k-1 \ge |V(C)|-1$ rounds suffice;
the subsequent message-forwardings starting at $w$ are done one round earlier than in $P_{0}$.
Note that $P$ is still $S$-lazy, no vertex takes a break in $P$, and the vertices in $V \setminus S$ are greedy.
The modification gives the following property of $P$.
\begin{claim}
\label{clm:vi-leaf-last}
If $u \in S$ forwards the message to an unimportant vertex at round $q < r(P) - k + 1$
and to some vertex $v$ at round $q+1$,
then $v$ is also an unimportant vertex.
\end{claim}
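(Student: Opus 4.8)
The plan is to obtain Claim~\ref{clm:vi-leaf-last} directly from the fact that $P$ is a fixpoint of the modification procedure, namely that no further modification is applicable to $P$. I would argue by contradiction. Suppose some $u \in S$ forwards the message to an unimportant vertex at a round $q < r(P)-k+1$ and forwards the message to a vertex $v$ at round $q+1$, yet $v$ is important. Setting $w := v$, the vertex $u$, the unimportant vertex informed at round $q$, and $w$ then satisfy exactly the three preconditions in Step~1 of the modification (recall that the number of rounds is invariant under the modification, so $r(P)=r(P_{0})$ and the threshold $r(P_{0})-k+1$ coincides with $r(P)-k+1$). Thus the modification would still be applicable to $P$, contradicting the choice of $P$ as the protocol obtained by \emph{exhaustively} applying the modification. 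Hence $v$ is unimportant.

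For this to be rigorous, the step I expect to be the crux is showing that the exhaustive application is well-defined, i.e., that a modification can be applied only finitely many times in succession. I would exhibit a strictly decreasing nonnegative integer potential, for instance $\Phi(P) = \sum_{w} \rho(w)$, where the sum ranges over all important vertices $w$ and $\rho(w)$ is the round at which $w$ receives the message in $P$. A single modification moves $w$ from round $q+1$ to round $q$ and shifts every forwarding in the broadcast subtree rooted at $w$ one round earlier, while shifting the subtree rooted at the unimportant vertex $v$ one round later. Since $v$ lies in a leaf component $C$, and $C$ is a connected component of $G-S$ none of whose vertices forward to $S$, every descendant of $v$ in the broadcast tree also lies in $C$ and is therefore unimportant. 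Consequently no important vertex is delayed by the modification, while $w$ (and possibly some of its important descendants) is advanced; so $\Phi$ drops by at least one per modification, forcing termination.

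The remaining checks are routine and I would only state them. First, the partition of $\cc(G-S)$ into leaf and non-leaf components, and hence the split of vertices into important and unimportant, is invariant under the modification: the modification merely reorders and time-shifts existing forwardings and never creates or destroys a forwarding directed from a component of $G-S$ into $S$, so leaf components remain leaf components. Second, as already observed in the text, the modification preserves optimality, $S$-laziness, the no-break property, and greediness. Together these ensure that the notion of ``unimportant'' is the same before and after each modification, so the contradiction in the first paragraph is legitimate and the claim follows.
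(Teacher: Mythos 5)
Your proof is correct and takes essentially the same approach as the paper: the paper states this claim without an explicit proof, as an immediate consequence of having \emph{exhaustively} applied the modification, which is precisely your contradiction argument (and the identification $r(P)=r(P_0)$ you note is indeed needed for the thresholds to match). Your potential-function argument for termination of the exhaustive application is a detail the paper leaves implicit, and it is sound because the delayed subtree rooted at the unimportant vertex lies entirely within a leaf component, so only $w$ and its (important) descendants are advanced and no important vertex is ever delayed.
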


The greediness of the vertices in $V \setminus S$ implies the following property of $P$.
\begin{claim}
\label{clm:vi-nonlazy-components}
No vertex of $C \in \cc(G-S)$ forwards or receives the message at round $a_{P}(C) + k$ and later.
\end{claim}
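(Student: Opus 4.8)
The plan is to show that every message-transmission that forwards into, within, or out of $C$ takes place during the $k$ consecutive rounds $a_{P}(C), a_{P}(C)+1, \dots, a_{P}(C)+k-1$. Call a round \emph{$C$-active} if at that round some vertex of $C$ forwards or receives the message. Because initially no vertex of $C$ is informed and a vertex forwards only after being informed, the first vertex of $C$ to be informed must receive the message from $S$; hence the earliest $C$-active round is exactly $a_{P}(C)$. It thus suffices to establish two facts: (i) there are at most $k$ $C$-active rounds in total, and (ii) the $C$-active rounds are consecutive, with no gap following $a_{P}(C)$.

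For (i) I would argue by counting inside the broadcast tree of $P$. Let $W$ consist of the vertices of $C$ together with all vertices of $S$ whose parent belongs to $C$; since each vertex of $S$ has a single parent, $|W| \le |V(C)| + |S| \le k$. Every $C$-active round informs at least one previously uninformed vertex of $W$: a forwarding performed by a vertex of $C$ informs a child which is either in $C$ or is a vertex of $S$ with parent in $C$, while a reception by a vertex of $C$ informs that vertex of $C$ itself. As each vertex of $W$ is informed exactly once, the number of $C$-active rounds is at most $|W| \le k$.

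For (ii) I would use the greediness of $V\setminus S$ together with the no-break property. Suppose round $q \ge a_{P}(C)$ is $C$-active and further $C$-activity occurs after round $q$; I would show that round $q+1$ is $C$-active as well. If some vertex of $C$ is still uninformed at the end of round $q$, then by connectivity there is an informed vertex $u \in V(C)$ with an uninformed neighbor $w \in V(C)$; at round $q+1$, either $w$ becomes informed (a reception in $C$), or it does not, in which case greediness forces the informed vertex $u$ to forward to some uninformed neighbor (a forwarding in $C$) --- either way round $q+1$ is $C$-active. Otherwise all of $C$ is informed by the end of round $q$, yet some $v \in V(C)$ forwards at a later round $q' > q$; since $v$ received the message by round $q$ and, by the no-break property, forwards in consecutive rounds starting immediately after its reception, it must also forward at round $q+1$, which is therefore $C$-active. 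Hence the $C$-active rounds form a consecutive block starting at $a_{P}(C)$.

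Combining (i) and (ii), the $C$-active rounds are precisely $a_{P}(C), \dots, a_{P}(C)+m-1$ for some $m \le k$, so no vertex of $C$ forwards or receives the message at round $a_{P}(C)+k$ or later. I expect the only delicate point to be the handling of forwardings from $C$ back into $S$ (which arise in non-leaf components): these must be controlled both in the count, by taking the target set to be $W$ rather than $V(C)$, and in the timing, where it is the no-break property rather than greediness that rules out a belated forwarding. For leaf components one has $W = V(C)$, and the estimate sharpens to confine all activity to $a_{P}(C), \dots, a_{P}(C)+|V(C)|-1$.
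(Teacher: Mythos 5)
Your proof is correct and follows essentially the same strategy as the paper's: it bounds the number of $C$-active rounds by $k$ via a counting argument over a vertex set of size at most $|V(C)|+|S|\le k$ (the paper counts newly informed vertices of $N[V(C)]$ where you use $W$), and then uses the greediness assumption to show the active rounds form a consecutive block starting at $a_{P}(C)$. The only cosmetic difference is that the paper establishes consecutiveness by contradiction using greediness alone (the target of any belated forwarding is itself an uninformed vertex of $N[V(C)]$ at the gap round), whereas you argue it directly and invoke the no-break property for the case where $C$ is fully informed but still forwards into $S$; both are valid.
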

\begin{claimproof}
Observe that since $|N[V(C)]| \le  |V(C) \cup S| \le k$,
the number of rounds in which at least one vertex in $C$ forwards or receives the message is at most $k$.
Suppose to the contrary that some vertex of $C$ forwards or receives the message at round $a_{P}(C) + k$ or later.
This implies that there is a round $q \in [a_{P}(C)+1, a_{P}(C) + k -1]$
in which no vertex of $C$ forwards the message
although there are informed vertices in $V(C)$ and uninformed vertices in $N[V(C)]$.
In particular, since $G[N[V(C)]]$ is connected, there are $u \in V(C)$ and $v \in N(u)$
such that $u$ receives the message before round $q$ but does not send the message at round $q$
and $v$ does not receive the message at round $q$ or before.
This contradicts the greediness assumption of the vertices in $V \setminus S$ ($\supseteq V(C)$).
\end{claimproof}

\begin{claim}
\label{clm:vi-S-first}
Let $q \in [2k-2, r(P)-k]$.
If no vertex forwards the message to $S$ in any of rounds $q-2k+3, \dots, q$,
then no vertex forwards the message to $S$ at round $q+1$.
\end{claim}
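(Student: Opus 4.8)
The plan is to prove the statement in contrapositive form, constructively: assuming that some vertex forwards the message to $S$ at round $q+1$, I will exhibit a forward to $S$ somewhere in the window $\{q-2k+3,\dots,q\}$, contradicting the hypothesis. Let $v\in S$ be the new vertex informed by this round-$(q+1)$ forward. The first step is to locate the vertex $u\in S$ ultimately responsible for the message reaching $v$: if the parent of $v$ in the broadcast tree lies in $S$, take $u$ to be that parent; otherwise the parent of $v$ lies in a non-leaf component $C^{\ast}$, and I take $u$ to be the vertex of $S$ that first forwards into $C^{\ast}$, i.e.\ at round $a_{P}(C^{\ast})$. Since $q\ge 2k-2$ the window $\{q-2k+3,\dots,q\}$ is contained in $[1,r(P)]$, and because no vertex is informed from $S$ during this window, I would first argue that $u$ was itself informed no later than round $q-2k+2$ (otherwise the round at which $u$ gets informed is already a forward to $S$ inside the window).

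Next I would exploit the structural properties of the protocol $P$. Since $u$ is informed by round $q-2k+2$ and takes no break, $u$ forwards in \emph{every} round of $\{q-2k+3,\dots,q\}$. The key observation is that the forward through which $u$ launches the message toward $v$ (either directly to $v\in S$, or the injection into the non-leaf component $C^{\ast}$) targets an \emph{important} vertex. Applying \cref{clm:vi-leaf-last} backwards in time—valid because every round in the window is at most $q\le r(P)-k<r(P)-k+1$—then forces all earlier forwards of $u$, in particular all of its forwards during the window, to target important vertices as well. By the gap hypothesis none of these forwards may land in $S$, so each one injects the message into a non-leaf component of $G-S$.

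The hard part will be the final counting step, where I must show that these injections cannot be scheduled without some non-leaf component performing its own forward to $S$ inside the window. Here I would combine three facts: there are at most $|S|-1\le k-2$ non-leaf components, since each of them informs a \emph{distinct} vertex of $S$ (no vertex forwards to an already-informed neighbor); by \cref{clm:vi-nonlazy-components} each component $C$ is active only during the $k$ consecutive rounds $[a_{P}(C),a_{P}(C)+k-1]$; and within a single component $u$ can inject at most $|V(C)|\le k-1$ times. The delicate point—which I expect to be the main obstacle—is ruling out that $u$'s window-forwards are absorbed by components that have \emph{already} completed their forward to $S$ before the window begins. For this I would invoke the greediness of the vertices in $V\setminus S$ together with the $S$-laziness of $P$, arguing that a freshly injected component-vertex possessing an uninformed $S$-neighbor would itself forward to $S$ within the window, and that $S$-laziness forbids superfluous injections into components that are otherwise finished. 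Packing the forced injections into at most $k-2$ length-$k$ active windows would then leave no room to place every component's forward to $S$ outside $\{q-2k+3,\dots,q\}$, producing the sought forward to $S$ inside the window and hence the contradiction.
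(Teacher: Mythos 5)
Your contrapositive set-up is sound and, up to a point, mirrors the paper's argument: tracing the round-$(q+1)$ forward back to a vertex $u\in S$ informed by round $q-2k+2$, using the no-break property, and running \cref{clm:vi-leaf-last} backwards to conclude that $u$ forwards to important vertices throughout the window are all steps that check out (the paper runs the same lemma forwards from round $q-k+1$, but that is a cosmetic difference). The genuine gap is in your final counting step, and it is not just a detail you left to fill in: the packing argument as proposed does not yield a contradiction. You have $2k-2$ forced injections into at most $k-2$ non-leaf components, each of which can absorb up to $k-1$ injections over its length-$k$ active interval; since $(k-2)(k-1)\ge 2k-2$ for $k\ge 4$, the injections fit comfortably and no component is forced to do anything. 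Moreover, your fallback appeal to $S$-laziness cannot close this: \cref{lem:vi-leaf-components} controls repeated injections only into \emph{leaf} components, and nothing forbids a non-leaf component from receiving several forwards from $S$ or from having discharged its forward to $S$ before the window begins (which can indeed happen for injections early in the window, as you suspected).

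The repair is to abandon the global count and look at the single round $q-k+1$, which lies in the window since $k\ge 2$. By your backward induction, $u$ forwards to an important vertex $w$ at that round, and by hypothesis $w\notin S$, so $w$ lies in a non-leaf component $C$. Then $a_P(C)\le q-k+1$, and since $w$ receives the message at round $q-k+1$, \cref{clm:vi-nonlazy-components} gives $q-k+1<a_P(C)+k$, hence $a_P(C)\in[q-2k+2,\,q-k+1]$. All forwards by vertices of $C$, including its forward to $S$ (it is non-leaf), occur in rounds $[a_P(C)+1,\,a_P(C)+k-1]\subseteq[q-2k+3,\,q]$ — a forward to $S$ inside the window, which is the contradiction you were after. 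This is exactly the temporal localization the paper exploits: round $q-k+1$ is the unique round in the window for which the receiving component's entire active interval is trapped inside the window, so no packing over the other $2k-3$ rounds is needed.
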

\begin{claimproof}
To prove the claim, 
we first show that if a vertex in some component $C \in \cc(G-S)$ receives or forwards the message at round $q-k+1$, then $C$ is a leaf component.
From the definition of $a_{P}(C)$, we have $a_{P}(C) \le q-k+1$.
By \cref{clm:vi-nonlazy-components}, $q-k+1 < a_{P}(C) + k$ holds.
Combining these two bounds, we obtain $q-2k+2 \le a_{P}(C) \le q-k+1$.
By \cref{clm:vi-nonlazy-components} again,
the vertices in $C$ may forward the message in rounds $a_{P}(C)+1, \dots, a_{P}(C)+k-1$.
Therefore, a vertex in $C$ may forward the message only in rounds $q-2k+3, \dots, q$.
Since no vertex forwards the message to $S$ in these rounds, $C$ is a leaf component.

The discussion in the previous paragraph shows that non-leaf components do not receive the message at round $q-k+1$.
Since no vertex in $S$ receives the message at round $q-k+1$,
the vertices that receive the message at round $q-k+1$ are unimportant vertices.

We now show that no vertex in $S$ forwards the message to a vertex in $S$ at round $q+1$.
We actually show a stronger fact that no vertex in $S$ forwards the message to an important vertex at any round $q' \in [q-k+1, q+1]$.
Since no vertex in $S$ receives the message in rounds $q-k+1, \dots, q$,
the set of informed vertices in $S$ is the same at round $q-k+1$ and at the beginning of round~$q'$.
This implies that a vertex $v \in S$ forwards the message at round $q'$ only if it does so at round $q-k+1$
(as no vertex takes a break in $P$).
Since the vertices that receive the message at round $q-k+1$ are unimportant vertices,
\cref{clm:vi-leaf-last} implies that if a vertex in $S$ forwards the message at round $q'$, then the destination is an unimportant vertex.

Finally, we show that no vertex in a non-leaf component forwards the message to anywhere at round $q+1$.
If a vertex in a non-leaf component $C \in \cc(G-S)$ forwards the message at round $q+1$, 
then $q+1 < a_{P}(C) + k$ holds by \cref{clm:vi-nonlazy-components}, 
and thus $q-k+2 \le a_{P}(C) \le q$.
This implies that some vertex in $S$ forwards the message to $C$ at round $q' \in [q-k+2, \dots, q]$,
which is impossible as shown in the previous paragraph.
\end{claimproof}

\begin{claim}
\label{clm:vi-S-first-last}
The vertices in $S$ receive the message only in the first $2(k-1)^{2}$ rounds and the last $k$ rounds.
\end{claim}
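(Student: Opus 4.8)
The plan is to analyze the set of rounds in which some vertex forwards the message to a vertex of $S$, and to show that within the range $[1,r(P)-k]$ these rounds are confined to an initial block of length $2(k-1)^2$; every remaining forwarding to $S$ must then fall in the last $k$ rounds, which is exactly the assertion. Concretely, let $R$ be the set of rounds $q \le r(P)-k$ at which some vertex forwards the message to $S$, and write $R = \{r_1 < r_2 < \dots < r_m\}$. Two bounds drive the argument. First, since each of the $|S|-1 \le k-1$ non-source vertices of $S$ is informed exactly once, we have $m \le k-1$. Second, \cref{clm:vi-S-first} governs how spread out the elements of $R$ can be, and the whole proof is a matter of extracting a position bound from it.

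First I would pin down the earliest element: I claim $r_1 \le 2k-2$ whenever $R \neq \emptyset$. Indeed, if no forwarding to $S$ occurred during rounds $1,\dots,2k-2$, then applying \cref{clm:vi-S-first} with $q=2k-2$ (window $[1,2k-2]$) forces round $2k-1$ to be silent as well; iterating the claim, each freshly silent round keeps the trailing window of length $2k-2$ silent, so no vertex forwards the message to $S$ anywhere in $[2k-1,r(P)-k+1]$, whence $R=\emptyset$. Next I would bound the gaps, showing $r_{i+1}-r_i \le 2k-2$ for every $i$. For $r_{i+1}\le 2k-2$ this is immediate since $r_i\ge 1$; for $r_{i+1}\ge 2k-1$ I apply \cref{clm:vi-S-first} with $q=r_{i+1}-1 \in [2k-2,r(P)-k]$: if the window $[r_{i+1}-2k+2,\,r_{i+1}-1]$ were free of forwardings to $S$, the claim would force round $r_{i+1}$ to be silent, contradicting $r_{i+1}\in R$. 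Hence this window contains an element of $R$, which is $<r_{i+1}$ and therefore at most $r_i$, giving $r_i \ge r_{i+1}-2k+2$. Telescoping the two bounds yields $r_m \le r_1 + (m-1)(2k-2) \le (2k-2) + (k-2)(2k-2) = 2(k-1)^2$. Thus every forwarding to $S$ at or before round $r(P)-k$ lands in the first $2(k-1)^2$ rounds, while all other forwardings to $S$ occur in the last $k$ rounds $[r(P)-k+1,\,r(P)]$.

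The main obstacle is the gap-control step, and specifically the iterative use of \cref{clm:vi-S-first}: the claim only deduces the silence of a single round from a single silent window, so the confinement of $R$ rests on the observation that silence is self-propagating once a window of length $2k-2$ appears. A minor technical point is the boundary regime in which \cref{clm:vi-S-first} cannot be invoked, namely rounds below $2k-2$ and the degenerate case $r(P)<3k-2$; in the latter, $[1,r(P)-k] \subseteq [1,2k-3] \subseteq [1,2(k-1)^2]$, so the statement holds trivially and no appeal to the claim is needed. Everything else is elementary arithmetic once the two structural bounds on $R$ are in place.
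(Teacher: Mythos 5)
Your proof is correct and follows essentially the same route as the paper: both arguments use \cref{clm:vi-S-first} to show that among the rounds at most $r(P)-k$, the rounds in which $S$ receives the message cannot be separated by gaps longer than $2k-2$, and then combine this with the fact that $|S\setminus\{s\}|\le k-1$ to get the bound $(2k-2)(k-1)=2(k-1)^2$. Your write-up is merely more explicit about the iteration of \cref{clm:vi-S-first} and the degenerate boundary cases, which the paper leaves implicit.
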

\begin{claimproof}
Let $q$ be the latest round such that $q \le r(P)-k$ and a vertex in $S$ receives the message at round~$q$.
(If no such round exists, then we are done.)
It suffices to show that $q \le 2(k-1)^2$.
By \cref{clm:vi-S-first},
the maximum number of consecutive rounds before round $q$ in which no vertex in $S$ receives the message is less than $2k-2$.
That is, in every interval of $2k-2$ consecutive rounds before round $q$,
at least one vertex in $S$ receives the message.
Since $|S \setminus \{s\}| \le k-1$, we have $q \le (2k-2)(k-1)$.
\end{claimproof}

Now, to complete the proof of the lemma, 
it suffices to show that no vertex in a non-leaf component receives the message at any round in $[(2k-3)k + 1, r(P) - 2k]$.
Let $C \in \cc(G-S)$. 
Assume that some vertex in $C$ receives the message at some round in $[(2k-3)k + 1, r(P) - 2k]$.
By \cref{clm:vi-nonlazy-components}, it holds that $a_{P}(C) \ge (2k-3)k + 1 -(k-1) = 2(k-1)^{2}$.
By \cref{clm:vi-nonlazy-components}, the vertices in $C$ may forward the message only in rounds 
$[a_{P}(C)+1, a_{P}(C) + k-1] \subseteq [2(k-1)^{2} + 1, r(P) - k - 1]$.
By \cref{clm:vi-S-first-last}, no vertex in $S$ receives the message in these rounds, and thus $C$ is a leaf component.
\end{proof}

\subsubsection{The Algorithm Parameterized by Vertex Integrity}

Now we present a fixed-parameter tractable algorithm parameterized by vertex integrity for finding an optimal broadcast protocol for $(G,s)$.
We first guess the number of rounds $t$ in an optimal broadcast protocol. This is possible as $t < |V|$.
If $t < 2k^{2}$, then we apply the fixed-parameter algorithm parameterized by~$t$~\cite{tcs/FominFG24}.
In the following, we assume that $t \ge 2k^{2}$.

We find an optimal broadcast protocol $P$ for $(G,s)$ with $t$ rounds such that 
\begin{itemize}
  \item every leaf component receives the message from $S$ exactly once, and
  \item the important vertices receive the message only in the first $(2k-3)k$ rounds and the last $2k$ rounds.
\end{itemize}
Such a protocol exists by \cref{lem:vi-leaf-components,lem:vi-important-vertices}.
As before, we also assume that no vertex takes a break in $P$.

To describe the algorithm, we partition the set of rounds $[t]$ in $P$ into three subsets of consecutive rounds $R_{1} = [1, (2k-3)k]$, $R_{2} = [(2k-3)k + 1, t-2k]$, and $R_{3} = [t-2k+1, t]$.
Note that $|R_{2}| = (t-2k) - ((2k-3)k + 1) + 1 = t - 2k^{2}  +k \ge k$ as $t \ge 2k^{2}$.

The algorithm works as follows.
\begin{enumerate}\setcounter{enumi}{-1}
    \item Guess $S_{1} \subseteq S$ and $\mathcal{C}_{1}, \mathcal{C}_{3} \subseteq \cc(G-S)$ as follows:
    \begin{itemize}
        \item $s \in S_{1}$ and $S_{1} \setminus \{s\}$ is the set of vertices in $S$ that receive the message in $R_{1}$;
        
        \item $\mathcal{C}_{1} = \setdef{C \in \cc(G-S)}{a_{P}(C) \in R_{1}}$
        with $|\mathcal{C}_{1}| \le (2k-3)k^{2}$;
        
        \item $\mathcal{C}_{3} = \setdef{C \in \cc(G-S)}{a_{P}(C) \in R_{3}}$
        with  $|\mathcal{C}_{3}| \le 2k^{2}$.
    \end{itemize}

    \item Check if the combination of $S_{1}$ and $\mathcal{C}_{1}$ is feasible: that is, 
    the algorithm checks whether there is a \emph{partial} broadcast protocol $P_{1}$ 
    for $G[S_{1} \cup \bigcup_{C \in \mathcal{C}_{1}} V(C)]$ with $(2k-3)k$ rounds
    that starts at $s$ and informs all vertices in $S_{1}$ and at least one vertex of each $C \in \mathcal{C}_{1}$.
    
    \item Let $\mathcal{C}_{2} = \cc(G-S) \setminus (\mathcal{C}_{1} \cup \mathcal{C}_{3})$.
    Check if $S_{1}$ can forward the message to each of $\mathcal{C}_{2}$ in~$R_{2}$: that is,
    the algorithm checks whether there is a \emph{multi-source partial} broadcast protocol~$P_{2}$ 
    for $G[S_{1} \cup \bigcup_{C \in \mathcal{C}_{2}} V(C)]$ 
    with $t - (2k-1)k$ rounds
    that starts at $S_{1}$ and informs exactly one vertex (equivalently, at least one vertex) of each $C \in \mathcal{C}_{2}$.
    
    \item Check if the combination of $S_{1}$ and $\mathcal{C}_{3}$ is feasible: that is,
    the algorithm checks whether there is a \emph{multi-source} broadcast protocol $P_{3}$
    for $G[S \cup \bigcup_{C \in \mathcal{C}_{3}} V(C)]$ with $2k$ rounds
    that starts at $S_{1}$.
\end{enumerate}

\subparagraph*{Correctness.}
We first show that an optimal broadcast protocol for $(G,s)$ with $t$ rounds exists if and only if 
correctly guessed $S_{1}$, $\mathcal{C}_{1}$, $\mathcal{C}_{3}$ pass all three tests by the algorithm.

If an optimal broadcast protocol for $(G,s)$ with $t$ rounds exists, then there is one with the properties that we assumed in the algorithm.
Let $P$ be such a protocol. We construct $S_{1}$, $\mathcal{C}_{1}$, $\mathcal{C}_{3}$ from $P$ as defined in the algorithm.
Observe that at each round, at most $|S| \le k$ components in $\cc(G-S)$ may receive the message from $S$,
and thus $|\setdef{C \in \cc(G-S)}{a_{P}(C) \in R_{1}}| \le (2k-3)k^{2}$ holds.
By the same reasoning, we have $|\setdef{C \in \cc(G-S)}{a_{P}(C) \in R_{3}}| \le 2k^{2}$.
Clearly, $S_{1}$, $\mathcal{C}_{1}$, $\mathcal{C}_{3}$ pass all three tests.

Conversely, assume that some $S_{1}$, $\mathcal{C}_{1}$, $\mathcal{C}_{3}$ pass all three tests by the algorithm.
Let $P_{1}$, $P_{2}$, $P_{3}$ be the protocols found by the algorithm.
Let $P'$ be the protocol obtained by concatenating $P_{1}$, $P_{2}$, $P_{3}$ in this ordering.
In $P'$, some vertices of the components in $\mathcal{C}_{1}$ and $\mathcal{C}_{2}$ may remain uninformed,
while all other vertices receive the message. 
We modify $P'$ to obtain a desired protocol.
We first consider a component $C_{1} \in \mathcal{C}_{1}$ that has uninformed vertices in $P'$.
In $P'$, at least one vertex in $C_{1}$ receives the message in $P_{1}$ and no vertex in $C_{1}$ sends or receives the message in $P_{2}$.
Thus, we can modify $P'$ in such a way that the vertices in $C_{1}$ send and receive the message locally in $C_{1}$ 
and inform all vertices of $C_{1}$ using at most $k-1$ rounds in $P_{2}$.
Next we consider a component $C_{2} \in \mathcal{C}_{2}$ that has uninformed vertices in $P'$.
Similarly to the previous case, we can modify $P'$ locally in $C_{2}$ to inform all vertices of $C_{2}$ using at most $k-1$ round in $P_{3}$.
Now the obtained protocol is a broadcast protocol for $(G,s)$ with $t$ rounds.

\subparagraph*{Running time.}
Now we show that the algorithm can be implemented as a fixed-parameter tractable algorithm parameterized by~$k$.

We first estimate the number of possible candidates for guessing $S_{1}$, $\mathcal{C}_{1}$, and $\mathcal{C}_{3}$.
For $S_{1}$, we simply try all $2^{|S|-1} = 2^{k-1}$
subsets of $S$ that contain $s$.
For $\mathcal{C}_{i}$ ($i \in \{1,3\}$), 
it suffices to try all possibilities of how many components of each type are included in $\mathcal{C}_{i}$.
Since the number of types is less than $2^{k^{2}}$,
the number of candidates for $\mathcal{C}_{i}$ can be bounded from above by a function in $k$;
i.e., $2^{k^{2} \cdot (2k-3)k^{2} }$ for $\mathcal{C}_{1}$ and $2^{k^{2} \cdot 2k^{2}}$ for $\mathcal{C}_{3}$.
In total, we test at most $2^{2k^{5}}$ candidates for the combination of $S_{1}$, $\mathcal{C}_{1}$, $\mathcal{C}_{3}$.
Thus, it suffices to show that, for a fixed combination of $S_{1}$, $\mathcal{C}_{1}$, $\mathcal{C}_{3}$,
each of the three tests can be done with a fixed-parameter tractable algorithm parameterized by~$k$.

The first test can be done in time depending only on $k$ because the question is decidable and
the target graph $G[S_{1} \cup \bigcup_{C \in \mathcal{C}_{1}} V(C)]$ has at most $k + |\mathcal{C}_{1}| k < 2k^{4}$ vertices.
In the same way, we can see that the third test can be done in time depending only on $k$.

For the second test, we can use the polynomial-time subroutine that Fomin et al.~\cite{tcs/FominFG24} used for their algorithm parameterized by vertex cover number.
The subroutine checks in polynomial time whether there is a multi-source broadcast protocol with a given number of rounds starting at a vertex cover.
To use this subroutine, we construct a graph $G_{2}$ from $G[S_{1} \cup \bigcup_{C \in \mathcal{C}_{2}} V(C)]$ 
by replacing each component $C_{2}$ in $\mathcal{C}_{2}$ with a single vertex that is adjacent to the vertices in $S_{1} \cap N(V(C_{2}))$.
Observe that the desired multi-source partial broadcast protocol $P_{2}$ for $G[S_{1} \cup \bigcup_{C \in \mathcal{C}_{2}} V(C)]$ exists if and only if
there is a multi-source broadcast protocol for $G_{2}$ with $t-(2k-1)k$ rounds starting at $S_{1}$.
Since $S_{1}$ is a vertex cover of $G_{2}$, we can use the subroutine of Fomin et al.~\cite{tcs/FominFG24}.

\subsection{FPT Algorithm Parameterized by Distance to Clique}
\label{sec:dtc}

In this section we consider {\TB} parameterized by distance to clique $k$, 
in which the input graph contains a subset $X\subseteq V$ of size $k$ such that $G-X$ is a clique.
We call $X$ a \emph{modulator} of $G$ and let $n = |V| -k$.
Computing a (smallest) modulator is equivalent to computing a (minimum) vertex cover and thus can be done in time $2^{k}n^{\bO(1)}$.

We show that, in these instances, the amount of information necessary to encode a solution is bounded, 
in that we can recreate a solution knowing the following about the vertices of the modulator:
on which round each is informed, 
how many vertices each informs, 
as well as some details about the vertices that inform them
(see \cref{lem:dtc:mechanism} and the description below for details).
We solve the problem by enumerating all possibilities for this information, 
and then recreating a matching solution in polynomial time.

Our approach is based on the following two facts: %
(i) $b(G,s) \leq k + \lceil\log_2 n\rceil$, and %
(ii) to encode a solution we only need to know the information described above.

The rest of this section is dedicated to proving the following theorem.

\begin{theorem}\label{thm:fpt:dtc}
{\TB} is fixed-parameter tractable parameterized by distance to clique $k$, and can be solved in time $2^{\bO(k^2)} n^{\bO(1)}$.
\end{theorem}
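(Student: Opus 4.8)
The plan is to \emph{guess} a bounded-size description of how the message passes through the modulator $X$, and then to \emph{reconstruct} a matching broadcast protocol on the clique $K := G - X$ in polynomial time. First I would normalise the instance so that $s \in X$: if $s$ lies in the clique, move it into the modulator, which increases $k$ by at most one. Write $n := |V| - k = |V(K)|$. By fact~(i), every optimal protocol uses $t \le k + \lceil \log_2 n\rceil$ rounds, so I may guess the exact number of rounds $t$ among the $\bO(k + \log_2 n)$ candidates in $[0, k + \lceil\log_2 n\rceil]$; it then suffices to decide, for each fixed $t$, whether a $t$-round protocol exists.

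For a fixed $t$, the profile I enumerate records, for every $x \in X$, the round $\tau(x) \in [0,t]$ at which $x$ becomes informed (with $\tau(s) = 0$) together with a constant-size \emph{parent tag} specifying how $x$ is informed: by $s$, by a named modulator vertex $y \in X$, or by an anonymous clique vertex. This is exactly the data identified by the mechanism lemma (\cref{lem:dtc:mechanism}) as sufficient to recover a solution (fact~(ii)). The number of timing vectors is $(t+1)^k \le (k + \lceil\log_2 n\rceil + 1)^k$, which I would bound by $2^{\bO(k^2)} n^{\bO(1)}$: bounding the base by $2\max(k+1, \log_2 n)$ gives $(t+1)^k \le 2^k\bigl((k+1)^k + (\log_2 n)^k\bigr)$, where $(k+1)^k = 2^{\bO(k^2)}$, and $(\log_2 n)^k \le 2^{\bO(k^2)} n$ because if $\log_2 n \le 2^k$ then $(\log_2 n)^k = 2^{k \log_2\log_2 n} \le 2^{k^2}$, while otherwise $k < \log_2\log_2 n$ and $(\log_2 n)^k < 2^{(\log_2\log_2 n)^2} \le 2^{\log_2 n} = n$. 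Since there are at most $(k+1)^k = 2^{\bO(k\log k)}$ choices of parent tags, the total number of profiles is $2^{\bO(k^2)} n^{\bO(1)}$.

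Given a profile I would verify it in polynomial time. First check modulator-internal consistency: the parent tags must induce a forest rooted at $s$ respecting the timings, every used edge must exist in $G$, and no vertex may be required to inform two modulator children in the same round (so the rounds $\tau(\cdot)$ of the children of a common parent are distinct). The only remaining freedom is how the informed vertices fill the clique. Since $K$ is a clique, in each round every informed clique vertex can inform any still-uninformed clique vertex (doubling), while each modulator vertex $x$, in each round after $\tau(x)$ that is not already committed to a modulator child, may \emph{inject} one of its clique-neighbours; conversely, an anonymous-clique parent tag for a vertex $z$ consumes one informed clique vertex at round $\tau(z)$ (a \emph{diversion}). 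I would argue that filling $K$ greedily --- doubling with every non-diverted informed clique vertex and injecting at the earliest opportunity --- is optimal, so feasibility reduces to a single simulation over $t$ rounds tracking the number of informed clique vertices, and this count reaching $n$ certifies success. Two points require care: clique vertices are \emph{not} interchangeable, as they have different neighbourhoods in $X$, so injections and diversions must be routed to clique vertices of the correct one of the $\le 2^k$ types; I would decide this routing with a bipartite matching/flow between modulator vertices (carrying their time windows and degree capacities) and types, in the spirit of the vertex-cover subroutine of Fomin et al.~\cite{tcs/FominFG24}. Also, an anonymous-clique parent tag for $z$ is only admissible once $K$ has been seeded, i.e.\ by round $\tau(z)-1$.

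The main obstacle is the structural content behind fact~(ii), i.e.\ the mechanism lemma: proving that an optimal protocol can be normalised so that it is completely determined by the bounded profile above. Concretely, the delicate step is to show that greedy-earliest filling of the clique loses no generality even in the presence of the interleaved modulator interactions and the type constraints --- that reserving the right clique vertices for later injections and diversions never costs an extra round --- together with establishing the exact round bound $k + \lceil\log_2 n\rceil$ of fact~(i). Once this canonical form is available, correctness of the enumerate-then-verify procedure is immediate, and combining the $2^{\bO(k^2)} n^{\bO(1)}$ profile count with the polynomial-time verification yields the claimed running time.
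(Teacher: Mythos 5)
Your overall strategy matches the paper's: bound the number of rounds by $k+\lceil\log_2 n\rceil$ (\cref{lem:dtc:bound}), enumerate a $2^{\bO(k^2)}n^{\bO(1)}$-size profile of how the modulator is informed, and reconstruct a protocol per profile in polynomial time via a flow that routes clique vertices by their neighborhood classes in $X$. The round-count arithmetic and the $2^{\bO(k^2)}$ case distinction are also the same as in the paper.

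However, there is a genuine gap, and you have located it yourself: the entire content of \cref{lem:dtc:mechanism} --- that a profile-respecting protocol can be reconstructed without losing a round --- is asserted but not proven, and your leaner profile makes this unproven step strictly harder than what the paper actually establishes. Two concrete issues. First, your profile omits the count $c_v$ of vertices each $v\in X$ informs. The paper guesses these numbers precisely so that the clique-filling step becomes a \emph{static} max-flow (capacities $c'_v$ from modulator vertices to the $2^k$ neighborhood types), after which a round-by-round exchange argument against the reference protocol $P$ shows no round is lost. Without $c_v$, your verifier must decide \emph{when} each modulator vertex injects versus idles, jointly with which type it injects and with the unconstrained doubling inside the clique; your claim that ``greedy-earliest injection is optimal'' is not obviously true, because a doubling step or an unconstrained injector can consume the last uninformed neighbor of a modulator vertex whose clique-neighborhood is scarce, wasting a later injection that a non-greedy schedule would have preserved. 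This is a scheduling/matching-over-time problem, not the single static flow you describe. Second, your parent tag for an ``anonymous clique informer'' records neither its neighborhood class in $X$ nor whether two modulator vertices share the same clique informer; the paper's encoding of $a_v$ (by neighborhood class plus the set of modulator vertices it serves) and of $b_v$ is what lets the reconstruction place these informers consistently and count the diversions they cause. Both omissions could in principle be absorbed into a cleverer verifier, but that verifier and its optimality proof are exactly the missing core of the argument, so as written the proposal does not yet constitute a proof.
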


We start by formalizing and proving the two facts mentioned above:

\begin{lemmarep}
\label{lem:dtc:bound}
Let $G=(V,E)$ be a connected graph and $X \subseteq V$ with $|X|=k$ such that $G-X$ is a clique of size~$n$.
Then $b(G,s) \leq k + \lceil\log_2 n\rceil$ for all $s \in V$.
\end{lemmarep}

\begin{proof}
We describe a broadcast protocol that informs every vertex over three distinct phases and within the stated time. %
We remark that vertices may remain idle in some rounds, even if they have uninformed neighbors.

The first phase of the strategy is to inform a vertex of the clique.
Let $\ell \geq 0$ be the minimum number of rounds needed to inform a vertex of the clique (closest to $s$),
and let $s' \in V \setminus X$ be that vertex (if $s \in V \setminus X$, then $\ell=0$ and  $s'=s$).
Since we choose $\ell$ to be minimum, at least $\ell$ vertices in $X$ are informed through this process, since the vertices informed before $s'$ are in $X$.

The second phase is to inform all the vertices in the clique, which takes $\lceil\log_2 n\rceil$ rounds, as each round doubles the number of informed vertices in $V\setminus X$.
Since $s' \in V\setminus X$ is informed at time $\ell$, by round $\lceil\log_2 n\rceil + \ell$ every vertex in the clique is informed.

The last phase is to inform the remaining at most $k-\ell$ vertices in $X$ in $k-\ell$ rounds, which can be done by informing one vertex per round using the fact that $G$ is connected.

In total, this strategy finishes in  $\ell + \lceil\log_2 n\rceil + (k-\ell) = k + \lceil\log_2 n\rceil$ rounds.
\end{proof}

The lemma below formalizes the process of recreating a solution, that is, given restricted information about a broadcast protocol, computing a protocol that matches that information.
For simplicity of presentation, the lemma lists in compact form the information that is needed about the broadcast protocol,
and we provide additional explanation below the lemma.

\begin{lemmarep}
\label{lem:dtc:mechanism}
Let $(G,s)$ be an instance of {\TB}, 
$X$ be a modulator of $G$ of size $k$, and 
$P$ be a broadcast protocol.
We can construct in polynomial time a broadcast protocol $P'$ with the same number of rounds as $P$, given the following information about $P$:
\begin{enumerate}
	\item $t$, the number of rounds of $P$;
	\item for each $v \in X$, the round $\tau_v$ in which $v$ is informed;
	\item for each $v \in X$, the vertex $a_v$ that informs it (its \emph{informer}), where a vertex in $V\setminus (X\cup \{s\})$ is identified by its neighborhood in $X$ as well as the set of vertices in $X$ it informs;
	\label{lem:dtc:mechanism:3}
	\item for each informer $a_v$, $v \in X$, its informer $b_v$ if it is in $X$, or $b_v = \phi$ otherwise (representing either an informer in $V\setminus X$ or that $a_v = s$);
	\label{lem:dtc:mechanism:4}
	\item for each $v \in X$, the number $c_v$ of vertices that it informs,
	that is, how many times it transmits the message to an uninformed vertex.
	\label{lem:dtc:mechanism:5}
\end{enumerate}
\end{lemmarep}

For Point \ref{lem:dtc:mechanism:3}, we do not need to know the exact vertex of the clique that informs a vertex $v \in X$,
as there can be many clique vertices that are functionally equivalent.
Instead, we identify vertices in $V\setminus (X\cup \{s\})$ by their neighborhood in $X$;
if multiple vertices in $X$ share an informer, the first (in an arbitrary order) encodes it by its neighborhood, while the following simply indicate that they have the same informer as a previous vertex.
We can thus encode the informer of $u \in X$ using a number $i_u$ from $0$ to $2^k+2k$ as follows. %
Take an arbitrary ordering of $X = \{v_0, \ldots, v_{k-1}\}$. %
If $i_u < 2^k$, $a_u$ is a (yet unused) vertex with neighborhood in $X$ given by the binary representation of $i_u$;
if $2^k \leq i_u < 2^k+k$, $a_u$ is the vertex of $X$ with index $i_u-2^k$ (i.e.~$v_{i_u-2^k}$); %
if $2^k+k \leq i_u < 2^k+2k$, $a_u$ is the informer of the vertex of $X$ with index $i_u-2^k-k$, that is, $a_u = a_w$ with $w = v_{i_u-2^k-k}$;
and if $i_u = 2^k+2k$, then $a_u=s$.
For Point \ref{lem:dtc:mechanism:4}, we need to know the informer $b_v$ of each $a_v$;
however, if $b_v$ is in $V \setminus X$, the actual vertex is irrelevant, and thus we simply notate this case as $b_v = \phi$; we notate it similarly if $a_u=s$, in which case it has no informer.

Let us see how this lemma can be applied to solve the problem.

The algorithm starts by computing a modulator $X$ of size at most $k$.
We remark that a modulator of $G$ is a vertex cover in $\bar G$, and thus the minimum modulator can be computed by employing any FPT algorithm for \textsc{Vertex Cover} (see e.g.~\cite{tcs/ChenKX10,stacs/HarrisN24}) on $\bar G$.

Then, it simply enumerates the information required by \cref{lem:dtc:mechanism}; for each possibility it uses the procedure in the lemma to obtain a protocol.
If there is a feasible solution to the problem, \cref{lem:dtc:mechanism} guarantees that the algorithm obtains a valid protocol when it uses the correct information, and thus the algorithm is correct.

As to the running time, it is dominated by the enumeration of all the options for \cref{lem:dtc:mechanism}: 
	there are $\lceil\log_2 n\rceil + k$ possibilities for $t$ by \cref{lem:dtc:bound}, and 
	the same number of possibilities for each $\tau_v$, and each $c_v$; 
	for each $a_v$, there are $2k + 2^k$ options, corresponding to the vertices in $X$, their informers (if vertices share an informer), or a new vertex from one of each $2^k$ neighborhood classes;
	for each $b_v$, there are $k+1$ choices.
Thus, the total number of possibilities is
\[
	(\lceil\log_2 n\rceil + k)^{2k+1} \cdot (2k + 2^k)^k\cdot (k+1)^k = (\lceil\log_2 n\rceil)^{k+1} \cdot k^{\bO(k)} \cdot 2^{k^2}.
\]

As the procedure in \cref{lem:dtc:mechanism} runs in polynomial time, the running time of the algorithm is FPT and bounded by $(2^k \log n)^{\bO(k)} n^{\bO(1)}$.
We can achieve the running time bound of $2^{\bO(k^2)}$ by case distinction:
if $\log n \leq 2^k$, then the bound follows by substitution;
otherwise, we substitute $k < \log \log n$ and get a running time of $2^{\bO((\log \log n)^2)} = 2^{\bO(\log n)} = n^{\bO(1)}$.

\begin{proof}[Proof of \cref{lem:dtc:mechanism}]
The idea of the proof and its construction is that clique vertices are mostly interchangeable,
and thus once we know how and when each vertex in $X$ is informed, we can simply inform clique vertices whenever otherwise possible.
The two main details to consider are that an informer $a_v$ must be informed before round $\tau_v$,
and that each vertex in $X$ can only inform a subset of the vertices in $V\setminus X$,
and thus the vertices in $V\setminus X$ must be assigned appropriately.

We start with an empty protocol $P'=(T',\tau')$, where $T'=(V, \emptyset)$ is an empty tree, and then fill it in step-by-step. %
Let $A = \setdef{a_v}{v \in X} \setminus X$ be the set of clique vertices that inform vertices in $X$.
Let $R = V \setminus \{s\}$ be the set of yet uninformed vertices.

First, for each $a_v \in A$, we set $P'$ so that $a_v$ sends a message to $v$ at time $\tau_v$, and we remove $v$ from $R$. %
Formally, we add the edge $a_vv$ to $T'$ and set $\tau'(a_vv) = \tau_v$.
Note that, up to this point, we make no effort to have $P'$ connected, 
in the sense that informers are not yet necessarily informed.

The next step is to make sure that informers are themselves informed, 
by assigning them to $b_v$ if in $X$ or to a clique vertex if $b_v = \phi$. 
This step also makes sure that the protocol is connected, that is, it is feasible when restricted to the vertices that have received a message.
Let $v \in X$ be a vertex with minimum $\tau_v$ such that $a_v \in R$.
Find the earliest round $i < \tau_v$ such that a vertex $u$ is both idle and informed before round $i$; $u$ must equal $b(v)$ if $b(v) \neq \phi$, but can be any vertex $u \in V\setminus X$ otherwise.
Set $P'$ so that $u$ informs $a_v$ on round $i$, by adding $ua_v$ to $T'$ with $\tau'(ua_v) = i$, and remove $a_v$ from $R$.
Then, repeat this process until no more vertices remain in $A \cap R$; we later show that this is always possible if $P$ is a feasible protocol.

All that is left to do now is to add to $P'$ the remaining vertices in $R$, which are all clique vertices.
As we mentioned before, it is important to assign these vertices correctly, as some vertices of the modulator may be limited in which clique vertices they can inform.
To do this, we solve a maximum flow problem to assign clique vertices to each $v \in X$; any left over vertices are informed by other clique vertices.
As each vertex $v \in X$ informs $c_v$ many vertices, at this stage we assign it 
$c'_v := c_v - |\setdef{u \in X}{a_u=v}| - |\setdef{a_u}{u \in X, b_u = v}|$ vertices,
which excludes the vertices that it is already set to inform.

Let $q \in \{0,1\}^X$ be the indicator vector of a subset of $X$, and let $R_q$ be the subset of vertices of $R$ whose neighborhood corresponds to $q$.
Our instance of maximum flow is a graph with four layers (see \cref{fig:dtc:max-flow} for an example):
\begin{enumerate}
	\item A layer with a single vertex $\mathbf s$,
	\item A layer with vertices $h_v$ for every $v \in X$, with incoming arcs $(\mathbf s, h_v)$ of capacity equal to the number of vertices to be informed, $c'_v$;
	\item A layer with vertices $r_q$ for every $q \in \{0,1\}^X$; there are arcs $(h_v, r_q)$ with unbounded capacity, for every $v \in X$ and $q \in \{0,1\}^X$ such that $q(v) = 1$;
	\item A layer with a single vertex $\mathbf t$, with incoming arcs $(q, \mathbf t)$ of capacity $|R_q|$.
\end{enumerate}

\begin{figure}
\centering

\begin{tikzpicture}[scale=.75]
\tikzstyle{nd}=[fill=black, draw=none, shape=circle]
\tikzstyle{edge}=[-, thick]

\node [nd, label={left:$\mathbf{s}$}] (0) at (0, 0) {};

\node [nd, label={above:$v \in X$}] (1) at (5, 3) {};
\node [nd] (2) at (5,  0) {};
\node [nd] (3) at (5, -3) {};

\node [nd] (5) at (10, 3.5) {};
\node [nd] (6) at (10, 2.5) {};
\node [nd] (7) at (10, 1.5) {};
\node [nd] (8) at (10, 0.5) {};
\node [nd] (9) at (10, -0.5) {};
\node [nd] (10) at (10, -1.5) {};
\node [nd] (11) at (10, -2.5) {};
\node [nd] (12) at (10, -3.5) {};

\node [nd, label={right:$\mathbf{t}$}] (15) at (15, 0) {};

\draw [edge] (0) to[edge label={$c_v$}] (1);
\draw [edge] (0) to (2);
\draw [edge] (0) to (3);

\draw [edge] (1) to (5);
\draw [edge] (1) to (6);
\draw [edge] (1) to (7);
\draw [edge] (1) to (8);
\draw [edge] (2) to (5);
\draw [edge] (2) to (6);
\draw [edge] (2) to (9);
\draw [edge] (2) to (10);
\draw [edge] (3) to (5);
\draw [edge] (3) to (7);
\draw [edge] (3) to (9);
\draw [edge] (3) to (11);

\draw [edge] (5) to[edge label={$|R_q|$}] (15);
\draw [edge] (6) to (15);
\draw [edge] (7) to (15);
\draw [edge] (8) to (15);
\draw [edge] (9) to (15);
\draw [edge] (10) to (15);
\draw [edge] (11) to (15);
\draw [edge] (12) to (15);

\draw[very thick, dashed] (10,2) ellipse [x radius=1cm, y radius=1.8cm]; 
\node at (11.8,3.7) {$q : q(v)=1$};
\end{tikzpicture}
\caption{Example max-flow instance for $|X|=3$. Each vertex has $c'_v$ free rounds to inform clique vertices in its neighborhood, and a max-$\mathbf{s}$-$\mathbf{t}$-flow indicates how to assign clique vertices to $X$ so as to maximize the number of clique vertices informed by $X$.}
\label{fig:dtc:max-flow}
\end{figure}
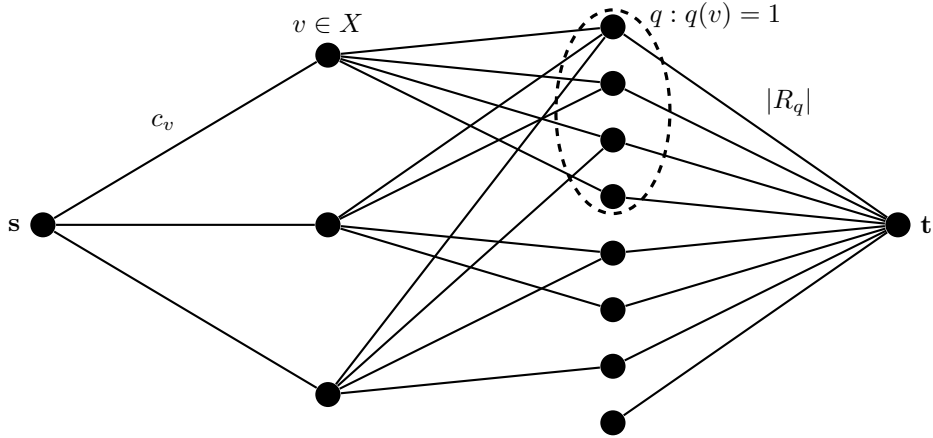

To obtain the assignment of clique vertices to vertices in $X$, it suffices to obtain a maximum $\mathbf{s}$-$\mathbf{t}$-flow $f$; the number of vertices with neighborhood given by $q$ informed by $v \in X$ is equal to $f_{v,q}$.
For the remaining unassigned vertices, we assign them greedily (and iteratively) to the earliest possible round in which an informed vertex is idle.
We remark that if any edge $(\mathbf{s}, h_v)$ is not saturated ($f_{\mathbf{s}, h_v} \neq c'_v$), then there is no protocol $P'$ that satisfies the properties given in the lemma.

The following claims show that the procedure is correct.

We say that a vertex $v$ is idle at time $i$ if 
$v$ is connected to $s$ in $T'$,
$i > \min_{uv \in E(T')} \tau'(uv)$, and
$i \not\in \setdef{\tau'(uv)}{uv \in E(T')}$,
i.e.\ it is not currently assigned to transmit the message to another message at round $i$.
The procedure can be seen formally as follows.

\begin{enumerate}
\item The first stage of the procedure corresponds to adding an edge $a_vv$ and setting $\tau'(a_vv)=\tau_v$, for each $v \in X$. %
\item In the second stage, we add edges to $T'$ to make sure that each vertex $a_v$ is informed, that is, connected to $s$ in $T'$. %
For this purpose, we repeatedly take an $a_v$ in $A \cap R$ with minimum $\tau_v$, and find a vertex to inform it: %
if $b_v \neq \phi$, we simply find the earliest round $i<\tau_v$
where $b_v$ is idle and take $u=b_v$; %
if $b_v = \phi$, we find the earliest round $i<\tau_v$ with an idle $u \in V \setminus X$ that is informed by then;
In both cases, we add an edge $ua_v$ to $T'$ and set $\tau'(ua_v) = i$. %
\end{enumerate}

\begin{claim}
\label{claim:dtc:mechanism:1vtx}
In $P'$, each vertex transmits the message to at most one other vertex per round.
\end{claim}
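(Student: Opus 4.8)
The plan is to fix an arbitrary vertex $u \in V$ and show that the rounds at which $u$ transmits a message in $P'$ are pairwise distinct. Since the claim concerns transmissions only, it suffices to inspect every edge of $T'$ for which $u$ is the sender (the parent). Such edges are introduced in the three stages of the construction, so I would group $u$'s transmissions accordingly and prove two things: (a) the transmissions placed in the first stage already carry distinct timestamps, and (b) every transmission added in the later stages lands on a round that avoids all rounds used so far. Combining these gives that the full list of $u$'s transmission rounds is distinct, which is exactly the claim.

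For the first stage, recall that we add the edge $a_v v$ with $\tau'(a_v v) = \tau_v$ for each $v \in X$. The only way $u$ transmits more than once in this stage is if $u = a_v = a_{v'}$ for two distinct $v, v' \in X$, so I would argue that in this case $\tau_v \neq \tau_{v'}$ by appealing to the validity of the original protocol $P$. The recorded informer identity (the neighborhood in $X$ together with the set of $X$-vertices the informer informs, or an index into $X$) guarantees that $a_v = a_{v'}$ holds in $P'$ precisely when a single vertex of $P$ informs both $v$ and $v'$; moreover, decoding an informer as a ``yet unused'' clique vertex prevents two genuinely different vertices of $P$ from being merged into one. Since $P$ obeys the one-transmission-per-round rule, that single vertex informs $v$ and $v'$ at the distinct rounds $\tau_v$ and $\tau_{v'}$, as needed.

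For the second and third stages, I would proceed by induction over the edges in the order they are inserted, using \emph{idleness} as the controlling invariant. Every edge added in these stages transmits from a vertex $u$ at a round $i$ that was chosen exactly because $u$ is idle at time $i$ in the current tree $T'$, and by definition idleness forces $i \notin \{\, \tau'(uw) \mid uw \in E(T') \,\}$. Hence $i$ differs from every timestamp already attached to an edge incident to $u$, in particular from all transmission rounds assigned to $u$ in the first stage and in all earlier insertions of the second and third stages (both the flow-directed transmissions, scheduled at a vertex's idle rounds, and the greedy assignment of leftover clique vertices). Because timestamps are never modified after insertion, each newly chosen transmission round for $u$ is fresh, so no two of $u$'s transmissions collide.

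The main obstacle is the stage-one argument of the second paragraph: one must carefully invoke the semantics of the informer encoding to be certain that two matching records ``$a_v = a_{v'}$'' really originate from the same vertex of $P$, so that distinctness of $\tau_v$ and $\tau_{v'}$ is inherited from the validity of $P$, rather than from two distinct clique vertices that merely happen to share a neighborhood in $X$. Once this point is settled, the remaining stages are routine, since idleness directly supplies the freshness of each assigned round.
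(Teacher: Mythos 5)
Your proposal is correct and follows essentially the same route as the paper's proof: the first stage reduces to the feasibility of $P$ via the guarantee that distinct informers in $P$ remain distinct in $P'$ (which is exactly the point your "obstacle" paragraph resolves through the informer encoding), and the later stages rely on idleness/freshness of each newly assigned round. The only cosmetic difference is that the paper justifies the max-flow stage by counting ($c'_v$ vertices spread over $c'_v$ free rounds) rather than by your uniform idleness invariant, but both arguments establish the same fact.
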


\begin{claimproof}
In the first stage of the algorithm, we simply copy from $P$ the edges $a_vv$ and time-stamp $\tau'(a_vv)$ according to the information of \cref{lem:dtc:mechanism}. 
Notice that if $P$ uses different informers for two vertices in $X$, then so does $P'$.
Thus, by the feasibility of $P$, each vertex can only inform one other per round.

For the second stage, by construction, we only pick a vertex $u$ to inform another if it is idle on that round, and thus the claim holds after this stage.

Finally, for the last stage, the max-flow assignment has a limit of at most $c'_v$ vertices for each $v$ to inform, which corresponds to the information of \cref{lem:dtc:mechanism}. %
Thus, each $v \in X$ can inform one vertex per round in the $c'_v$ rounds.
The remaining clique vertices are assigned greedily to idle vertices, thus ensuring that the property is satisfied.
\end{claimproof}

\begin{claim}
\label{claim:dtc:mechanism:complete}
The procedure completes successfully.
\end{claim}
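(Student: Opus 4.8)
The plan is to show that none of the three stages of the procedure gets stuck, and that the object it returns is a valid broadcast protocol for $(G,s)$ using exactly $t$ rounds. Stage~1 is unconditional: we only add the edges $a_vv$ with timestamps $\tau_v$ copied from the given information, and \cref{claim:dtc:mechanism:1vtx} already guarantees that this creates no two equally-timed forwardings out of a single vertex. The real content lies in showing that Stage~2 always finds an admissible informer for every $a_v \in A$, and that the max-flow routine of Stage~3 saturates all arcs leaving $\mathbf s$ so that every remaining clique vertex is eventually informed. Throughout, I would maintain the invariant that, immediately after a vertex $w$ is handled, $w$ is informed in $P'$ no later than it is informed in $P$; this ``$P'$ dominates $P$'' invariant is what lets me transfer feasibility from $P$ to $P'$.

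For Stage~2 I would process the clique informers $a_v \in A$ in non-decreasing order of $\tau_v$ and argue inductively that the informer required for $a_v$ is informed and idle at some round $i < \tau_v$. If $b_v \in X$, I first note that in $P$ the vertex $b_v$ forwards to $a_v$ strictly before round $\tau_v$, whence $\tau_{b_v} < \tau_v$ and, by Stage~1, $b_v$ is already informed at round $\tau_{b_v}$ in $P'$. It then remains to exhibit a free round of $b_v$ inside $[\tau_{b_v}+1,\tau_v-1]$. Here I would use a greedy-exchange (equivalently, Hall-type) argument: the set of forwardings that the given data forces $b_v$ to perform before round $\tau_v$ (to the modulator vertices $w$ with $a_w=b_v$, and to the earlier-processed clique informers $a_{v'}$ with $b_{v'}=b_v$, together with $a_v$ itself) is a subset of the forwardings $b_v$ performs before $\tau_v$ in $P$, and $P$ schedules all of these at distinct rounds of $[\tau_{b_v}+1,\tau_v-1]$; since our greedy rule places them at the earliest idle rounds, it never uses more rounds than $P$ does, so a slot for $a_v$ survives. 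If $b_v = \phi$ and $a_v \neq s$, the informer may be any clique vertex, and I would invoke the abundance of the clique together with the early-reachability argument of \cref{lem:dtc:bound} to guarantee an informed idle clique vertex before $\tau_v$; if $a_v = s$ there is nothing to do.

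For Stage~3, the remaining uninformed vertices are all clique vertices, and I would show the flow network of \cref{fig:dtc:max-flow} admits a flow saturating every arc $(\mathbf s,h_v)$. The certificate is again $P$ itself: the assignment in $P$ of which clique vertex is informed by which $v \in X$ respects both the neighborhood constraint (each such clique vertex lies in $N(v)$, hence in a class $R_q$ with $q(v)=1$) and the capacities $c'_v$, so it is a feasible integral flow of value $\sum_{v} c'_v$; therefore the maximum flow saturates all source arcs and the desired assignment exists. The clique vertices not assigned to any $v \in X$ are informed greedily at the earliest round an informed clique vertex is idle, which succeeds within the budget because the clique is complete and the number of informed clique vertices can double each round, exactly as in the second phase of \cref{lem:dtc:bound}.

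Finally I would check that the assembled $P'$ is genuinely a protocol: every vertex receives the message strictly before any round in which it forwards (each informer is chosen to be already informed, and each $a_v$ is placed at a round $i<\tau_v$), so on the spanning tree $T'$ the edge by which a vertex is informed carries the minimum timestamp among its incident edges, giving a proper time-stamped protocol; combined with \cref{claim:dtc:mechanism:1vtx} this makes $P'$ valid, and the domination invariant keeps all timestamps within $[1,t]$. I expect the crux --- and the step most in need of care --- to be the free-round guarantee in Stage~2, since it is precisely there that the greedy construction must be reconciled with the global schedule of $P$; the Stage~3 saturation, by contrast, reduces cleanly to reading a feasible flow off of $P$.
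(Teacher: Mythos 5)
Your overall strategy --- using the given protocol $P$ as a scheduling witness and maintaining an invariant that $P'$ informs each processed vertex no later than a suitably modified $P$ does --- is the same as the paper's, which implements it via explicit ``partial switches'' on $P$ (after first relabelling interchangeable clique vertices so that the set $A$ of informers is the same in $P$ and $P'$) so that the modified $P$ agrees with $P'$ on every informer processed so far. Your earliest-fit/Hall-type argument for the case $b_v\in X$ is sound and is essentially the paper's induction specialized to that case: the forwardings the data forces $b_v$ to make before round $\tau_v$ all appear in $P$ within $[\tau_{b_v}+1,\tau_v-1]$, so the greedy placement cannot run out of slots. (Your Stage~3 material belongs more to \cref{claim:dtc:mechanism:info} than to this claim, but the saturation-by-reading-a-flow-off-$P$ argument is fine.)

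The gap is in the case $b_v=\phi$ with $a_v\neq s$. You justify the existence of an informed idle vertex of $V\setminus X$ before round $\tau_v$ by ``the abundance of the clique'' and the reachability argument of \cref{lem:dtc:bound}. Neither applies: \cref{lem:dtc:bound} describes a different, specially constructed protocol and says nothing about the partial protocol $P'$ being built, and uninformed clique vertices --- however abundant --- cannot serve as informers. At this point of Stage~2 the only informed vertices of $V\setminus X$ in $P'$ are $s$ (if $s\notin X$) and the previously processed elements of $A$, whereas in $P$ the vertex $a_v$ may well be informed by a clique vertex that $P'$ will not inform until the final stage. This is precisely what the paper's partial-switch step is for: it exchanges the roles of two clique vertices in $P$ --- legal because the displaced vertex has informed no vertex of $X$ up to that round, hence only interchangeable clique vertices --- so that in the modified $P$ the informer of $a_v$ coincides with one that is already available in $P'$, and the free round is then read off the modified $P$. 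Without such an exchange argument your domination invariant does not by itself produce an informed idle informer in this case, so the proof as written does not go through.
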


\begin{claimproof}
We will start by reorganizing some parts of $P$ (for the purpose of the analysis) so that they match $P'$. %
First, we remark that we can exchange the role of two clique vertices completely, including when they were informed, as well as which vertices they inform and when, as long as they have the same neighborhood. %
Thus, we can assume that the vertices performing the role of informers, i.e.\ in the set $A$, are the same in $P$ and $P'$.

Next, we will modify $P$ such that the vertices in $A$ are informed at the same time as in $P'$.
For that purpose, we will execute partial switches between two vertices as follows: 
let $u$ and $u'$ be two vertices and $i\geq i'$ be the times they are informed (in $P$), respectively,
such that $u'$ does not inform any vertices in $X$ before time $i$;
then, we can exchange $u$ and $u'$ so that $u$ is informed at time $i'$ (by the node that informed $u'$), $u'$ is informed at time $i$ (by the node that informed $u$), and $u$ takes over transmitting the messages of $u'$ before time $i$.
Since before round $i$, $u'$ only informed vertices in $V \setminus X$, this exchange is possible.

Now, consider the vertices of $A$ in the order of processing in the second stage.
We remark that this order is non-decreasing on the values $\iota_u = \min \setdef{\tau_v}{v \in X, a_v = u}$, the first time that a vertex $u$ informs a vertex in $X$.
For each $u \in A$ in this order, let $\tau$ be the time it is informed in $P$, $\tau'$ be its information time in $P'$, and $u'$ be the vertex in $P$ that occupies the same position as $u$ in $P'$, that is, the vertex in $P$ informed at time $\tau'$ by the informer of $u$ in $P'$.
If $u$ and $u'$ are the same vertex, there is nothing to do.
Otherwise, $u'$ cannot come before $u$ in the order, as then it would have already been processed and placed in the same position as in $P'$.
Thus, since $u'$ is either not in $A$ or after $u$ in the order, it cannot inform any vertex in $X$ before time $\tau \leq \iota_u \leq \iota_{u'}$.
Therefore, we can apply a partial switch to $u$ and $u'$ so that $u$ is informed at time $\tau'$ by the same informer as in $P'$.

This shows that the process in the second stage must complete successfully:
by induction, assuming that every vertex $u' \in A$ with $\iota_{u'} < \tau-1$ is informed by the same vertex and in the same round as $P$, then the vertex $a_v$ with minimum $\tau_v \geq \tau$ can be informed at the latest when it is informed in $P$, which by an exchange in $P$ means that it can be informed by the same vertex and in the same round as $P$.
Thus the process completes successfully.
\end{claimproof}

\begin{claim}
\label{claim:dtc:mechanism:causal}
In $P'$, each vertex only informs others after it is informed.
\end{claim}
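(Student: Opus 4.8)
The plan is to establish that $P'=(T',\tau')$ satisfies the defining property of a broadcast protocol stated in \cref{sec:preliminaries}: once \cref{claim:dtc:mechanism:1vtx} guarantees that $\tau'$ is a proper edge-coloring, it remains to show that for every non-source vertex the edge to its parent in $T'$ carries the minimum time-stamp among its incident edges, which is exactly the assertion that a vertex never forwards the message before receiving it. Since the round in which a vertex $w$ is informed equals $\min_{uw \in E(T')} \tau'(uw)$, I would verify for each inserted edge $vw$, with $v$ the informer, that $\tau'(vw)$ strictly exceeds the informing round of $v$. I would organize this check according to the three stages in which edges are added.

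For stages two and three the causality is immediate from the construction. Every stage-2 edge $u a_v$ is inserted with time-stamp $i$ only when $u$ is idle at time $i$, and idleness by definition requires $u$ to be connected to $s$ in $T'$ with receiving round strictly below $i$; moreover, because the vertices $a_v$ are processed in non-decreasing order of $\tau_v$ and we always take $i<\tau_v$, the value $i$ is the smallest time-stamp incident to $a_v$, so $a_v$ is informed exactly at round $i$ and forwards to $v$ only later at $\tau_v>i$. Likewise, every stage-3 edge is created either through the max-flow assignment or greedily, but in both cases only at a round where the informer is idle, hence already informed; so these edges are causal as well.

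The delicate point, and the main obstacle, is the first stage, where the edges $a_v v$ are inserted with $\tau'(a_v v)=\tau_v$ before any informer has been attached to $s$. Here I would split on the informer $a_v$. If $a_v\notin X$ (a clique vertex or the source $s$), then $a_v$ obtains its receiving round only in stage~2, where it is set to some $i<\tau_v$, so causality at $a_v$ follows from the stage-2 analysis. If instead $a_v\in X$, then $a_v$ receives the message in stage~1 at round $\tau_{a_v}$, and I would appeal to the feasibility of the input protocol $P$: there the informer $a_v$ of $v$ is necessarily informed before $v$, so $\tau_{a_v}<\tau_v$, and since both time-stamps are copied verbatim from $P$ this inequality survives in $P'$. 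Finally, I would confirm that for each $v\in X$ the value $\tau_v$ really is the minimum time-stamp on its incident edges: every other incident edge is one along which $v$ forwards the message, carrying a strictly larger time-stamp either by the causality of $P$ (when $v$ informs another modulator vertex) or by the idleness condition (when $v$ informs a clique vertex in a later stage). Collecting these facts over all edges of $T'$ yields that $P'$ is causal.
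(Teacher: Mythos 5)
Your proposal is correct and follows essentially the same route as the paper: both arguments reduce causality to the idleness condition (an informer is only selected at a round at which it is already connected to $s$ and strictly after its own receiving round) together with the round orderings $i<\tau_v$ and, for modulator informers, $\tau_{a_v}<\tau_v$ inherited from the feasibility of $P$. The paper packages this as an induction on the stage-two processing order showing that $T'$ becomes a tree rooted at $s$ (explicitly invoking \cref{claim:dtc:mechanism:complete}), whereas you verify each inserted edge directly, but the substance is identical.
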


\begin{claimproof}
We now show that, after the second stage, and since that the procedure completes (by \cref{claim:dtc:mechanism:complete}), 
$T'$ is a tree rooted at $s$ plus some isolated vertices, which implies that a vertex only informs others if it is informed itself. %

Let $T_s$ be the tree rooted at $s$ in $T'$. %
The proof is by induction on the minimum time $\tau_v$ of an $a_v \in R$, %
and the induction hypothesis states that, when a vertex $a_v \in A \cap R$ is considered, 
every vertex $u \in A \cup X$ informed before $\tau_v$ is already connected to $s$ at that point of the algorithm.
Note that any vertex in $X$ is automatically in $T_s$ if $a_v$ is as well, and thus we focus the proof on the vertices in $A$.

When the first $a_v$ is picked, any vertex with label less than $\tau_v$ has to be in $T_s$, as it is either $s$ or a vertex informed by $s$; in this case, $a_v$ is either informed by $b_v$, which must be in $T_s$, or by a vertex in $V \setminus X$, which must be the root. %
In the general case, its informer $u$ is chosen as having a label $\tau'(u) < \tau_v$, and thus by induction is in the tree $T_s$; as a consequence, every vertex in the subtree of $T'$ rooted at $a_v$ is now part of $T_s$.
If there are multiple vertices $a_v \in A \cap R$ with the same value of $\tau_v$, they are handled in arbitrary order, and all added to $T_s$ by the argument above.
In any case, induction follows, and thus the vertices of $A \cup X$ are contained in $T_s$.

Notice that, by construction, we only pick a vertex $u$ to transmit the message a vertex $a_v$ if it is in $T_s$, and after  it is informed. 
Similarly, we pick a round $i$ to inform $a_v$ that is before $\tau_v$, so that $v$ is informed after $a_v$ is.
Overall, after the second stage, the claim holds for the vertices of $T_s$.

For the last stage, the claim also holds, since the assignments obtained from the max-flow problem assign $c'_v$ vertices for each $v$ to inform, allowing all of its assigned vertices to be informed by $v$ after time $\tau_v$. %
The remaining clique vertices are informed by vertices which are, by construction,  already informed, thus maintaining the claim.
\end{claimproof}

\begin{claim}
\label{claim:dtc:mechanism:info}
The protocol $P'$ respects all of the information of $P$ (in \cref{lem:dtc:mechanism}).
\end{claim}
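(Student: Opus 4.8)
The plan is to verify the five items of \cref{lem:dtc:mechanism} one at a time against the three-stage construction of $P'$, since the earlier claims already guarantee that $P'$ is a legal protocol (at most one transmission per vertex per round by \cref{claim:dtc:mechanism:1vtx}, causality by \cref{claim:dtc:mechanism:causal}) that is fully assembled (by \cref{claim:dtc:mechanism:complete}). So what remains for this claim is the bookkeeping that the assembled protocol actually matches the encoded information.

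Items 2, 3 and 4 I would read off directly from the construction. In the first stage we insert, for every $v \in X$, the edge $a_v v$ with $\tau'(a_v v) = \tau_v$; hence $v$ is informed at round $\tau_v$ (item 2) by its prescribed informer $a_v$ (item 3). The identification required by item~\ref{lem:dtc:mechanism:3} is preserved because we copy the informer structure verbatim: two vertices of $X$ that share an informer in the encoding share one in $P'$, and a freshly used informer is taken with exactly the prescribed neighborhood in $X$. For item 4, the second stage informs each $a_v$ using $u = b_v$ whenever $b_v \in X$, and using an arbitrary already-informed clique vertex when $b_v = \phi$, which is consistent with the encoding.

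The substantive point is item~\ref{lem:dtc:mechanism:5}, the forwarding count $c_v$ of each $v \in X$. I would argue that the vertices informed by $v$ in $P'$ split into three groups: (i) the vertices $u \in X$ with $a_u = v$, informed in stage~1; (ii) the informers $a_u \in A$ with $b_u = v$, informed in stage~2; and (iii) the clique vertices routed to $v$ by the max-flow in stage~3. Groups (i)--(iii) are pairwise disjoint because (i) lies in $X$, while (ii) consists of informers already removed from $R$ before stage~3 and (iii) of non-informer clique vertices. By the definition $c'_v = c_v - |\setdef{u \in X}{a_u = v}| - |\setdef{a_u}{u \in X, b_u = v}|$, it then suffices to show that the flow saturates the arc $(\mathbf s, h_v)$, so that group (iii) has size exactly $c'_v$ and the total is $c_v$. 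Saturation follows because $P$ itself yields a feasible assignment of clique vertices to the vertices of $X$ obeying the neighborhood constraints, so the maximum flow value equals $\sum_v c'_v$ and every source arc is saturated.

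Finally, for item 1 I would check that $P'$ uses rounds only in $[t]$: all $\tau_v \le t$, every informer is informed strictly before the corresponding $\tau_v \le t$, and each $v \in X$ forwards to exactly $c_v$ vertices, which $P$ realizes within the window $[\tau_v + 1, t]$; the greedy placement of the leftover clique vertices fits in the remaining idle slots precisely because $P$ demonstrates a feasible schedule within $t$ rounds. I expect the main obstacle to be concentrated in item~\ref{lem:dtc:mechanism:5}: one must tie the purely combinatorial saturation of the flow (and the sufficiency of idle slots for the greedy step) back to the existence of the genuine protocol $P$, and carefully verify that the three groups of forwarded vertices are disjoint and sum to $c_v$, so that the count matches without double-counting the vertices forwarded in stages~1 and~2.
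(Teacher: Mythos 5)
Your proposal is correct and follows essentially the same route as the paper's proof: items 2--4 are read off directly from the first two stages, the count $c_v$ is obtained from the saturation of the arcs $(\mathbf{s},h_v)$ (which holds because $P$ itself induces a feasible flow of value $\sum_v c'_v$), and the timing of the remaining clique vertices is justified by comparing $P'$ round-for-round with the modified $P$ from \cref{claim:dtc:mechanism:complete}. Your explicit three-group decomposition of the vertices informed by each $v\in X$ is a slightly more detailed account of what the paper states as ``each vertex $v \in X$ informs the same number of vertices as in $P$,'' but it is the same argument.
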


\begin{claimproof}
The first stage of the procedure makes sure that vertices in $X$ are informed by the correct vertex (or at least, a vertex with the same neighborhood and informing the same subset of $X$), and at the right time.

It is also clear from the construction that each $a_v \in A$ is informed according to the information of \cref{lem:dtc:mechanism}, either by a clique vertex if $b_v = \phi$ or by $b_v$ otherwise.

Thus $P'$ respects all of the information of $P$ for vertices in $A \cup X$, and all that is left to show is that the remaining vertices are informed by time $t$.
Consider the protocol $P$ after the modifications in the proof of \cref{claim:dtc:mechanism:complete}.

By construction, the vertices assigned in the max-flow solution are all informed by time $t$, as they are informed by vertices in $X$, and each vertex $v \in X$ informs the same number of vertices as in $P$.

To show that the clique vertices in the final stage can be informed by time $t$, consider that before the assignments in the final stage, vertices in $A \cup X$ are occupied (i.e.\ inform some vertex) in the exact same rounds in $P$ and $P'$, due to the exchanges considered above.
Similarly, each vertex $v \in X$ informs others on the same rounds as in $P$, which are the earliest possible rounds after $\tau_v$ so that $c_v$ vertices are informed.

In $P'$, vertices in $\{s\} \cup A \cup X$ inform vertices of $V \setminus X$ in the same rounds as in $P$, and vertices in $V \setminus X$ can inform any other in the same set, so we conclude that if $P$ leads to every vertex being informed by time $t$, then so does $P'$.
\end{claimproof}
Combining the claims implies the correctness of the procedure, concluding the proof.
\end{proof}

\subsection{Algorithm for Parameter Treewidth plus Time}
\label{sec:treewidth+time}

In this section we show, via a dynamic programming algorithm, that {\TB} admits
a single-exponential FPT algorithm if we parameterize by both treewidth and the
target number of rounds $t$. Let us first give an informal high-level overview
of the main ideas.

The strategy of our algorithm will be to compute the proper edge-coloring which
assigns a time-stamp to each edge used by an optimal protocol. Intuitively, it
is natural that this leads to a complexity of the form $2^{t\cdot \tw}$, because
for each vertex $v$ of a bag we need to keep track of the set of colors which
have already been used on an edge incident on $v$. We therefore store a subset
of $[t]$ for each vertex of a bag. To facilitate the algorithm we will also
keep track of some further information, notably including the time at which a
vertex is informed by the protocol (but this only adds a further $t^{\tw}$
factor to the complexity). This extra information will allow us to infer the
direction in which each edge is supposed to be used (that is, which endpoint
transmits the message and which receives it) and verify that the protocol is
consistent (that is, each vertex starts transmitting the message after
receiving it).

\begin{theoremrep}\label{thm:fpt:twt}
    There is an algorithm which takes as input an $n$-vertex graph $G=(V,E)$, a
    vertex $s\in V$, an integer $t$, and a nice tree decomposition of $G$ of width $\tw$
    and determines whether $b(G,s)\le t$ in time $2^{\bO(t\cdot\tw)}n^{\bO(1)}$.
\end{theoremrep}

\begin{proof}

We reformulate the problem in a slightly more convenient way as follows: we
want to decide if there exists a set of edges $E' \subseteq E$, an orientation
$\vec{E}'$ of $E'$, and a coloring function $c \colon V \cup E' \to [0,t]$ with the
following properties:

\begin{enumerate}

\item All vertices other than $s$ have in-degree exactly $1$ in $\vec{E}'$,
while $s$ has in-degree $0$.

\item $c$ is a proper coloring of $E'$, that is, any two edges of $E'$ that
share an endpoint receive distinct colors.

\item It holds that $c(s) = 0$.

\item For all $v\in V\setminus\{s\}$, let $e = uv\in \vec{E}'$ be the unique arc
going into $v$. Then $c(v)=c(e)$.

\item For all $v \in V$, all arcs $e=vu\in\vec{E}'$ coming out of $v$ have $c(e)>c(v)$.

\end{enumerate}

\begin{claim}\label{claim:twt}
    If for an oriented subset of edges $\vec{E}'$ there exists a
    coloring $c$ satisfying the above, then $\vec{E}'$ forms an out-tree from $s$,
    that is, a digraph that contains a directed path from $s$ to all other vertices
    and whose underlying graph is a tree.
\end{claim}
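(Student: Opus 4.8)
The plan is to derive the out-tree structure from two consequences of the coloring constraints: strict monotonicity of the colors along every arc, and a one-line edge count. First I would combine Properties~4 and~5 to show that colors strictly increase along each arc. Concretely, consider an arc $e = vu \in \vec{E}'$. Property~1 forbids any arc entering $s$, so $u \neq s$, and hence $e$ is the unique arc entering $u$; Property~4 then gives $c(u)=c(e)$, while Property~5 gives $c(e)>c(v)$. Together these yield $c(u)>c(v)$ for every arc $v\to u$.

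Next I would establish that every vertex is reachable from $s$ by a directed path. Fix $v\neq s$. Since $v$ has in-degree exactly $1$ (Property~1), it has a unique predecessor, which carries strictly smaller color by the monotonicity just established. Following predecessors backward produces a sequence of vertices with strictly decreasing colors; the strict decrease forbids repetition, so the backward walk is simple and must terminate, necessarily at the unique vertex of in-degree $0$, namely $s$. Reversing this walk gives a directed $s$-$v$ path. In particular the underlying undirected graph of $\vec{E}'$ is connected.

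Finally I would count edges. Summing in-degrees over all vertices, $|E'| = \sum_{v\in V}\mathrm{indeg}(v) = 0 + (n-1)\cdot 1 = n-1$, using Property~1 together with the fact that $E'\subseteq E$ consists of distinct edges (so $|\vec{E}'|=|E'|$). A connected graph on $n$ vertices with exactly $n-1$ edges is a tree, so the underlying graph of $\vec{E}'$ is a tree; combined with directed reachability from $s$, this is precisely the definition of an out-tree rooted at $s$.

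I do not anticipate a serious obstacle, as the argument is essentially bookkeeping once monotonicity is in hand. The only step requiring a little care is the termination of the backward walk: here one must invoke the \emph{strict} color-decrease (rather than acyclicity, which is not yet available) to rule out revisiting a vertex. This is exactly what keeps the reachability argument self-contained, without presupposing the tree structure we are trying to conclude.
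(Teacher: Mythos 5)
Your proof is correct, but it is organized rather differently from the paper's. Both arguments hinge on the same core observation—combining Properties~4 and~5 to get $c(u)>c(v)$ for every arc $v\to u$ (the paper uses this only implicitly, to show that the minimum-color arc of a hypothetical directed cycle yields a contradiction). After that the two routes diverge. The paper proceeds structurally: it first establishes that $\vec{E}'$ is a DAG, then argues that the underlying graph is a forest by repeatedly deleting sinks (which have total degree at most $1$), and finally obtains reachability from $s$ by a counting argument on the set $S$ of unreachable vertices (each such vertex pulls its unique in-arc from within $S$, forcing $|S|$ edges on $|S|$ vertices and hence a cycle in the underlying graph). You instead get reachability first and directly, by walking backward along unique predecessors and using the \emph{strict} color decrease to guarantee termination at the unique in-degree-$0$ vertex $s$; connectivity plus the degree-sum count $|E'|=\sum_v \mathrm{indeg}(v)=n-1$ then immediately gives a tree. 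Your version is somewhat more economical—it replaces the paper's two intermediate structural facts (acyclicity and forest-ness) with one monotonicity lemma and a one-line edge count—and you correctly flag the one delicate point, namely that termination of the backward walk must come from the strict color decrease rather than from acyclicity or tree-ness, which are not yet available at that stage. Both proofs are complete and valid.
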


\begin{claimproof} First, observe that $\vec{E}'$ cannot contain a directed
cycle. Indeed, for the sake of contradiction, suppose it does, and let $e=uv$
be an arc of this cycle of minimum color. Then, by the fourth property and the
selection of $e$ we have $c(u)\ge c(e)$, which violates the fifth property. 

Hence, $\vec{E}'$ is a DAG where all vertices except $s$ have in-degree exactly
$1$. It must therefore have a forest as an underlying graph. To see this, we
observe that the digraph must contain a sink (as it is a DAG) and such a vertex
has total degree $1$. Repeatedly removing such vertices proves that the
underlying graph has no cycle. 

We conclude that $\vec{E}'$ must contain a directed path from $s$ to all
vertices. For the sake of contradiction, suppose that $S \subseteq V$ is the set of
vertices not reachable from $s$. Notice that $|S| \neq 1$, as otherwise the single
vertex belonging to $S$ has an incoming edge from a vertex that is reachable from $s$,
a contradiction. Thus, assume that $|S| \ge 2$. Then, $\vec{E}'$ includes at least $|S|$ arcs
with both endpoints in $S$ because vertices of $S$ have in-degree $1$ and no
arc coming in from the part of the graph reachable from $s$. But, if there are
at least $|S|$ arcs with both endpoints in $S$, then the underlying graph
cannot be a forest, contradicting the previous paragraph.  \end{claimproof}

Given \cref{claim:twt}, it is now not hard to see that we can reformulate the
problem of deciding whether there exists a broadcast protocol from $s$ in $t$
rounds to the problem of finding an appropriate oriented subset $\vec{E}'$ of
the edges and a coloring $c$ satisfying the mentioned properties. To see the
equivalence, observe that if we have a protocol, we can place into $E'$ all
edges that are used at some point to transmit the message and orient them
towards the receiver, while we can define $c$ to encode the time when an edge
was used or when a vertex first received the message. 
If the broadcast protocol
is non-redundant (that is, all vertices receive the message once) the properties are
satisfied. For the converse direction, given $\vec{E}'$ and $c$, we construct a
protocol by instructing each vertex $v$, for each arc $vu\in\vec{E}'$, to
transmit the message at time $c(vu)$ to vertex $u$. Because the coloring is
proper, no edge transmits to two neighbors at the same time; because out-going
arcs have higher color all vertices transmit the message after they are
supposed to receive it; and because of \cref{claim:twt}, all vertices
eventually receive the message.

We now formulate a DP algorithm which decides if there exist $\vec{E}',c$
satisfying the properties we described, assuming we are given a nice tree
decomposition of $G$ of width~$\tw$.  Let $\mathcal{T}$ be the given
decomposition, and for each node $x\in \mathcal{T}$ we have a bag $B_x$.  We
denote by $B_x^{\downarrow}$ the set of all vertices of $G$ that appear in the
bag $B_x$ or in some bag that is a descendant of $x$.  We will define the
signature of a solution by giving for each $v\in B_x$ a triplet containing the
following pieces of information: (i)~an integer in $[0,t]$ representing $c(v)$,
(ii)~a boolean variable indicating whether the other endpoint of the arc
incoming to $v$ is in $B_x^{\downarrow}$, (iii)~a set $S\subseteq [0,t]$ containing
all the colors assigned to arcs of $\vec{E}'$ coming out of $v$ whose other
endpoint is in $B_x^{\downarrow}$. There are at most $2(t+1)2^{t+1}$ possible
triplets for a vertex $v$ and since a signature of a solution in a bag $B_x$ is
just a concatenation of the at most $|B_x|$ triplets, there are at most
$(2(t+1)2^{t+1})^{\tw+1}=2^{\bO(t \cdot \tw)}$ possible signatures. 

Our DP algorithm now will construct a table for each node $x$ of the tree
decomposition. For each possible signature $\sigma$ we will store whether
$\sigma$ corresponds to the signature of a feasible partial solution in
$G[B_x^\downarrow]$. More precisely, we will say that $\sigma$ corresponds to a
feasible partial solution if and only if there exists an oriented subset
$\vec{E}'$ of edges of $G[B_x^{\downarrow}]$ and a coloring
$c \colon B_x^{\downarrow}\cup E'$ satisfying the following:

\begin{enumerate}

\item All vertices of $B_x^{\downarrow}$ have in-degree $1$ in $\vec{E}'$,
except $s$, which has in-degree $0$, and the vertices  $v\in B_x$ for which the
signature sets the boolean variable corresponding to $v$ to \texttt{false}, which also
have in-degree $0$.

\item $c$ is a proper coloring of $E'$.

\item For all vertices  $v\in B_x^\downarrow$ with in-degree $1$, if
$e\in\vec{E}'$ is the arc incoming to $v$, then $c(v)=c(e)$. Furthermore, all
arcs $e'=vu\in\vec{E}'$ coming out of $v$ have $c(e')>c(v)$.  

\item For all vertices of $v\in B_x$ and arcs $e=vu\in \vec{E}'$ coming out of
$v$ we have that $c(e)$ appears in the set of used colors stored in the third
component of the signature for $v$.

\item If $s \in B_x^{\downarrow}$ then $c(s)=0$.

\end{enumerate}

Observe now that for the root bag $r$ for which $B_r=\varnothing$ we only have
one signature to consider in our table (the empty signature). This will
correspond to a solution if and only if $\vec{E}',c$ exist satisfying our
desired properties, therefore if and only if a broadcast protocol of at most $t$ rounds exists, as
desired.

We now describe how the DP table can be populated from the bottom-up in a
straightforward way. For leaf nodes $x$ of the decomposition we have
$B_x^{\downarrow}=\varnothing$, so there is only one signature to consider (the
empty signature) and it always corresponds to a feasible partial solution.

For Introduce nodes $x$ with a child $x'$, we have $B_x=B_{x'}\cup\{v\}$ for
some $v\not\in B_{x'}$. For each signature $\sigma$ in the table for $B_{x'}$
that corresponds to a feasible partial solution, we will compute all the new
signatures that extend $\sigma$ and must be stored in the new table. To begin
with, our new table contains for each $\sigma$ from the table of $B_{x'}$, $t$
new signatures $\sigma_i, i\in[t]$, obtained by adding to $\sigma$ a triplet
for the new vertex $v$ that sets (i) $c(v)=i$, (ii) the Boolean variable for $v$
to \texttt{false}, (iii) the set of already used colors for $v$ to $\varnothing$. This
forms an initial table $T_0$.

We will now extend $T_0$ by considering one-by-one the edges $vu$ for each
$u\in B_x$ such that such an edge exists and exhausting all ways in which such
edges can be used.  Starting with a table $T_i$ for $B_x$, we consider a single
new edge, forming a larger table $T_{i+1}$. We continue in this way until all
edges have been considered and the final table is the table we store for $B_x$.

Suppose then that we take a signature $\sigma$ from the current table, and
$\sigma$ assigns to $u,v$ colors $c(u),c(v)$ respectively, the Boolean
variables indicating the in-degrees have values $b_u,b_v$, and the sets of used
colors are $S_u,S_v$. We first consider the possibility of using the edge $uv$
oriented towards $v$. This is only allowed if (i) $c(u)<c(v)$, (ii) $b_v$ is
\texttt{false}, (iii) $c(v)\not\in S_u$. If this is the case, we add to the table a new
signature that is the same as $\sigma$ except $b_v$ is now \texttt{true} and we have
placed $c(v)$ into $S_u$. Symmetrically, to consider the possibility of using
the edge $vu$ oriented towards $u$ we make the same check with the roles of
$u,v$ exchanged, and add the appropriate  modified signature after verification
of the conditions.

For Forget nodes $x$ with a child $x'$ such that $B_{x'}=B_x\cup\{v\}$ for
$v\not\in B_x$ we consider each signature $\sigma$ in the table of $B_{x'}$
such that $v$ has a Boolean variable assigned to \texttt{true}. For each such signature
we add to our table a new signature which is $\sigma$ restricted to $B_x$ (that
is, with the triplet for $v$ removed).

Finally, for Join nodes $x$ with two children $x_1,x_2$ such that
$B_x=B_{x_1}=B_{x_2}$ we consider each pair of signatures $\sigma_1,\sigma_2$,
one from each table and check if they can be combined to produce a signature in
the new table. We say that $\sigma_1,\sigma_2$ are \emph{compatible} if for all $v\in
B_x$ we have that (i) $\sigma_1,\sigma_2$ have the same color $c(v)$, (ii) at
most one of $\sigma_1,\sigma_2$ has the boolean variable for $v$ set to \texttt{true},
(iii) the sets of used colors for $v$ in $\sigma_1,\sigma_2$ are disjoint. If
$\sigma_1,\sigma_2$ are compatible, we form a new signature which for each $v$
(i) sets $c(v)$ to the same color as $\sigma_1,\sigma_2$, (ii) sets the boolean
variable for $v$ to \texttt{true} if at least one of $\sigma_1,\sigma_2$ does and to
\texttt{false} otherwise, (iii) sets the set of used colors for $v$ to be the union of
the sets given by $\sigma_1,\sigma_2$.

Correctness of the above algorithm can be verified by induction. Furthermore,
the algorithm runs in time polynomial in the size of the decomposition and the
DP tables, so we obtain the promised running time.  
\end{proof}

\section{Parameterized Approximation Results}
\label{sec:para-approx}

In this section, we present FPT approximation algorithms parameterized by clique cover number and cluster vertex deletion number. Both parameters are more general than distance to clique. The precise parameterized complexity for the two parameters remains unsettled.

\subsection{Clique Cover Number}\label{subsec:cliquecover}
For a graph $G$, a partition of its vertex set $V(G)$ into cliques is a \emph{clique cover}.
The minimum integer~$p$ such that $G$ admits a clique cover with $p$~cliques is the \emph{clique cover number}.
The clique cover number of a graph is equal to the chromatic number of its complement.
In this subsection, we show that the optimization version of {\TB} admits an FPT-AS
(i.e., an $f(p,\varepsilon)n^{O(1)}$-time $(1+\varepsilon)$-approximation algorithm for $\varepsilon>0$)
when parameterized by clique cover number $p$.

\begin{theoremrep}\label{thm:FPTAS:mmc}
    Given a connected $n$-vertex graph $G$, $s \in V(G)$, and a clique cover $Q_{1}, \dots, Q_{p}$ of $G$,
    one can find a broadcast protocol with at most $(1+\varepsilon) \cdot b(G,s)$ rounds in time $3^{2^{2 p/\varepsilon} p} \cdot n^{\bO(1)}$.
\end{theoremrep}

\begin{proof}
    If $|Q_{i}| \le 2^{2p/\varepsilon}$ for all $i \in [p]$, then $n \le 2^{2p/\varepsilon} \cdot p$.
    In this case, the $3^n n^{\bO(1)}$-time exact algorithm of Fomin et al.~\cite{tcs/FominFG24}
    runs in time $3^{2^{2p/\varepsilon} p} \cdot n^{\bO(1)}$.

    Otherwise, there exists a clique of size greater than $2^{2p/\varepsilon}$ in $G$, and thus $b(G,s) \ge \lceil\log \max_{i}|Q_{i}| \rceil > \log (2^{2p/\varepsilon})=  2p/\varepsilon$.
    For this case, we construct a two-phase broadcast protocol as follows.

    In the first phase, our goal is to inform at least one vertex in each clique $Q_{i}$.
    We claim that this can be done in at most $2p$ rounds.
    To see this, observe that given any set of currently informed vertices,
    if there exists a clique that contains no informed vertex,
    then there exists such a clique $Q_i$ that contains a vertex at distance at most $2$ from
    some informed vertex because $Q_{1}, \dots, Q_{p}$ are cliques.
    Therefore, we can make sure that a vertex of $Q_i$ receives the message in at most two rounds,
    increasing the number of cliques that contain an informed vertex.

    In the second phase, we broadcast the message independently in each clique $Q_{i}$ in parallel.
    This can be done greedily within at most $\lceil\log \max_{i}|Q_{i}| \rceil $ rounds.
    
    In total, the constructed broadcast protocol takes at most $2p + \lceil\log \max_{i}|Q_{i}| \rceil$ rounds.
    This proves the theorem as follows:
    \[
        \frac{2p + \lceil\log \max_{i}|Q_{i}| \rceil }{b(G,s)} \le \frac{2p + \lceil\log \max_{i}|Q_{i}| \rceil }{\lceil\log \max_{i}|Q_{i}| \rceil }
	\le 1 + \frac{2p}{2p/\varepsilon} 
        = 1 + \varepsilon.
	\qedhere
    \]
\end{proof}

\subsection{Cluster Vertex Deletion Number}\label{subsec:cvd}
For a graph $G$, a set $S \subseteq V(G)$ is a \emph{cluster vertex deletion} if each connected component of $G-S$ is a complete graph. The \emph{cluster vertex deletion number} of $G$ is the size of a minimum cluster vertex deletion set of $G$.

\begin{theoremrep}\label{thm:FPTAS:cvd}
    Given a connected graph $G$ with cluster vertex deletion number $\cvd$ and $s \in V(G)$,
    one can find a broadcast protocol with at most $(2+\varepsilon) \cdot b(G,s)$ rounds in time $f(\cvd,\varepsilon) \cdot n^{\bO(1)}$.
\end{theoremrep}

\begin{proof}
    Let $S_0$ denote a cluster vertex deletion set of $G-s$ with size $|S_0| \le \cvd$;
    one can compute such a set in time $2^{\bO(\cvd)} n^{\bO(1)}$~\cite{mst/BoralCKP16,mst/HuffnerKMN10,mst/Tsur21}.
    Set $S = S_0 \cup \{s\}$.
    Furthermore, we make $S$ connected by adding at most $2\cvd$ vertices from $V(G) \setminus S$.
    Let $k = |S| \le 3\cvd + 1$.
    If every component of $G-S$ is of order at most $2^{k/\varepsilon}$, the vertex integrity of $G$ is at most $k+2^{k/\varepsilon}$.
    Since {\TB} is fixed-parameter tractable parameterized by vertex integrity (\cref{thm:fpt:vi}),
    it can be solved exactly in time $f(k+2^{k/\varepsilon})n^{\bO(1)}$,
    where $f$ depends only on $\cvd$ and $\varepsilon$.

    Otherwise, there exists a connected component of order more than $2^{k/\varepsilon}$ in $G-S$.
    Let $C^*$ be a connected component of $G-S$ with the maximum order.
    In this case, any broadcast protocol needs at least $\lceil\log |V(C^*)|\rceil$ rounds, that is, $b(G,s) \ge \lceil\log |V(C^*)|\rceil$.
    We construct an approximate broadcast protocol as follows in polynomial time.
    First, we send the message from $s$ to all the vertices in $S$.
    Since $S$ is connected and $s \in S$, this process terminates within $k$ rounds.
    
    Next, we send the message from $S$ to a vertex in each component of $G-S$.
    Let $l$ be the minimum number of rounds by which every component of $G-S$ has at least one vertex that receives the message from $S$. 
    We can compute $l$ by solving the \textsc{$b$-Matching} problem as follows.
    Consider the bipartite graph $B = (S\cup D, E(B))$ obtained from $G$ by contracting each connected component $C_i$ of $G-S$ into a single vertex $d_i \in D$
    and then forgetting the edges connecting vertices in $S$.
    Now, $l$ is equal to the minimum $l' \in [n]$ such that $B$ has a $b$-matching of size $|D|$,%
    \footnote{Given a graph $G$ and a function $b \colon V(G) \to [n]$,
    a \emph{$b$-matching} of $G$ is a set of edges $M\subseteq E$ such that every vertex $v$ is incident with at most $b(v)$ edges of $M$.}
    where $b(v) = 1$ for $v \in D$ and $b(v) = l'$ for $v \in S$.
    Consequently, in order to compute $l$ it suffices to solve a polynomial number of instances of the \textsc{$b$-Matching} problem,
    which is polynomial-time solvable~\cite{ipl/Anstee87,talg/Gabow18}.
    In our broadcast protocol, we send the message from $S$ to $G-S$ by following the obtained $b$-matching $M$ whenever it is possible.
    Specifically, we arbitrarily order the edges in $M$ incident with a vertex $v$ in $S$ and send the message from $v$ in this order.
    Clearly, this step needs at most $l$ rounds.
    Note that any broadcast protocol needs at least $l$ rounds, that is, $b(G,s) \ge l$.
    Once a vertex in a component of $G-S$ receives the message, for the rest of the rounds the vertex forwards the message towards the rest of the vertices
    of the same component if possible. This can be done in parallel in each component and takes at most $\lceil\log |V(C^*)|\rceil$ rounds.

    The obtained broadcast protocol takes at most $k + l + \lceil\log |V(C^*)|\rceil$ rounds, while $b(G,s) \ge \max\{\lceil\log |V(C^*)|\rceil, \, l\}$.
    This implies an approximation ratio of $2 + \varepsilon$ as
    \[
	\frac{k + l + \lceil\log |V(C^*)|\rceil}{\max\{\lceil\log |V(C^*)|\rceil, \, l\}}
        \le 2 + \frac{k}{\log 2^{k/\varepsilon}} 
        = 2 + \varepsilon.
	\qedhere
    \]
\end{proof}

\section{Lower Bounds with Respect to Structural Parameters}\label{sec:structural_lb}

Here we present several results showcasing the relationship between the structure of the input graph
and the minimum number of rounds required by any broadcast protocol to complete.
On an intuitive level, graphs close to being stars or paths require a large
(that is, polynomial in the size of the graph)
number of rounds to broadcast the message,
and the aim of this section is to quantify this intuition and provide corresponding lower bounds with respect to
the tree-depth and the pathwidth of the input graph.
We start with \cref{lem:structural_lb:cc,lem:structural_lb:diameter_edge_case,lem:structural_lb:diameter},
which are essential building blocks in proving the main results of this section in \cref{thm:structural_lb:td,thm:structural_lb:pw_plus_length}.
We note that \cref{thm:structural_lb:pw_plus_length}
improves over an analogous lower bound recently obtained by Aminian et al.~\cite{arxiv/AminianKSS25}
and leads to \cref{corollary:structural_lb:pw,corollary:structural_lb:apx_ratio}.

\begin{lemma}\label{lem:structural_lb:cc}
    Given a connected graph $G$ and a non-empty set $S \subseteq V(G)$,
    for all $s \in V(G)$ it holds that
    $b(G,s) \ge |\cc(G-S)| / |S|$.
\end{lemma}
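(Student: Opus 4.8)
The plan is to fix an optimal protocol achieving $t = b(G,s)$ rounds and to show that the number of components of $G-S$ cannot exceed $t\cdot|S|$; dividing by $|S|$ then yields the claim. The starting observation is that every component $C \in \cc(G-S)$ with $s \notin V(C)$ can only be entered through $S$: since $C$ is a connected component of $G-S$, the only neighbors of $V(C)$ lying outside $C$ belong to $S$, so the first vertex of $C$ to be informed must receive the message directly from some vertex of $S$. Consequently each such component must be the target of at least one transmission originating in $S$.

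The second ingredient is a budget on the transmissions made by $S$. Since each vertex forwards the message to at most one neighbor per round, the vertices of $S$ perform at most $|S|$ transmissions per round, hence at most $t\cdot|S|$ in total. Because each transmission enters at most one component of $G-S$, the number of components entered from $S$ is at most $t\cdot|S|$. If $s \in S$, then every component is entered from $S$ and we immediately obtain $|\cc(G-S)| \le t\cdot|S|$.

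The one genuinely delicate point is the case $s \notin S$, where a naive count leaves an off-by-one slack: the component $C_0$ containing $s$ need not be entered from $S$, so the argument above only yields $|\cc(G-S)| - 1 \le t\cdot|S|$, i.e.\ $|\cc(G-S)| \le t\cdot|S| + 1$, which is just too weak for the stated bound. The fix is to sharpen the transmission budget using the delay incurred before $S$ becomes active: at the start of round $1$ the only informed vertex is $s \notin S$, so no vertex of $S$ can forward anything in round $1$. Thus $S$ performs transmissions only in rounds $2,\dots,t$, for a total of at most $(t-1)\cdot|S|$. Combining this with the fact that all but the single component $C_0$ must be entered from $S$ gives $|\cc(G-S)| - 1 \le (t-1)\cdot|S|$, hence $|\cc(G-S)| \le (t-1)\cdot|S| + 1 \le t\cdot|S|$ since $|S|\ge 1$.

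In both cases we reach $|\cc(G-S)| \le t\cdot|S| = b(G,s)\cdot|S|$, and rearranging gives $b(G,s) \ge |\cc(G-S)|/|S|$. One can phrase the refinement uniformly instead by charging each $u \in S$ only $t - r_u$ transmissions, where $r_u$ is the round in which $u$ is informed, and using $\sum_{u\in S} r_u \ge |S|-1$; but the round-$1$ argument above is the shortest route. The only subtlety to watch throughout is the timing convention — a vertex informed by the end of round $r$ may first forward the message in round $r+1$ — which is precisely what validates the round-$1$ observation and the per-vertex charge of $t-r_u$.
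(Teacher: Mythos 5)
Your proof is correct and follows essentially the same route as the paper's: both bound the number of components of $G-S$ that can be "entered" from $S$ by at most $|S|$ per round, and both handle the case $s \notin S$ by observing that $S$ cannot transmit in round $1$, which exactly compensates for the one component that is informed for free. The paper phrases this via "active" components rather than counting transmissions, but the argument is the same.
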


\begin{proof}
    Let a component of $G-S$ be \emph{active} when there exists a vertex belonging to it that has received the message.
    It holds that on every round the number of additional components of $G-S$ that become active is at most $|S|$.
    In that case, if $s \in S$, then at least $|\cc(G-S)| / |S|$ rounds are required in order to make all of the components of $G-S$ active.
    Notice that the lower bound holds in the case $s \notin S$ as well,
    since then, although there exists a component which is already active at the start,
    after the first round of the protocol no additional component of $G-S$ has become active.
\end{proof}

\begin{lemma}\label{lem:structural_lb:diameter_edge_case}
    Given a connected graph $G$,
    for all $s \in V(G)$ it holds that
    $b(G,s) \ge \diam(G)/2$.
\end{lemma}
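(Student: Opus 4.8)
The plan is to reduce the statement to the elementary fact that the \emph{eccentricity} of the source lower-bounds the number of rounds, and then to relate eccentricity to diameter via the triangle inequality. Write $\ecc(s) = \max_{v \in V(G)} d(s,v)$. I will first show $b(G,s) \ge \ecc(s)$ and then $\ecc(s) \ge \diam(G)/2$, which together give the claim.

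\textbf{Step 1: arrival time lower-bounds distance.} Fix an arbitrary broadcast protocol $(T,\tau)$ for $(G,s)$ and let $a(v)$ denote the round at which a vertex $v$ receives the message (formally, $\tau$ of the edge joining $v$ to its parent in $T$), with $a(s)=0$. The key observation is that along the unique $s$-to-$v$ path $s=v_0,v_1,\dots,v_\ell=v$ in the tree $T$, the arrival times strictly increase: each $v_{i}$ must be informed before it can forward the message to $v_{i+1}$, so $a(v_i) < a(v_{i+1})$. Since arrival times are nonnegative integers, $a(v) = a(v_\ell) \ge \ell$. As this tree path is in particular a walk from $s$ to $v$ in $G$, its length $\ell$ is at least $d(s,v)$, and hence $a(v) \ge d(s,v)$. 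Taking the maximum over all $v$ yields $b(G,s) = \max_{v} a(v) \ge \max_{v} d(s,v) = \ecc(s)$.

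\textbf{Step 2: eccentricity versus diameter.} Let $u,w \in V(G)$ realize the diameter, i.e.\ $d(u,w) = \diam(G)$. By the triangle inequality for shortest-path distances,
\[
    \diam(G) = d(u,w) \le d(u,s) + d(s,w) \le 2\,\ecc(s).
\]
Combining with Step~1 gives $b(G,s) \ge \ecc(s) \ge \diam(G)/2$, as desired.

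\textbf{Expected difficulty.} There is essentially no serious obstacle here; the only point requiring care is the justification in Step~1 that arrival times strictly increase along a root-to-leaf path, which must be read off from the formal definition of a broadcast protocol (the timestamp of the parent edge is minimal among edges incident to $v$, and the parent can only forward after it is itself informed). Everything else is a direct application of the triangle inequality. I would keep the write-up short and lean on the protocol definition from the preliminaries rather than reproving properties of $\tau$.
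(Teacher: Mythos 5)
Your proposal is correct and follows essentially the same route as the paper: the paper likewise observes that every vertex must lie within distance $b(G,s)$ of $s$ (your Step~1, which you merely justify in more detail via arrival times along the broadcast tree) and then applies the triangle inequality to the endpoints of a diametral shortest path (your Step~2). No gaps.
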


\begin{proof}
    Consider a protocol of minimum broadcasting time $b(G,s)$
    as well as a shortest path $P$ in $G$ such that $\diam(G) = |V(P)|-1$
    with $u_1$ and $u_2$ denoting its endpoints.
    For every $v \in V(G)$ it holds that $d(v,s) \le b(G,s)$,
    that is, every vertex of $G$ is at distance at most $b(G,s)$ from $s$;
    if that were not the case $v$ cannot receive the message in time.
    Finally, since $P$ is a shortest path it holds that
    $\diam(G) = d(u_1,u_2) \le d(u_1,s) + d(s,u_2) \le 2 b(G,s)$
    and the statement follows.
\end{proof}


\begin{lemmarep}\label{lem:structural_lb:diameter}
    Given a connected graph $G$ and a non-empty set $S \subseteq V(G)$,
    for all $s \in V(G)$ it holds that
    \[
        b(G,s) \ge \sqrt{\frac{\sum_{C \in \cc(G-S)} \diam(C)}{2|S|}}.
    \]
\end{lemmarep}

\begin{proof}
    Consider a protocol of minimum broadcasting time $b(G,s)$.
    Let $\cc(G-S) = \setdef{C_i}{i \in [p]}$ for $p \in \mathbb{Z}^+$.
    For every $i \in [p]$, let $X_i = \setdef{x_{i,j}}{j \in [|X_i|]} \subseteq V(C_i)$ denote
    the vertices of component $C_i$ that receive the message directly from the vertices of $S$.
    Moreover, let for $q \in \{1,2\}$,
    $Y_{i,j}^q = \setdef{v \in V(C_i) \setminus X_i}{d_{C_i}(v,x_{i,j}) \le b(G,s) - q}$
    denote the vertices of $V(C_i) \setminus X_i$ that
    are at distance at most $b(G,s) - q$ from $x_{i,j}$ in $C_i$.
    In the following we consider two cases, depending on whether $s \in S$ or not.

    \proofsubparagraph{Case 1.}
    First assume that $s \in S$, in which case it holds that
    \begin{equation}\label{eq:structural_lb_1}
        \sum_{i \in [p]} |X_i| \le b(G,s) \cdot |S|,
    \end{equation}
    as the vertices of $S$ transmit the message to at most $|S|$ vertices per round.
    
    It holds that, for every vertex $v \in V(C_i) \setminus X_i$,
    there exists at least one vertex $x_{i,j_v} \in X_i$ such that $d_{C_i}(v,x_{i,j_v}) \le b(G,s) - 1$,
    that is, $v \in Y_{i,j_v}^1$;
    if that were not the case, $v$ cannot receive the message in time
    (the $-1$ is due to the round where $x_{i,j_v}$ receives the message from a vertex of $S$).
    Consequently, it follows that $\bigcup_{j \in [|X_i|]} Y_{i,j}^1 = V(C_i) \setminus X_i$.

    Consider a shortest path $P$ in $C_i$ such that $\diam(C_i) = |V(P)|-1$,
    while its intersection with $X_i$ is of maximum size.

    \begin{claim}\label{claim:structural_lb_diameter}
        For all $j \in [|X_i|]$, $|V(P) \cap Y_{i,j}^1| \le 2b(G,s) - 2$.
    \end{claim}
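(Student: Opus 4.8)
The plan is to exploit the fact that $P$ is a shortest path in $C_{i}$, so that its subpaths are geodesics: labeling its vertices $p_{0}, p_{1}, \dots, p_{L}$ with $L = \diam(C_{i})$, we have $d_{C_{i}}(p_{a}, p_{c}) = |a - c|$ for all indices $a, c$. Write $b = b(G,s)$. A vertex $p_{a}$ lies in $Y_{i,j}^{1}$ only if $p_{a} \notin X_{i}$ and $d_{C_{i}}(p_{a}, x_{i,j}) \le b - 1$. If $p_{a}$ and $p_{c}$ are two vertices of $V(P) \cap Y_{i,j}^{1}$, then the triangle inequality combined with the geodesic property gives $|a - c| = d_{C_{i}}(p_{a}, p_{c}) \le d_{C_{i}}(p_{a}, x_{i,j}) + d_{C_{i}}(x_{i,j}, p_{c}) \le 2(b-1)$. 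Hence the indices of the vertices in $V(P) \cap Y_{i,j}^{1}$ all lie in an integer interval of length at most $2(b-1)$, which contains at most $2b - 1$ integers. This already yields $|V(P) \cap Y_{i,j}^{1}| \le 2b - 1$, so the entire difficulty is to shave off the last vertex.

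To gain the final $-1$, I would argue by contradiction, assuming $|V(P) \cap Y_{i,j}^{1}| = 2b - 1$. Then the index interval must be fully saturated: writing $a_{\min}$ and $a_{\max}$ for the extreme indices, we must have $a_{\max} - a_{\min} = 2b - 2$, and \emph{every} $p_{a}$ with $a_{\min} \le a \le a_{\max}$ must lie in $Y_{i,j}^{1}$ and hence outside $X_{i}$. Chasing equality in the displayed chain of inequalities pins down the geometry: necessarily $d_{C_{i}}(p_{a_{\min}}, x_{i,j}) = d_{C_{i}}(p_{a_{\max}}, x_{i,j}) = b - 1$, and $x_{i,j}$ lies on a shortest $p_{a_{\min}}$--$p_{a_{\max}}$ path.

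The key step is then a rerouting argument invoking the maximality of $|V(P) \cap X_{i}|$. The plan is to replace the segment $p_{a_{\min}} \cdots p_{a_{\max}}$ of $P$ (which has length $2b-2$ and, by the previous paragraph, avoids $X_{i}$ entirely) by the shortest path $p_{a_{\min}} \to x_{i,j} \to p_{a_{\max}}$ (also of length $2b-2$). This produces a walk $P'$ between the two endpoints of $P$ of total length exactly $L = \diam(C_{i})$; since any walk whose length equals the distance between its ends must be a simple geodesic (a repeated vertex could be short-cut, contradicting the distance), $P'$ is again a diametral shortest path of $C_{i}$. If $x_{i,j} \in V(P)$, the distance constraints force it to be the midpoint $p_{a_{\min}+b-1}$, a vertex we assumed lies in $Y_{i,j}^{1}$ and hence outside $X_{i}$, an immediate contradiction; otherwise $x_{i,j} \notin V(P)$, so $V(P') \cap X_{i} \supsetneq V(P) \cap X_{i}$ (no $X_{i}$-vertex is lost in the replaced segment, and $x_{i,j} \in X_{i}$ is gained), contradicting the choice of $P$ as a diametral path maximizing its intersection with $X_{i}$. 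Either way we reach a contradiction, establishing $|V(P) \cap Y_{i,j}^{1}| \le 2b - 2$.

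I expect the main obstacle to be the rerouting step: one must verify carefully that $P'$ is genuinely a shortest (hence diametral) path of $C_{i}$, and that the replacement \emph{strictly} increases the number of $X_{i}$-vertices rather than merely preserving it, which is precisely where the hypothesis that the whole saturated window avoids $X_{i}$ is used. The degenerate possibilities (such as $b = 1$, where $Y_{i,j}^{1} \cap V(P) = \varnothing$, or $x_{i,j}$ landing on $P$) should be dispatched directly by the same distance bookkeeping.
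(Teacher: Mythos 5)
Your proposal is correct and follows essentially the same route as the paper's proof: first the triangle inequality together with the geodesic property of $P$ gives the bound $2b(G,s)-1$, and then the equality case is excluded by rerouting the saturated subpath through $x_{i,j}$ and invoking the maximality of $|V(P)\cap X_i|$. Your extra care in handling the subcase $x_{i,j}\in V(P)$ and in verifying that the rerouted walk is a simple diametral geodesic only makes explicit what the paper leaves implicit.
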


    \begin{claimproof}
        We first show that for all $j \in [|X_i|]$, $|V(P) \cap Y_{i,j}^1| \le 2b(G,s) - 1$.
        Assume otherwise, and let $j \in [|X_i|]$ such that $|V(P) \cap Y_{i,j}^1| > 2b(G,s) - 1$.
        Let $v_\ell, v_r \in V(P) \cap Y_{i,j}^1$ denote the leftmost and rightmost such vertices of $P$.
        It holds that $d_{C_i}(v_\ell, v_r) \le d_{C_i}(v_\ell, x_{i,j}) + d_{C_i}(x_{i,j}, v_r) \le 2b(G,s) - 2$. 
        Furthermore, notice that since $P$ is a shortest path,
        it follows that $d_{C_i}(v_\ell, v_r) \ge |V(P) \cap Y_{i,j}^1| - 1 > 2b(G,s) - 2$, a contradiction.
    
        Now we consider the case where $|V(P) \cap Y_{i,j}^1| = 2b(G,s) - 1$.
        Notice that all those vertices appear consecutively in $P$;
        if not, $P$ is not a shortest path since
        $d_{C_i}(v_\ell, v_r) \le d_{C_i}(v_\ell, x_{i,j}) + d_{C_i}(x_{i,j}, v_r) \le 2b(G,s) - 2 < d_P(v_\ell, v_r)$.
        Then, one can substitute the $v_\ell - v_r$ subpath of $P$ by
        the paths $v_\ell - x_{i,j}$ and $x_{i,j} - v_r$, without increasing the length of the path,
        while increasing its intersection with $X_i$.
        Since $P$ is a path of maximum intersection with $X_i$, this leads to a contradiction.
    \end{claimproof}

    Recall that $\bigcup_{j \in [|X_i|]} Y_{i,j}^1 = V(C_i) \setminus X_i$.
    Therefore, due to \cref{claim:structural_lb_diameter} it holds that
    \begin{equation}\label{eq:structural_lb_2}
        \diam(C_i) <
        |X_i| + \sum_{j \in [|X_i|]} |V(P) \cap Y_{i,j}^1| \le
        |X_i| \cdot (2b(G,s) - 1) <
        2|X_i| \cdot b(G,s).
    \end{equation}
    The statement now follows from \cref{eq:structural_lb_1,eq:structural_lb_2}.

    \proofsubparagraph{Case 2.}
    In an analogous way we prove the statement in the case where $s \notin S$.
    Let $\pi \in [p]$ such that $s$ belongs to $C_{\pi} \in \cc(G-S)$.
    Notice that it holds that
    \begin{equation}\label{eq:structural_lb_3}
        \sum_{i \in [p]} |X_i| \le (b(G,s)-1) \cdot |S|,
    \end{equation}
    as the vertices of $S$ transmit the message to at most $|S|$ vertices per round,
    while at the first round no vertex of $S$ has received the message.

    For every $i \in [p] \setminus \{\pi\}$ it holds that,
    for every vertex $v \in V(C_i) \setminus X_i$,
    there exists at least one vertex $x_{i,j_v} \in X_i$ such that $d_{C_i}(v,x_{i,j_v}) \le b(G,s) - 2$,
    that is, $v \in Y_{i,j_v}^2$;
    if that were not the case, $v$ cannot receive the message in time
    (the $-2$ is due to the round where $x_{i,j_v}$ receives the message from a vertex of $S$,
    which in turn receives the message from one of its neighbors).
    Consequently, it follows that $\bigcup_{j \in [|X_i|]} Y_{i,j}^2 = V(C_i) \setminus X_i$,
    for all $i \in [p] \setminus \{\pi\}$.
    As for $C_{\pi}$, notice that if a vertex $v \in V(C_\pi)$ does not belong to $\bigcup_{j \in [|X_\pi|]} Y_{\pi, j}^2$,
    then it follows that $d_{C_{\pi}}(v,s) \le b(G,s)$; if that were not the case, then $v$ cannot receive the message by the end of round $b(G,s)$.
    Let $Y'_{\pi} = \setdef{v \in V(C_{\pi}) \setminus \{s\}}{d_{C_{\pi}}(v,s) \le b(G,s)}$ denote the set of such vertices,
    where $Y'_{\pi} \cup \{s\} \cup \bigcup_{j \in [|X_{\pi}|]} Y_{\pi,j}^2 = V(C_{\pi}) \setminus X_i$.
    
    In an analogous way as in \cref{claim:structural_lb_diameter} one can show that
    for all $i \in [p] \setminus \{\pi\}$ it holds that $\diam(C_i) < |X_i| \cdot (2b(G,s) - 3)$,
    while $\diam(C_{\pi}) < |X_{\pi}| \cdot (2b(G,s) - 3) + (1+2b(G,s))$.
    Summing over all $i \in [p]$ and making use of \cref{eq:structural_lb_3} gives
    \begin{align*}
        \sum_{i \in [p]} \diam(C_i)
        &< (2b(G,s)-3) \cdot \sum_{i \in [p]} |X_i| + (1+2b(G,s))\\
        &\le (2b(G,s)-3) \cdot (b(G,s)-1) \cdot |S| + (1+2b(G,s))\\
        &= 2|S| \cdot (b(G,s))^2 - 5|S| \cdot b(G,s) + 3|S| + 1 + 2b(G,s),
    \end{align*}
    thus it suffices to prove that $-5|S| \cdot b(G,s) + 3|S| + 1 + 2b(G,s) \le 0$,
    or equivalently that $2b(G,s)+1 \le |S| \cdot (5b(G,s)-3)$.
    Since $b(G,s) \ge 1$, it follows that this is equivalent to
    \[
        \frac{2b(G,s)+1}{5b(G,s)-3} \le |S|,
    \]
    which holds for any $b(G,s) \ge 2$, as $|S| \ge 1$.
    For the remaining case $b(G,s) = 1$, it follows that $G$ can only be a $K_2$,
    in which case the statement holds as well.
\end{proof}

We are now ready to proceed with the main result of this section concerning tree-depth.

\begin{theorem}\label{thm:structural_lb:td}
    Let $G$ be a connected graph with $n>1$ vertices of tree-depth $\td$.
    Then, for all $s \in V(G)$ it holds that $b(G,s) \ge n^{1 / \td} / \td$.
\end{theorem}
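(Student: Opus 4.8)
The plan is to prove the equivalent statement that $n \le (\td \cdot \beta)^{\td}$, where $\beta := b(G,s)$, from which the claim follows immediately by taking $\td$-th roots; I will use freely that $n>1$ together with connectivity forces $\beta \ge 1$. I would argue by induction on $\td$, exploiting the recursive characterization of tree-depth: there is a vertex $v$ (a root of an optimal elimination forest) such that every connected component of $G-v$ has tree-depth at most $\td-1$. Fix such a $v$ and let $C_1,\dots,C_m$ be the components of $G-v$, so that $\sum_i |V(C_i)| = n-1$.

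The central idea is to convert the (possibly) multi-source way in which a component gets informed into single-source subproblems to which the induction hypothesis applies. Fix an optimal non-redundant protocol and its broadcast tree rooted at $s$. Since every edge incident to $C_i$ either lies inside $C_i$ or goes to $v$, the restriction of the broadcast tree to $V(C_i)$ is a forest whose roots are exactly the \emph{entry points} of $C_i$ (vertices whose parent in the tree is $v$) together with $s$ itself in case $s\in C_i$. Each such subtree $T_r$ rooted at $r$ is broadcast from $r$ within at most $\beta$ rounds by the restriction of the global protocol, so $b(T_r,r)\le \beta$; moreover $T_r$, being a subgraph of $C_i$, has tree-depth at most $\td-1$. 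Because $v$ forwards the message at most once per round, the total number of entry points over all components is at most $\beta$, and hence the total number of subtree roots is at most $\beta+1$.

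I would then bound each subtree. For a subtree with more than one vertex I apply the induction hypothesis, $b(T_r,r)\ge |V(T_r)|^{1/\td(T_r)}/\td(T_r)$; since the function $d\mapsto N^{1/d}/d$ is decreasing in $d$ (its logarithm $\frac{\ln N}{d}-\ln d$ has negative derivative), using $\td(T_r)\le \td-1$ gives $\beta\ge b(T_r,r)\ge |V(T_r)|^{1/(\td-1)}/(\td-1)$, i.e.\ $|V(T_r)|\le((\td-1)\beta)^{\td-1}$. Single-vertex subtrees satisfy this bound trivially because $((\td-1)\beta)^{\td-1}\ge 1$; this is the only place where $n>1$ is used, as the inductive inequality genuinely fails for $K_1$ (where $b=0$ but $1^{1/\td}/\td>0$). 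Summing over the at most $\beta+1$ subtrees yields $n-1\le (\beta+1)\bigl((\td-1)\beta\bigr)^{\td-1}$.

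It remains to verify the elementary inequality $1+(\beta+1)\bigl((\td-1)\beta\bigr)^{\td-1}\le(\td\beta)^{\td}$. Using $\beta+1\le 2\beta$ and $1\le \beta^{\td}$, the left side is at most $\beta^{\td}\bigl(1+2(\td-1)^{\td-1}\bigr)$, so it suffices to check $1+2(\td-1)^{\td-1}\le\td^{\td}$, which follows from $\td^{\td}/(\td-1)^{\td-1}=\td\bigl(1+\frac{1}{\td-1}\bigr)^{\td-1}\ge 2\td\ge 4$ for $\td\ge2$. The base case is absorbed into the induction: when $\td=2$ every component of $G-v$ is a single vertex, all subtrees are singletons, and only the trivial bound is used. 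The main obstacle is exactly the multi-source issue in the inductive step—a component may be entered by $v$ many times, so the single-source hypothesis cannot be applied to $C_i$ directly; splitting the broadcast tree into root-subtrees and charging the at most $\beta$ entry points (plus the source) is what makes the recursion go through, while the monotonicity of $N^{1/d}/d$ is what lets us relax $\td(T_r)$ up to $\td-1$.
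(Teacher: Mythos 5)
Your proof is correct, but it takes a genuinely different route from the paper's. The paper argues non-inductively: it takes an elimination tree of height $\td$, shows by a simple counting argument that some node $u$ must have at least $n^{1/\td}$ children (otherwise $n \le \sum_{i=0}^{\td-1} n^{i/\td} < n-1$), lets $S$ be the root-to-$u$ path (so $|S|\le \td$), observes that $G-S$ then has at least $n^{1/\td}$ connected components, and concludes via the general lemma $b(G,s)\ge |\cc(G-S)|/|S|$, which counts how many components can become active per round. Your argument instead inducts on $\td$ through the recursive characterization, splitting the broadcast tree at the deletion vertex $v$ into root-subtrees, bounding each subtree by the induction hypothesis (together with the monotonicity of $N^{1/d}/d$ to relax the tree-depth up to $\td-1$), and counting the subtrees by $v$'s forwarding capacity $\beta$. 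All the steps check out: the entry-point accounting, the restriction of the protocol to each $T_r$ certifying $b(T_r,r)\le\beta$, the handling of singleton subtrees, and the closing inequality $1+(\beta+1)((\td-1)\beta)^{\td-1}\le(\td\beta)^{\td}$ (via $(1+\tfrac{1}{\td-1})^{\td-1}\ge 2$) are all valid. The trade-off: the paper's proof is shorter and reuses a separator/component lemma that also drives its pathwidth bound, extracting a single ``bad'' separator $S$ of size $\td$ all at once; your recursion is self-contained, does not need that lemma, and makes explicit where the per-round bottleneck arises at each level of the elimination tree, at the cost of a slightly more delicate bookkeeping (multi-source components resolved by splitting into root-subtrees) and a closing arithmetic estimate.
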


\begin{proof}
    Since $G$ is of tree-depth $\td$, there exists an \emph{elimination tree} $T$ of $G$
    of height at most $\td$.
    This is a rooted tree on the same vertex set, with $r$ denoting the root vertex,
    where every pair of vertices adjacent in $G$ adheres to the ancestor/descendant relation.

    Notice that if $n^{1/\td} \le 2$, then the statement holds, unless $G$ is an isolated vertex.
    In the following, assume that $n^{1/\td} > 2$.
    We will prove that there exists a vertex with at least $n^{1/\td}$ children in $T$.
    Assume that this is not the case.
    Then, it holds that
    \[
        n = |V(T)| \le \sum_{i=0}^{\td-1} n^{(1/\td) \cdot i}
        = \frac{n - 1}{n^{1/\td}-1}
        < n-1,
    \]
    which is a contradiction.
    Consequently, there exists a vertex $u$ with at least $n^{1/\td}$ children in $T$.

    Let $S \subseteq V(G)$ contain the vertices present in the $u$-$r$ path of $T$, where $1 \le |S| \le \td$.
    Then, $G-S$ has at least $n^{1 / \td}$ connected components, 1 per child of $u$ in $T$,
    and the statement follows due to \cref{lem:structural_lb:cc}.
\end{proof}

Prior to presenting \cref{thm:structural_lb:pw_plus_length} we need the following definition.

\begin{definition}[{\cite[Proof overview of Theorem~5.2]{arxiv/AminianKSS25}}]
    A \emph{standard path decomposition} is a path decomposition in which the vertices in each bag
    are not a subset of the vertices in the preceding or succeeding bag along the path.
\end{definition}
Observe that in a standard path decomposition there are no two distinct bags with the vertices of the first one being
a subset of the vertices of the second.
We define the \emph{length} of a path decomposition to be the number of its bags.

\begin{theoremrep}\label{thm:structural_lb:pw_plus_length}
    Let $G$ be a connected graph with more than one vertex that admits a standard path decomposition of width $\pw$ and length $\ell$.
    Then, for all $s \in V(G)$ it holds that $b(G,s) \ge \sqrt{\ell^{1 / (\pw+1)} / 2\pw}$.
\end{theoremrep}

\begin{proof}
    Let $V(G) = \setdef{v_i}{i \in [n]}$.
    Consider a standard path decomposition $\mathcal{P}_1 = (P_1, \beta_1)$ of $G$ of width $\pw$ and length $\ell$,
    where for node $p \in V(P_1)$, $\beta_1(p) \subseteq V(G)$ denotes its bag,
    with $|\beta_1(p)| \le \pw+1$.
    Recall that no bag is a subset of another bag in $\mathcal{P}_1$,
    that is, $\beta_1(p) \not\subseteq \beta_1(p')$ for all distinct $p,p' \in V(P_1)$.
    We let the length of a path decomposition $\mathcal{P} = (P, \beta)$ be denoted by $L(\mathcal{P}) = |V(P)|$.

    For every vertex $v_i \in V(G)$,
    let $I^{\mathcal{P}}_i = \setdef{p \in V(P)}{v_i \in \beta(p)}$
    be the interval of bags in the path decomposition $\mathcal{P} = (P, \beta)$ containing $v_i$.
    We will say that vertex $v_i$ is \emph{present} in a path decomposition $\mathcal{P}$ if $I^{\mathcal{P}}_i \neq \varnothing$,
    while $\mathcal{I}^{\mathcal{P}} = \setdef{i \in [n]}{I^{\mathcal{P}}_i \neq \varnothing}$ is the set of the indices
    of all vertices of $G$ that are present in $\mathcal{P}$.

    Let %
    $G[\mathcal{P}]$ be the interval graph defined by the intervals
    $\setdef{I^{\mathcal{P}}_i}{i \in \mathcal{I}^{\mathcal{P}}}$ for a path decomposition $\mathcal{P}$.
    In that case, let $G' = G[\mathcal{P}_1] = (V,E')$
    be the interval graph defined by intervals $I^{\mathcal{P}_1}_1, \ldots, I^{\mathcal{P}_1}_n$,
    i.e., $G'$ is the graph obtained from $G$ by adding any edge $\{u,v\}$
    if there exists a node $p \in V(P_1)$ with $u,v \in \beta_1(p)$.
    Since $E' \supseteq E$, it follows that $b(G',s) \le b(G,s)$,
    and we will prove the statement for graph $G'$ instead.
    Notice that $\mathcal{P}_1$ is a standard path decomposition of graph $G'$.

    Having fixed our notation, we now provide a high-level overview of our proof.
    We will proceed inductively:
    starting from the interval graph $G[\mathcal{P}_j]$,
    as long as there exists some `long' interval $I^{\mathcal{P}_j}_i$
    (that is, $|I^{\mathcal{P}_j}_i| > L_j$ for some $L_j$) representing vertex $v_i$,
    we move on to the interval subgraph $G[\mathcal{P}_{j+1}]$,
    where $\mathcal{P}_{j+1}$ is defined by restricting $\mathcal{P}_j$ to
    the bags of $I^{\mathcal{P}_j}_i$ and removing vertex $v_i$.
    The described procedure will run for at most $\pw$ steps,%
    \footnote{For the sake of completeness we also consider the case where the procedure does not run for any steps,
    in which case \cref{lem:structural_lb:diameter_edge_case} can be applied.}
    resulting in a subset $S \subseteq V$ of size at most $\pw$,
    such that either $G'-S$ has many connected components,
    or the sum of the diameters of said connected components is large,
    thus allowing us to apply \cref{lem:structural_lb:cc,lem:structural_lb:diameter} respectively
    and obtain the stated bound.

    Let for $j \in [\pw]$, $L_j = \ell^{1 - \frac{j}{\pw+1}}$.
    We first note that if $L_{\pw} \le 2$, i.e., $\ell^{1/(\pw+1)} \le 2$,
    then $\sqrt{\ell^{1 / (\pw+1)} / 2\pw} \le 1 \le b(G,s)$ and the statement holds.
    Thus, in the following assume that $L_{\pw} > 2$.
    We set $S_1 = \varnothing$,
    and starting from $j=1$ we exhaustively apply the following rule.

    \proofsubparagraph{Rule $(\diamond)$.}
    If $j = \pw+1$, stop.
    Otherwise, if there exists $i \in \mathcal{I}^{\mathcal{P}_j}$ with $|I^{\mathcal{P}_j}_i| > L_j$,
    then set $S_{j+1} = S_j \cup \{v_i\}$,
    and $\mathcal{P}_{j+1} = (P_{j+1}, \beta_{j+1})$ to be the restriction
    of $\mathcal{P}_j$ in the bags contained in $I^{\mathcal{P}_j}$,
    albeit with removing the vertex $v_i$,
    i.e., $V(P_{j+1}) = V(P_j) \cap I^{\mathcal{P}_j}_i$ and
    $\beta_{j+1}(p) = \beta_j(p) \setminus \{v_i\}$, where $p \in V(P_{j+1})$.
    Notice that $\mathcal{I}^{\mathcal{P}_{j+1}} = \setdef{i' \in \mathcal{I}^{\mathcal{P}_{j}}}{I^{\mathcal{P}_j}_{i'} \cap I^{\mathcal{P}_j}_{i} \neq \varnothing} \setminus \{i\}$.
    Recurse on $j+1$.

    \begin{claim}\label{claim:structural_lb:pathwidth_helper}
        Let $j_{\max} \in [\pw+1]$ such that Rule $(\diamond)$ cannot be further applied.
        Then the following hold:
        \begin{enumerate}
            \item For $j \in [j_{\max}]$ and any two distinct nodes $p, p' \in V(P_{j})$,
            $\beta_j(p) \not\subseteq \beta_j(p')$, that is, $\mathcal{P}_j$ is a standard path decomposition.

            \item For all $p \in V(P_1)$, $S_{j_{\max}} \not\supseteq \beta_1(p)$.
        \end{enumerate}
    \end{claim}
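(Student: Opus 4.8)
The plan is to prove Part~1 by induction and then bootstrap it to prove Part~2. Throughout, write $v_{i_j}$ for the vertex $v_i$ selected when Rule~$(\diamond)$ is applied at step $j$, so that $S_{j_{\max}} = \{v_{i_1}, \dots, v_{i_{j_{\max}-1}}\}$ and these vertices are pairwise distinct (each is deleted once). Two structural facts underlie everything, and I would record them first: (a)~the node sets are nested, $V(P_1) \supseteq V(P_2) \supseteq \cdots \supseteq V(P_{j_{\max}})$, with $V(P_{j+1}) = I^{\mathcal{P}_j}_{i_j}$ since $I^{\mathcal{P}_j}_{i_j} \subseteq V(P_j)$; and (b)~since restriction and removal never reinsert a vertex, $\beta_j(p) = \beta_1(p) \setminus \{v_{i_1}, \dots, v_{i_{j-1}}\}$ for every surviving node $p$, whence $V(P_j) = \setdef{p \in V(P_1)}{\{v_{i_1}, \dots, v_{i_{j-1}}\} \subseteq \beta_1(p)}$.

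For Part~1, I would induct on $j$. The base case $j=1$ holds since $\mathcal{P}_1$ is standard by hypothesis. For the step, assume $\mathcal{P}_j$ is standard and suppose $p \ne p'$ in $V(P_{j+1})$ satisfy $\beta_{j+1}(p) \subseteq \beta_{j+1}(p')$. As $p, p' \in I^{\mathcal{P}_j}_{i_j}$, we have $v_{i_j} \in \beta_j(p) \cap \beta_j(p')$; adding $v_{i_j}$ back to both sides of $\beta_j(p) \setminus \{v_{i_j}\} \subseteq \beta_j(p') \setminus \{v_{i_j}\}$ gives $\beta_j(p) \subseteq \beta_j(p')$, contradicting the induction hypothesis. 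The only point to stress is that $v_{i_j}$ lies in both bags, which is precisely why we restricted to its interval.

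For Part~2, the case $j_{\max}=1$ is immediate, as then $S_{j_{\max}} = \varnothing$ while bags are non-empty. Assume $j_{\max} \ge 2$ and, for contradiction, that $\beta_1(p) \subseteq S_{j_{\max}}$ for some node $p$. If moreover $S_{j_{\max}} \subseteq \beta_1(p)$, then $\beta_1(p) = S_{j_{\max}}$, so $p \in V(P_{j_{\max}})$ and $\beta_{j_{\max}}(p) = \varnothing$; but $|V(P_{j_{\max}})| = |I^{\mathcal{P}_{j_{\max}-1}}_{i_{j_{\max}-1}}| > L_{j_{\max}-1} \ge L_{\pw} > 2$, so $\mathcal{P}_{j_{\max}}$ has at least two nodes and, being standard by Part~1, admits no empty bag — a contradiction. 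Otherwise let $j_p$ be the least index with $v_{i_{j_p}} \notin \beta_1(p)$; then $1 \le j_p \le j_{\max}-1$, so $p \in V(P_{j_p}) \setminus V(P_{j_p+1})$ and $p$ lies outside the interval $[L', R'] := I^{\mathcal{P}_{j_p}}_{i_{j_p}}$, say to its left. From $\beta_1(p) \subseteq S_{j_{\max}}$ and the definition of $j_p$ we obtain $\beta_{j_p}(p) = \beta_1(p) \setminus \{v_{i_1}, \dots, v_{i_{j_p-1}}\} \subseteq \{v_{i_{j_p+1}}, \dots, v_{i_{j_{\max}-1}}\}$.

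The crux is the interval argument that follows, which I expect to be the most delicate part to state cleanly. Fix any $v_{i_{j'}} \in \beta_{j_p}(p)$ with $j' > j_p$. Its bag-interval $I^{\mathcal{P}_{j_p}}_{i_{j'}}$ is contiguous in $P_{j_p}$ and contains $p$ (which lies to the left of $L'$); moreover $v_{i_{j'}}$ is selected at the later step $j'$ inside the restriction to $[L', R']$, so $\varnothing \ne I^{\mathcal{P}_{j'}}_{i_{j'}} = I^{\mathcal{P}_{j_p}}_{i_{j'}} \cap V(P_{j'})$ with $V(P_{j'}) \subseteq V(P_{j_p+1}) = [L', R']$, whence $I^{\mathcal{P}_{j_p}}_{i_{j'}}$ also meets $[L', R']$. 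A contiguous set of nodes of $P_{j_p}$ containing a node $\le p < L'$ and a node $\ge L'$ must contain $L'$, so $v_{i_{j'}} \in \beta_{j_p}(L')$. Hence $\beta_{j_p}(p) \subseteq \beta_{j_p}(L')$ with $p \ne L'$, contradicting the standardness of $\mathcal{P}_{j_p}$ guaranteed by Part~1 (the symmetric case where $p$ lies to the right of $[L', R']$ uses $R'$ in place of $L'$; note the conclusion holds even if $\beta_{j_p}(p) = \varnothing$). This completes the proof.
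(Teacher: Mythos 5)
Your proof is correct. Part~1 is the same induction as in the paper: both $p$ and $p'$ lie in the interval of the removed vertex, so adding that vertex back to both bags lifts the containment to $\mathcal{P}_j$ and contradicts the induction hypothesis. For Part~2, however, you take a genuinely different route. The paper argues entirely at the level of $\mathcal{P}_1$: letting $B[S]$ denote the bags of $\mathcal{P}_1$ containing all of $S$, it observes that the last selected vertex lies in more than $L_{j_{\max}-1} \ge L_{\pw} > 2$ bags of $B[S_{j_{\max}-1}]$, so $|B[S_{j_{\max}}]| \ge 2$; any bag $p \in B[S_{j_{\max}}]$ other than the offending $p'$ then satisfies $\beta_1(p) \supseteq S_{j_{\max}} \supseteq \beta_1(p')$, contradicting standardness of $\mathcal{P}_1$ in one stroke. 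You instead locate the first level $j_p$ at which the offending node drops out of the nested sequence of decompositions and derive a containment $\beta_{j_p}(p) \subseteq \beta_{j_p}(L')$ via an interval-contiguity (Helly-on-a-path) argument, contradicting standardness of $\mathcal{P}_{j_p}$; the case $\beta_1(p) = S_{j_{\max}}$ needs separate treatment via the empty bag. Both arguments rely on the same two ingredients (the standing assumption $L_{\pw} > 2$ and Part~1), but the paper's counting argument is shorter and avoids the case split and the contiguity bookkeeping, whereas yours pinpoints more explicitly \emph{which} bag witnesses the violation and at which level of the construction, which arguably makes the mechanism more transparent.
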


    \begin{claimproof}
        We prove the first statement by induction.
        It is true for $j=1$ and assume that this is the case for $j-1$.
        Since $\mathcal{P}_j$ is obtained by considering a subset of bags of $\mathcal{P}_{j-1}$
        all of which contain a vertex which we remove, the statement holds.

        As for the second statement, it is clearly true if $j_{\max} = 1$ since $S_1 = \varnothing$.
        Assume that for $j_{\max} > 1$, there exists a bag $p'$ with $\beta_1(p') \subseteq S_{j_{\max}}$.
        For $S \subseteq V(G)$, let $B[S] = \setdef{p \in V(P_1)}{S \subseteq \beta_1(p)}$
        denote the bags of $\mathcal{P}_1$ that contain all elements of $S$.
        Notice that for all  $j \in [j_{\max}]$, it holds that $S_j \subseteq S_{j_{\max}}$, thus $B[S_{j}] \supseteq B[S_{j_{\max}}]$.
        Moreover, the vertex $v \in S_{j_{\max}} \setminus S_{j_{\max}-1}$ belongs to at least $L_{j_{\max}-1} \ge L_{\pw} \ge 2$ bags of $B[S_{j_{\max}-1}]$, thus $|B[S_{j_{\max}}]| \ge 2$, consequently there exists a bag
        $p \in B[S_{j_{\max}}] \setminus \{p' \}$ such that $\beta_1(p) \supseteq \beta_1(p')$,
        a contradiction.
    \end{claimproof}

    To finish our proof, we consider different cases depending on the value of $j_{\max}$.

    \begin{claim}\label{claim:structural_lb:pathwidth_case_cc}
        If $j_{\max} = \pw+1$, then $b(G',s) \ge {\ell^{1 / (\pw+1)} / \pw}$.
    \end{claim}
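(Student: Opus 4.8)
The plan is to apply \cref{lem:structural_lb:cc} to $G'$ with the modulator $S_{\pw+1}$. Since $j_{\max}=\pw+1$, Rule~$(\diamond)$ was carried out successfully in each of the steps $j=1,\dots,\pw$ (each time restricting to an interval of length strictly greater than $L_j$ and deleting one vertex), so $|S_{\pw+1}|=\pw$ and the final decomposition satisfies $m := L(\mathcal{P}_{\pw+1}) > L_{\pw} = \ell^{1/(\pw+1)}$. Because $|S_{\pw+1}|=\pw$, it suffices to prove that $G'-S_{\pw+1}$ has more than $L_{\pw}$ connected components: then $b(G',s) \ge |\cc(G'-S_{\pw+1})|/|S_{\pw+1}| > L_{\pw}/\pw = \ell^{1/(\pw+1)}/\pw$, as required.

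First I would pin down the structure of $\mathcal{P}_{\pw+1}$. Unwinding the recursion of Rule~$(\diamond)$ shows that the set of positions $W := V(P_{\pw+1})$ is exactly those $p \in V(P_1)$ whose original bag $\beta_1(p)$ contains all $\pw$ vertices of $S_{\pw+1}$. Since each original bag has at most $\pw+1$ vertices, every restricted bag $\beta_{\pw+1}(p)=\beta_1(p)\setminus S_{\pw+1}$ with $p\in W$ therefore contains at most one vertex. As $\mathcal{P}_{\pw+1}$ is a standard path decomposition by \cref{claim:structural_lb:pathwidth_helper} and $m>L_{\pw}>2$, no bag can be empty (the empty set would be a subset of every other bag), so each of the $m$ bags is a \emph{distinct} singleton, and no vertex can occupy two bags (two equal singleton bags would be comparable). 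Writing $p_1,\dots,p_m$ for the positions of $W$ in path order, we thus obtain exactly $m$ distinct vertices $w_1,\dots,w_m$ present in $\mathcal{P}_{\pw+1}$, where $w_k$ is the unique vertex of $\beta_1(p_k)\setminus S_{\pw+1}$.

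The heart of the argument is to turn these singletons into components. For an interior position $p_k$ with $2\le k\le m-1$, contiguity of intervals forces $I^{\mathcal{P}_1}_{w_k}=\{p_k\}$, since if $w_k$ appeared in a neighbouring bag it would have to coincide with $w_{k-1}$ or $w_{k+1}$. Hence the only bag of $\mathcal{P}_1$ containing $w_k$ is $p_k$, whose other occupants all lie in $S_{\pw+1}$, so $w_k$ is \emph{isolated} in $G'-S_{\pw+1}$; this already yields $m-2$ singleton components. To recover the remaining two components I would show that no vertex outside $S_{\pw+1}$ has an interval crossing all of $W$: such an interval would cover some interior position $p_k$ and hence equal $w_k$, contradicting $I^{\mathcal{P}_1}_{w_k}=\{p_k\}$. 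Consequently any path in $G'-S_{\pw+1}$ joining a vertex lying partly to the left of $W$ to one lying partly to the right of $W$ would have to pass through an interior singleton, which is impossible; so the components of $w_1$ and of $w_m$ are distinct from one another and from the $m-2$ interior singletons. Counting gives $|\cc(G'-S_{\pw+1})| \ge (m-2)+2 = m > L_{\pw}$, and \cref{lem:structural_lb:cc} concludes the proof.

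I expect the main obstacle to be this final disconnection step: cleanly ruling out a reconnection of the ``left'' and ``right'' sides of the window $W$ through the rest of $G'$, that is, proving that no interval crosses $W$ and hence that the two boundary components are genuinely separate. (Keeping the count at $\ge m$ rather than $\ge m-1$ is what makes the bound go through without integrality slack, since we only know $m>L_{\pw}$.) By contrast, establishing that every restricted bag is a singleton and that the interior vertices are isolated is comparatively routine once the description of $W$ is unwound.
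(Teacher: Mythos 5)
Your proof is correct and follows essentially the same route as the paper's: restrict attention to the window of bags of $\mathcal{P}_1$ containing $S_{\pw+1}$, note that each such bag is $S_{\pw+1}$ plus one distinct extra vertex, conclude that $G'-S_{\pw+1}$ has more than $L_{\pw}=\ell^{1/(\pw+1)}$ connected components, and apply \cref{lem:structural_lb:cc}. The only difference is that you spell out the component count (interior singletons are isolated, and the two boundary components cannot reconnect around the window) where the paper asserts it in a single line.
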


    \begin{claimproof}
        In the path decomposition $\mathcal{P}_1$ of $G'$
        there exist at least $L(\mathcal{P}_{j_{\max}})$ bags such that for every such bag $p$,
        $S_{j_{\max}} \subsetneq \beta(p)$ holds
        due to Rule~$(\diamond)$ and \cref{claim:structural_lb:pathwidth_helper}.
        Since each such bag has size at most $\pw+1$, while $|S_{j_{\max}}| = \pw$,
        it holds that each such bag is composed of the union of a distinct vertex of $G'$ and $S_{j_{\max}}$.
        In that case, $G' - S_{j_{\max}}$ has at least
        $L(\mathcal{P}_{j_{\max}}) > L_{\pw} = {\ell}^{1 / (\pw+1)}$
        connected components,
        and applying \cref{lem:structural_lb:cc} gives the stated bound.
    \end{claimproof}

    \begin{claim}\label{claim:structural_lb:pathwidth_case_diam_edge_case}
        If $j_{\max} = 1$, then $b(G',s) \ge \ell^{1 / (\pw+1)} / 2$.
    \end{claim}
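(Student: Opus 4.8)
The plan is to use the standing hypothesis $j_{\max}=1$ to deduce that every interval $I^{\mathcal{P}_1}_i$ is short, convert this into a large lower bound on $\diam(G')$, and then conclude with \cref{lem:structural_lb:diameter_edge_case}. Since $G'$ is connected with more than one vertex we have $\pw\ge 1$, so $j=1$ is strictly below $\pw+1$; hence Rule~$(\diamond)$ can only have stopped because no long interval exists, i.e.\ $|I^{\mathcal{P}_1}_i|\le L_1=\ell^{\pw/(\pw+1)}$ for every $i\in\mathcal{I}^{\mathcal{P}_1}$. Thus each vertex of $G'$ occupies at most $L_1$ consecutive bags of $\mathcal{P}_1$.

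Next I would bound the diameter of $G'$ from below. Recall that $G'=G[\mathcal{P}_1]$ is exactly the interval graph of the intervals $\{I^{\mathcal{P}_1}_i\}$, so two vertices are adjacent if and only if their intervals share a bag. As $\mathcal{P}_1$ is standard, its first and last bags are non-empty, so I can fix a vertex $u$ present in the first bag and a vertex $w$ present in the last bag. Along any $u$--$w$ path the intervals of consecutive vertices overlap, so the union of all intervals appearing on the path is a single contiguous block of bags containing both bag $1$ and bag $\ell$; it therefore covers all $\ell$ bags. Since each interval spans at most $L_1$ bags, such a path must contain more than $\ell/L_1=\ell^{1/(\pw+1)}$ vertices, which forces $d_{G'}(u,w)\ge \ell^{1/(\pw+1)}$ (up to the constant accounting discussed below) and hence $\diam(G')\ge \ell^{1/(\pw+1)}$.

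Finally, \cref{lem:structural_lb:diameter_edge_case} applied to the connected graph $G'$ yields
\[
    b(G',s)\ge \frac{\diam(G')}{2}\ge \frac{\ell^{1/(\pw+1)}}{2},
\]
which is exactly the claim.

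I expect the only genuinely delicate point to be the precise constant in the diameter bound. Counting overlapping intervals naively produces a chain whose number of vertices is at least $(\ell-1)/(L_1-1)$, an expression that always exceeds $\ell^{1/(\pw+1)}$ but only slightly, so the resulting distance is of the form $\ell^{1/(\pw+1)}-1$ rather than the clean $\ell^{1/(\pw+1)}$. Closing this gap is where I would invoke the integrality of $d_{G'}(u,w)$ together with the standing assumption $L_{\pw}>2$ (equivalently $\ell^{1/(\pw+1)}>2$): since the chain length strictly exceeds $\ell^{1/(\pw+1)}$, rounding the integer distance up recovers the target value. This bookkeeping, and not the structural idea, is the part that requires care.
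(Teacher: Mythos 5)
Your overall route is exactly the paper's: deduce from $j_{\max}=1$ that every interval has length at most $L_1=\ell^{\pw/(\pw+1)}$, lower-bound $\diam(G')$ by $\ell/L_1$ via a path joining the two ends of the decomposition, and finish with \cref{lem:structural_lb:diameter_edge_case}. The one place where you diverge is the bookkeeping you yourself flag as delicate, and your proposed fix does not close the gap. From ``the path has at least $(\ell-1)/(L_1-1)>\ell^{1/(\pw+1)}$ vertices'' you only get $d_{G'}(u,w)=m-1>\ell^{1/(\pw+1)}-1$; integrality of the distance then yields $d_{G'}(u,w)\ge\lceil \ell^{1/(\pw+1)}\rceil-1$, which is strictly smaller than $\ell^{1/(\pw+1)}$ whenever the latter is not an integer (e.g.\ if $\ell^{1/(\pw+1)}=4.5$ you only conclude $d\ge 4$). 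The standing assumption $L_{\pw}>2$ does not rescue this. So as written you establish $b(G',s)\ge(\ell^{1/(\pw+1)}-1)/2$ rather than the claimed $\ell^{1/(\pw+1)}/2$.

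The paper closes this gap by a different choice of endpoints: it takes vertices that appear \emph{only} in the leftmost bag and \emph{only} in the rightmost bag of $\mathcal{P}_1$. These exist precisely because the decomposition is standard (the first bag is not contained in the second, so some vertex of the first bag has a singleton interval, and likewise at the other end); this is what \cref{claim:structural_lb:pathwidth_helper} is invoked for. With singleton intervals at both ends, the $m-2$ interior vertices alone must cover at least $\ell-2$ bags, so $d_{G'}(u,w)=m-1\ge(\ell-2)/L_1+1\ge \ell/L_1$ as soon as $L_1\ge 2$, which holds since $L_1\ge L_{\pw}>2$. Replacing your arbitrary choice of $u$ and $w$ with this one repairs the constant, and the rest of your argument goes through verbatim.
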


    \begin{claimproof}
        Since $j_{\max} = 1$, it holds that for all $v_i \in V(G)$,
        $|I^{\mathcal{P}_{1}}_i| \le L_{1} = \ell^{\frac{\pw}{\pw+1}}$.
        In that case, it holds that $\diam(G') \ge \ell / L_{1} = \ell^{1 / (\pw+1)}$,
        as evidenced by the shortest path between vertices appearing only in the leftmost and the rightmost bags of $\mathcal{P}_1$
        (such vertices exist due to \cref{claim:structural_lb:pathwidth_helper}),
        as well as the fact that every vertex belongs to at most $L_{1}$ bags.
        The statement now follows by \cref{lem:structural_lb:diameter_edge_case}.
    \end{claimproof}

    \begin{claim}\label{claim:structural_lb:pathwidth_case_diam}
        If $j_{\max} \in [2, \pw]$, then $b(G',s) \ge \sqrt{{\ell}^{1 / (\pw+1)} / 2\pw}$.
    \end{claim}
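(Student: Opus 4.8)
The plan is to apply \cref{lem:structural_lb:diameter} to $G'$ with the separator $S \coloneqq S_{j_{\max}}$, whose size is $|S| = j_{\max}-1 \le \pw-1$ because Rule~$(\diamond)$ performed exactly $j_{\max}-1$ deletions before stopping at step $j_{\max}$. The shape of the target bound $\sqrt{\ell^{1/(\pw+1)}/2\pw}$ makes it natural to aim for $\sum_{C \in \cc(G'-S)} \diam(C) \gtrsim \ell^{1/(\pw+1)}$ and divide by $2|S|\le 2\pw$. However, the per-component diameter estimate below only yields $\diam(C)+1 \ge \lambda/L_{j_{\max}}$ (losing an additive constant per block), so I would instead track the quantity $\sum_{C \in \cc(G'-S)}(\diam(C)+1)$, which simultaneously accounts for the diameters and the number of components.

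First I would establish the structural lower bound $\sum_{C \in \cc(G'-S)}(\diam(C)+1) > \ell^{1/(\pw+1)}$. Let $m = L(\mathcal{P}_{j_{\max}})$ be the number of bags supporting $\mathcal{P}_{j_{\max}}$; since this range is precisely the long interval deleted at step $j_{\max}-1$, we have $m > L_{j_{\max}-1}$. By \cref{claim:structural_lb:pathwidth_helper} every bag $\beta_1(p)$ of the range contains a vertex outside $S$, and since Rule~$(\diamond)$ was inapplicable at step $j_{\max}$, every $i \in \mathcal{I}^{\mathcal{P}_{j_{\max}}}$ satisfies $|I^{\mathcal{P}_{j_{\max}}}_i| \le L_{j_{\max}}$, i.e.\ each such vertex occupies at most $L_{j_{\max}}$ consecutive bags. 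Because $G'$ is an interval graph, the components of $G'-S$ meeting the range occupy pairwise disjoint contiguous blocks of bags which, as each bag carries a non-$S$ vertex, tile all $m$ bags; and no non-$S$ vertex can bridge two blocks, since a vertex whose interval reached both ends of the range would span more than $L_{j_{\max}}$ bags (note $m > L_{j_{\max}-1} > L_{j_{\max}}$) and would have been deleted. Thus these blocks are genuinely distinct components of $G'-S$, and a component occupying a block of $\lambda$ bags admits no shortcut edges skipping bags, so a shortest path across the block visits at least $\lambda/L_{j_{\max}}$ vertices, giving $\diam(C)+1 \ge \lambda/L_{j_{\max}}$. Summing over the blocks with $\sum\lambda = m$,
\[
    \sum_{C \in \cc(G'-S)}(\diam(C)+1) \ge \frac{m}{L_{j_{\max}}} > \frac{L_{j_{\max}-1}}{L_{j_{\max}}} = \ell^{1/(\pw+1)}.
\]

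For the matching upper bound I would revisit the \emph{proof} of \cref{lem:structural_lb:diameter} rather than invoke only its statement. That proof shows, for each component $C_i$ of $G'-S$ with the set $X_i$ of vertices that receive the message directly from $S$, both $\diam(C_i) < 2|X_i|\,b(G',s)$ and $\sum_i|X_i| \le b(G',s)\,|S|$ (for $s\in S$; the case $s\notin S$ is analogous, and one may in any case add $s$ to $S$, keeping $|S|\le\pw$). As all quantities are integers, $\diam(C_i)+1 \le 2|X_i|\,b(G',s)$, and summing gives
\[
    \sum_{C \in \cc(G'-S)}(\diam(C)+1) \le 2\,b(G',s)\sum_i|X_i| \le 2\,b(G',s)^2\,|S| \le 2\pw\cdot b(G',s)^2 .
\]
Combining the two displays yields $\ell^{1/(\pw+1)} < 2\pw\cdot b(G',s)^2$, hence $b(G',s) > \sqrt{\ell^{1/(\pw+1)}/2\pw}$, as required.

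I expect the main obstacle to be the middle step: arguing cleanly, purely from the interval-graph structure of $G'$, that the blocks tiling the range are distinct components of $G'-S$ with no shortcut edges (so that the per-block estimate $\diam(C)+1 \ge \lambda/L_{j_{\max}}$ is valid). A secondary but essential subtlety is that using \cref{lem:structural_lb:diameter} as a black box only gives $\sum_C\diam(C)\le 2\pw\,b(G',s)^2$, which is too weak when $G'-S$ has many small components; it is precisely the strengthened per-component form $\sum_C(\diam(C)+1)\le 2\pw\,b(G',s)^2$, read off from the proof, that makes the constant $2\pw$ land exactly, with the placement of the source $s$ absorbed by the slack $|S|\le\pw-1$.
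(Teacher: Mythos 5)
Your proposal is correct and follows essentially the same route as the paper: the same partition of the bags of $\mathcal{P}_{j_{\max}}$ into contiguous blocks corresponding to components of $G'-S_{j_{\max}}$, the same per-block diameter estimate via interval lengths bounded by $L_{j_{\max}}$, the same ratio $L_{j_{\max}-1}/L_{j_{\max}}=\ell^{1/(\pw+1)}$, and the same reliance on \cref{lem:structural_lb:diameter}. The only divergence is bookkeeping: the paper asserts the per-block bound $\diam(C_i)\ge |B_i|/L_{j_{\max}}$ directly (using that, by \cref{claim:structural_lb:pathwidth_helper}, the extremal bags of each block contain private vertices) and then applies the lemma as a black box, whereas you track $\sum_{C}(\diam(C)+1)$ and extract the integrality-strengthened inequality $\diam(C_i)+1\le 2|X_i|\,b(G',s)$ from the lemma's proof; your variant is in fact marginally more robust, since it does not break down on degenerate single-bag blocks where $\diam(C_i)=0$.
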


    \begin{claimproof}
        Since $j_{\max} \in [2, \pw]$, it holds that for all $i \in \mathcal{I}^{\mathcal{P}_{j_{\max}}}$,
        $|I^{\mathcal{P}_{j_{\max}}}_i| \le L_{j_{\max}}$.
        We will prove that
        \[
            \sum_{C \in \cc(G'-S_{j_{\max}})} \diam(C) \ge \frac{L(\mathcal{P}_{j_{\max}})}{L_{j_{\max}}}.
        \]
        Notice that for all $j_{\max} \in [2,\pw]$,
        \[
            \frac{L(\mathcal{P}_{j_{\max}})}{L_{j_{\max}}} \ge
            \frac{L_{j_{\max}-1}}{L_{j_{\max}}} = {\ell}^{1/(\pw+1)},
        \]
        thus one can next apply \cref{lem:structural_lb:diameter} and obtain the stated bounds
        since $1 \le |S_{j_{\max}}| \le \pw$.

        Now consider the graph $G[\mathcal{P}_{j_{\max}}]$,
        and let $C_1, \ldots, C_q$, for $q \ge 1$, denote its connected components.
        These connected components induce a partition $(B_1, \ldots, B_q)$ on the bags of
        $\mathcal{P}_{j_{\max}}$,
        with a bag belonging to $B_i$ if its vertices lie in $C_i$.
        Assume that the components are numbered from left to right,
        with respect to the ordering of their bags in the path decomposition. 
        Notice that for all $i \in [q]$, it holds that $\diam(C_i) \ge |B_i| / L_{j_{\max}}$,
        as evidenced by the shortest path between vertices appearing only in the leftmost and the rightmost bags of $B_i$
        (such vertices exist due to \cref{claim:structural_lb:pathwidth_helper}),        
        as well as the fact that every vertex belongs to at most $L_{j_{\max}}$ bags.
        Moreover, as far as the connected components of graph $G'-S_{j_{\max}}$ are concerned,
        notice that $C_i \in \cc(G'-S_{j_{\max}})$ for all $i \in [2, q-1]$,
        while there exist components $C'_1, C'_2 \in \cc(G'-S_{j_{\max}})$ that extend $C_1$ and $C_q$ respectively,
        in which case the diameter of those components can only increase.
        
        Consequently, it holds that
        \[
            \sum_{C \in \cc(G'-S_{j_{\max}})} \diam(C) \ge \frac{\sum_{i \in [q]}|B_i|}{L_{j_{\max}}}=
            \frac{L(\mathcal{P}_{j_{\max}})}{L_{j_{\max}}}.
        \]
        This completes the proof.
    \end{claimproof}
    The statement now follows from \cref{claim:structural_lb:pathwidth_case_cc,claim:structural_lb:pathwidth_case_diam_edge_case,claim:structural_lb:pathwidth_case_diam}.
\end{proof}

Observe that one can easily transform a path decomposition into a standard path decomposition of the same width;
it suffices to discard any bag whose vertices are a subset of another bag's vertices.
Furthermore, notice that in any path decomposition every vertex appears at least once.
This, along with \cref{thm:structural_lb:pw_plus_length}, leads to \cref{corollary:structural_lb:pw},
which in turn, along with the approximation algorithm of Elkin and Kortsarz~\cite{jcss/ElkinK06},
leads to \cref{corollary:structural_lb:apx_ratio}.

\begin{corollary}\label{corollary:structural_lb:pw}
    Let $G$ be a connected graph with $n > 1$ vertices of pathwidth $\pw$.
    Then, for all $s \in V(G)$ it holds that
    \[
        b(G,s) \ge \sqrt{ \frac{(n/(\pw+1))^{1 / (\pw+1)}}{2\pw}}.
    \]
\end{corollary}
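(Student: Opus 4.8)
The plan is to derive \cref{corollary:structural_lb:pw} from \cref{thm:structural_lb:pw_plus_length} by converting an arbitrary path decomposition of width $\pw$ into a standard one and then lower-bounding its length in terms of $n$ and $\pw$. First I would take an optimal path decomposition of $G$ of width $\pw$; such a decomposition exists by the definition of pathwidth. I would then make it standard in the manner described right after the theorem statement, namely by repeatedly discarding any bag whose vertex set is contained in the vertex set of some other bag. This operation does not increase the width (we only remove bags, so the maximum bag size can only decrease or stay the same), and crucially it preserves the property of being a valid path decomposition, since the vertices and edges covered are unaffected by removing a redundant bag. Let $\ell$ denote the length (number of bags) of the resulting standard path decomposition $\mathcal{P}$.

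The key quantitative step is to lower-bound $\ell$. Since every vertex of $G$ must appear in at least one bag of any path decomposition, the multiset of bags collectively covers all $n$ vertices. Each bag contains at most $\pw+1$ vertices, so the $\ell$ bags can contain at most $(\pw+1)\ell$ vertex-occurrences in total; in particular, counting each vertex at least once gives $n \le (\pw+1)\ell$, hence
\[
    \ell \ge \frac{n}{\pw+1}.
\]

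With this bound in hand, I would invoke \cref{thm:structural_lb:pw_plus_length} for the standard path decomposition $\mathcal{P}$ of width $\pw$ and length $\ell$, which yields
\[
    b(G,s) \ge \sqrt{\frac{\ell^{1/(\pw+1)}}{2\pw}}.
\]
Substituting the lower bound $\ell \ge n/(\pw+1)$ and using that $x \mapsto x^{1/(\pw+1)}$ is monotone increasing, I obtain
\[
    b(G,s) \ge \sqrt{\frac{(n/(\pw+1))^{1/(\pw+1)}}{2\pw}},
\]
which is exactly the claimed bound. I would also note the mild hypothesis $n > 1$ guarantees we are not in the degenerate single-vertex case and that \cref{thm:structural_lb:pw_plus_length} applies (it requires more than one vertex).

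I do not expect any serious obstacle here, as essentially all the difficulty is already absorbed into \cref{thm:structural_lb:pw_plus_length}; this corollary is a routine specialization. The only point requiring a small amount of care is confirming that standardizing the path decomposition genuinely leaves a valid decomposition of the same (or smaller) width and that the counting argument $n \le (\pw+1)\ell$ is applied to the standardized decomposition rather than the original—but since standardizing only removes bags, the surviving bags still cover all vertices, so the counting bound remains valid. The monotonicity substitution at the end is immediate and needs no further justification.
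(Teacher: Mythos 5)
Your proposal is correct and follows exactly the route the paper intends: standardize the decomposition by discarding redundant bags (preserving width), lower-bound the length by $\ell \ge n/(\pw+1)$ via the fact that every vertex appears in some bag of size at most $\pw+1$, and plug this into \cref{thm:structural_lb:pw_plus_length} using monotonicity. This matches the paper's own derivation, which it sketches in the paragraph preceding the corollary.
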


\begin{corollary}\label{corollary:structural_lb:apx_ratio}
    There is a polynomial-time algorithm for {\TB} achieving an approximation ratio of $\bO(\pw)$,
    where $\pw$ denotes the pathwidth of the input graph.
\end{corollary}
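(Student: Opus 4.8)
The plan is to invoke the polynomial-time approximation algorithm of Elkin and Kortsarz~\cite{jcss/ElkinK06}, whose approximation ratio on any instance $(G,s)$ is $\bO\!\left(\frac{\log n}{\log b(G,s)}\right)$, and to show that on a graph of pathwidth $\pw$ this quantity is always $\bO(\pw)$. A pleasant feature of this approach is that the algorithm itself is oblivious to $\pw$: I would simply run the single Elkin--Kortsarz algorithm and analyze its ratio a posteriori, so there is no need to compute a path decomposition or even the value of $\pw$. Thus the entire content of the statement reduces to the inequality $\frac{\log n}{\log b(G,s)} = \bO(\pw)$.

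To establish this inequality I would take logarithms in the bound of \cref{corollary:structural_lb:pw}, which gives
\[
    \log b(G,s) \;\ge\; \frac{1}{2}\left( \frac{1}{\pw+1}\log\frac{n}{\pw+1} \;-\; \log(2\pw) \right).
\]
The leading term here behaves like $\frac{\log n}{2(\pw+1)}$, which is exactly of the order $\frac{\log n}{\pw}$ that we want, but the error term $\log(2\pw)$ can swamp it when $\pw$ is large; controlling this interplay is the main obstacle, and it is what forces a case distinction.

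I would split on a threshold around $\pw \approx \frac{\log n}{\log\log n}$. In the regime where $\pw$ is large, say $\pw \ge \frac{\log n}{8\log\log n}$, I do not use \cref{corollary:structural_lb:pw} at all: the elementary bound $b(G,s) \ge \lceil \log n\rceil$ (noted in \cref{sec:introduction}) already yields $\log b(G,s) \ge \log\log n$, whence $\frac{\log n}{\log b(G,s)} \le \frac{\log n}{\log\log n} = \bO(\pw)$ directly. In the complementary regime $\pw < \frac{\log n}{8\log\log n}$, the displayed bound becomes useful: here $\log(\pw+1)$ and $\log(2\pw)$ are both $\bO(\log\log n)$, while the leading term is at least $\frac{\log n}{4(\pw+1)} \ge \frac{\log n}{8\pw} \ge \log\log n$; a short calculation then shows the error term is dominated by half the leading term, so $\log b(G,s) = \Omega\!\left(\frac{\log n}{\pw}\right)$ and again $\frac{\log n}{\log b(G,s)} = \bO(\pw)$.

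Combining the two cases gives $\frac{\log n}{\log b(G,s)} = \bO(\pw)$ on every input, so the Elkin--Kortsarz algorithm is an $\bO(\pw)$-approximation running in polynomial time. The only routine points left to nail down are the handling of the small-$n$ corner cases (where $\log\log n$ is not yet meaningful) and the exact constants in the threshold, neither of which affects the asymptotic conclusion.
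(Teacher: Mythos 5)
Your proposal is correct and follows essentially the same route as the paper, which derives \cref{corollary:structural_lb:apx_ratio} directly from \cref{corollary:structural_lb:pw} together with the $\bO(\log n/\log b(G,s))$-ratio algorithm of Elkin and Kortsarz; the paper leaves the details implicit, and your case distinction at $\pw \approx \frac{\log n}{\log\log n}$ (falling back on $b(G,s)\ge\lceil\log n\rceil$ when the pathwidth bound becomes trivial) is exactly the right way to fill them in. The only quibble is that your specific constant $8$ in the threshold does not quite make the error term $\log(2\pw)$ half the leading term for all large $n$, but as you note this is fixed by enlarging the constant and does not affect the $\bO(\pw)$ conclusion.
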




\section{Conclusion}\label{sec:conclusion}

In this paper we have given an extensive investigation of the complexity of the {\TB} problem with respect to
many of the most prominent structural graph parameters.
Let us also mention a few concrete open questions.
First, with regards to dense graph parameters, we have presented an exact FPT algorithm parameterized by the vertex deletion distance to clique,
but only approximation algorithms for the more general parameters cluster vertex deletion number and clique-cover number.
Can these be improved to exact FPT algorithms?

Second, with respect to treewidth, even though the problem is NP-hard for graphs of treewidth $2$,
we showed that an XP algorithm is possible if we seek a fast broadcast schedule, that is, $t=\bO(\log n)$,
because the problem admits an algorithm with parameter dependence $2^{\bO(t\cdot\tw)}$ (\cref{thm:fpt:twt}).
Can this be improved to an FPT algorithm parameterized by treewidth, under this restriction on $t$?
Note that treewidth is the only parameter for which this question is interesting,
as for graphs of small pathwidth we have established that the broadcast time must be significantly higher than $\bO(\log n)$ (\cref{corollary:structural_lb:pw}).

Finally, the computational complexity of {\TB} on graph classes would be another interesting research direction.
In particular, the complexity on split graphs remains open~\cite{tcs/JansenM95}.



\bibliography{bibliography}

\begin{thebibliography}{10}

\bibitem{algorithms/AmbashankarH15}
Akash Ambashankar and Hovhannes~A. Harutyunyan.
\newblock Broadcasting in stars of cliques and path-connected cliques.
\newblock {\em Algorithms}, 18(2), 2025.
\newblock \href {https://doi.org/10.3390/a18020076} {\path{doi:10.3390/a18020076}}.

\bibitem{arxiv/AminianKSS25}
Aida Aminian, Shahin Kamali, Seyed-Mohammad Seyed-Javadi, and Sumedha.
\newblock On the complexity of telephone broadcasting: From cacti to bounded pathwidth graphs.
\newblock {\em CoRR}, abs/2501.12316, 2025.
\newblock \href {https://arxiv.org/abs/2501.12316} {\path{arXiv:2501.12316}}.

\bibitem{ipl/Anstee87}
Richard~P. Anstee.
\newblock A polynomial algorithm for \emph{b}-matchings: An alternative approach.
\newblock {\em Inf. Process. Lett.}, 24(3):153--157, 1987.
\newblock \href {https://doi.org/10.1016/0020-0190(87)90178-5} {\path{doi:10.1016/0020-0190(87)90178-5}}.

\bibitem{siamcomp/Bar-NoyGNS00}
Amotz Bar{-}Noy, Sudipto Guha, Joseph Naor, and Baruch Schieber.
\newblock Message multicasting in heterogeneous networks.
\newblock {\em {SIAM} J. Comput.}, 30(2):347--358, 2000.
\newblock \href {https://doi.org/10.1137/S0097539798347906} {\path{doi:10.1137/S0097539798347906}}.

\bibitem{caldam/BhabakH15}
Puspal Bhabak and Hovhannes~A. Harutyunyan.
\newblock Constant approximation for broadcasting in k-cycle graph.
\newblock In {\em Algorithms and Discrete Applied Mathematics - First International Conference, {CALDAM} 2015. Proceedings}, volume 8959 of {\em Lecture Notes in Computer Science}, pages 21--32. Springer, 2015.
\newblock \href {https://doi.org/10.1007/978-3-319-14974-5_3} {\path{doi:10.1007/978-3-319-14974-5_3}}.

\bibitem{join/BhabakH19}
Puspal Bhabak and Hovhannes~A. Harutyunyan.
\newblock Approximation algorithm for the broadcast time in k-path graph.
\newblock {\em J. Interconnect. Networks}, 19(4):1950006:1--1950006:22, 2019.
\newblock \href {https://doi.org/10.1142/S0219265919500063} {\path{doi:10.1142/S0219265919500063}}.

\bibitem{mst/BoralCKP16}
Anudhyan Boral, Marek Cygan, Tomasz Kociumaka, and Marcin Pilipczuk.
\newblock A fast branching algorithm for cluster vertex deletion.
\newblock {\em Theory Comput. Syst.}, 58(2):357--376, 2016.
\newblock \href {https://doi.org/10.1007/s00224-015-9631-7} {\path{doi:10.1007/s00224-015-9631-7}}.

\bibitem{jco/CevnikZ17}
Maja Cevnik and Janez Zerovnik.
\newblock Broadcasting on cactus graphs.
\newblock {\em J. Comb. Optim.}, 33(1):292--316, 2017.
\newblock \href {https://doi.org/10.1007/S10878-015-9957-8} {\path{doi:10.1007/S10878-015-9957-8}}.

\bibitem{tcs/ChenKX10}
Jianer Chen, Iyad~A. Kanj, and Ge~Xia.
\newblock Improved upper bounds for vertex cover.
\newblock {\em Theor. Comput. Sci.}, 411(40-42):3736--3756, 2010.
\newblock \href {https://doi.org/10.1016/J.TCS.2010.06.026} {\path{doi:10.1016/J.TCS.2010.06.026}}.

\bibitem{books/CyganFKLMPPS15}
Marek Cygan, Fedor~V. Fomin, Lukasz Kowalik, Daniel Lokshtanov, D{\'{a}}niel Marx, Marcin Pilipczuk, Michal Pilipczuk, and Saket Saurabh.
\newblock {\em Parameterized Algorithms}.
\newblock Springer, 2015.
\newblock \href {https://doi.org/10.1007/978-3-319-21275-3} {\path{doi:10.1007/978-3-319-21275-3}}.

\bibitem{Diestel17}
Reinhard Diestel.
\newblock {\em Graph Theory}, volume 173 of {\em Graduate texts in mathematics}.
\newblock Springer, 2017.
\newblock \href {https://doi.org/10.1007/978-3-662-53622-3} {\path{doi:10.1007/978-3-662-53622-3}}.

\bibitem{algorithmica/DrangeDH16}
P{\aa}l~Gr{\o}n{\aa}s Drange, Markus~S. Dregi, and Pim van~'t Hof.
\newblock On the computational complexity of vertex integrity and component order connectivity.
\newblock {\em Algorithmica}, 76(4):1181--1202, 2016.
\newblock \href {https://doi.org/10.1007/S00453-016-0127-X} {\path{doi:10.1007/S00453-016-0127-X}}.

\bibitem{siamcomp/ElkinK05}
Michael Elkin and Guy Kortsarz.
\newblock A combinatorial logarithmic approximation algorithm for the directed telephone broadcast problem.
\newblock {\em {SIAM} J. Comput.}, 35(3):672--689, 2005.
\newblock \href {https://doi.org/10.1137/S0097539704440740} {\path{doi:10.1137/S0097539704440740}}.

\bibitem{algorithmica/ElkinK06}
Michael Elkin and Guy Kortsarz.
\newblock An approximation algorithm for the directed telephone multicast problem.
\newblock {\em Algorithmica}, 45(4):569--583, 2006.
\newblock \href {https://doi.org/10.1007/S00453-005-1196-4} {\path{doi:10.1007/S00453-005-1196-4}}.

\bibitem{jcss/ElkinK06}
Michael Elkin and Guy Kortsarz.
\newblock Sublogarithmic approximation for telephone multicast.
\newblock {\em J. Comput. Syst. Sci.}, 72(4):648--659, 2006.
\newblock \href {https://doi.org/10.1016/J.JCSS.2005.12.002} {\path{doi:10.1016/J.JCSS.2005.12.002}}.

\bibitem{networks/Farley80}
Arthur~M. Farley.
\newblock Minimum-time line broadcast networks.
\newblock {\em Networks}, 10(1):59--70, 1980.
\newblock \href {https://doi.org/10.1002/NET.3230100106} {\path{doi:10.1002/NET.3230100106}}.

\bibitem{tcs/FominFG24}
Fedor~V. Fomin, Pierre Fraigniaud, and Petr~A. Golovach.
\newblock Parameterized complexity of broadcasting in graphs.
\newblock {\em Theor. Comput. Sci.}, 997:114508, 2024.
\newblock \href {https://doi.org/10.1016/J.TCS.2024.114508} {\path{doi:10.1016/J.TCS.2024.114508}}.

\bibitem{dam/FraigniaudL94}
Pierre Fraigniaud and Emmanuel Lazard.
\newblock Methods and problems of communication in usual networks.
\newblock {\em Discret. Appl. Math.}, 53(1-3):79--133, 1994.
\newblock \href {https://doi.org/10.1016/0166-218X(94)90180-5} {\path{doi:10.1016/0166-218X(94)90180-5}}.

\bibitem{talg/Gabow18}
Harold~N. Gabow.
\newblock Data structures for weighted matching and extensions to \emph{b}-matching and \emph{f}-factors.
\newblock {\em {ACM} Trans. Algorithms}, 14(3):39:1--39:80, 2018.
\newblock \href {https://doi.org/10.1145/3183369} {\path{doi:10.1145/3183369}}.

\bibitem{fm/GareyJ79}
M.~R. Garey and David~S. Johnson.
\newblock {\em Computers and Intractability: {A} Guide to the Theory of {NP}-Completeness}.
\newblock W. H. Freeman, 1979.

\bibitem{join/GholamiHM23}
Mohammad~Saber Gholami, Hovhannes~A. Harutyunyan, and Edward Maraachlian.
\newblock Optimal broadcasting in fully connected trees.
\newblock {\em J. Interconnect. Networks}, 23(1):2150037:1--2150037:20, 2023.
\newblock \href {https://doi.org/10.1142/S0219265921500377} {\path{doi:10.1142/S0219265921500377}}.

\bibitem{siamdm/GrigniP91}
Michelangelo Grigni and David Peleg.
\newblock Tight bounds on minimum broadcast networks.
\newblock {\em {SIAM} J. Discret. Math.}, 4(2):207--222, 1991.
\newblock \href {https://doi.org/10.1137/0404021} {\path{doi:10.1137/0404021}}.

\bibitem{stacs/HarrisN24}
David~G. Harris and N.~S. Narayanaswamy.
\newblock A faster algorithm for vertex cover parameterized by solution size.
\newblock In {\em 41st International Symposium on Theoretical Aspects of Computer Science, {STACS} 2024}, volume 289 of {\em LIPIcs}, pages 40:1--40:18. Schloss Dagstuhl - Leibniz-Zentrum f{\"{u}}r Informatik, 2024.
\newblock \href {https://doi.org/10.4230/LIPICS.STACS.2024.40} {\path{doi:10.4230/LIPICS.STACS.2024.40}}.

\bibitem{ciac/HarutyunyanH23}
Hovhannes~A. Harutyunyan and Narek~A. Hovhannisyan.
\newblock Broadcasting in split graphs.
\newblock In {\em Algorithms and Complexity - 13th International Conference, {CIAC} 2023, Proceedings}, volume 13898 of {\em Lecture Notes in Computer Science}, pages 278--292. Springer, 2023.
\newblock \href {https://doi.org/10.1007/978-3-031-30448-4_20} {\path{doi:10.1007/978-3-031-30448-4_20}}.

\bibitem{jco/HarutyunyanM08}
Hovhannes~A. Harutyunyan and Edward Maraachlian.
\newblock On broadcasting in unicyclic graphs.
\newblock {\em J. Comb. Optim.}, 16(3):307--322, 2008.
\newblock \href {https://doi.org/10.1007/S10878-008-9160-2} {\path{doi:10.1007/S10878-008-9160-2}}.

\bibitem{approx/HathcockK024}
Daniel Hathcock, Guy Kortsarz, and R.~Ravi.
\newblock The telephone $k$-multicast problem.
\newblock In {\em Approximation, Randomization, and Combinatorial Optimization. Algorithms and Techniques, {APPROX/RANDOM} 2024}, volume 317 of {\em LIPIcs}, pages 21:1--21:16. Schloss Dagstuhl - Leibniz-Zentrum f{\"{u}}r Informatik, 2024.
\newblock \href {https://doi.org/10.4230/LIPICS.APPROX/RANDOM.2024.21} {\path{doi:10.4230/LIPICS.APPROX/RANDOM.2024.21}}.

\bibitem{networks/HedetniemiHL88}
Sandra~Mitchell Hedetniemi, Stephen~T. Hedetniemi, and Arthur~L. Liestman.
\newblock A survey of gossiping and broadcasting in communication networks.
\newblock {\em Networks}, 18(4):319--349, 1988.
\newblock \href {https://doi.org/10.1002/NET.3230180406} {\path{doi:10.1002/NET.3230180406}}.

\bibitem{sp/Hromkovic05}
Juraj Hromkovic, Ralf Klasing, Andrzej Pelc, Peter Ruzicka, and Walter Unger.
\newblock {\em Dissemination of Information in Communication Networks - Broadcasting, Gossiping, Leader Election, and Fault-Tolerance}.
\newblock Texts in Theoretical Computer Science. An {EATCS} Series. Springer, 2005.
\newblock \href {https://doi.org/10.1007/B137871} {\path{doi:10.1007/B137871}}.

\bibitem{mst/HuffnerKMN10}
Falk H{\"{u}}ffner, Christian Komusiewicz, Hannes Moser, and Rolf Niedermeier.
\newblock Fixed-parameter algorithms for cluster vertex deletion.
\newblock {\em Theory Comput. Syst.}, 47(1):196--217, 2010.
\newblock \href {https://doi.org/10.1007/s00224-008-9150-x} {\path{doi:10.1007/s00224-008-9150-x}}.

\bibitem{orl/HulettWW08}
Heather Hulett, Todd~G. Will, and Gerhard~J. Woeginger.
\newblock Multigraph realizations of degree sequences: Maximization is easy, minimization is hard.
\newblock {\em Oper. Res. Lett.}, 36(5):594--596, 2008.
\newblock \href {https://doi.org/10.1016/J.ORL.2008.05.004} {\path{doi:10.1016/J.ORL.2008.05.004}}.

\bibitem{tcs/JansenM95}
Klaus Jansen and Haiko M{\"{u}}ller.
\newblock The minimum broadcast time problem for several processor networks.
\newblock {\em Theor. Comput. Sci.}, 147(1{\&}2):69--85, 1995.
\newblock \href {https://doi.org/10.1016/0304-3975(94)00230-G} {\path{doi:10.1016/0304-3975(94)00230-G}}.

\bibitem{siamdm/KortsarzP95}
Guy Kortsarz and David Peleg.
\newblock Approximation algorithms for minimum-time broadcast.
\newblock {\em {SIAM} J. Discret. Math.}, 8(3):401--427, 1995.
\newblock \href {https://doi.org/10.1137/S0895480193245923} {\path{doi:10.1137/S0895480193245923}}.

\bibitem{jacm/PapadimitriouY82}
Christos~H. Papadimitriou and Mihalis Yannakakis.
\newblock The complexity of restricted spanning tree problems.
\newblock {\em J. {ACM}}, 29(2):285--309, 1982.
\newblock \href {https://doi.org/10.1145/322307.322309} {\path{doi:10.1145/322307.322309}}.

\bibitem{focs/Ravi94}
R.~Ravi.
\newblock Rapid rumor ramification: Approximating the minimum broadcast time (extended abstract).
\newblock In {\em 35th Annual Symposium on Foundations of Computer Science, Santa Fe, New Mexico, USA, 20-22 November 1994}, pages 202--213. {IEEE} Computer Society, 1994.
\newblock \href {https://doi.org/10.1109/SFCS.1994.365693} {\path{doi:10.1109/SFCS.1994.365693}}.

\bibitem{approx/Schindelhauer00}
Christian Schindelhauer.
\newblock On the inapproximability of broadcasting time.
\newblock In {\em Approximation Algorithms for Combinatorial Optimization, Third International Workshop, {APPROX} 2000, Proceedings}, volume 1913 of {\em Lecture Notes in Computer Science}, pages 226--237. Springer, 2000.
\newblock \href {https://doi.org/10.1007/3-540-44436-X_23} {\path{doi:10.1007/3-540-44436-X_23}}.

\bibitem{siamcomp/SlaterCH81}
Peter~J. Slater, Ernest~J. Cockayne, and Stephen~T. Hedetniemi.
\newblock Information dissemination in trees.
\newblock {\em {SIAM} J. Comput.}, 10(4):692--701, 1981.
\newblock \href {https://doi.org/10.1137/0210052} {\path{doi:10.1137/0210052}}.

\bibitem{arxiv/Tale24}
Prafullkumar Tale.
\newblock Double exponential lower bound for telephone broadcast.
\newblock {\em CoRR}, abs/2403.03501, 2024.
\newblock \href {https://arxiv.org/abs/2403.03501} {\path{arXiv:2403.03501}}.

\bibitem{mst/Tsur21}
Dekel Tsur.
\newblock Faster parameterized algorithm for cluster vertex deletion.
\newblock {\em Theory Comput. Syst.}, 65(2):323--343, 2021.
\newblock \href {https://doi.org/10.1007/s00224-020-10005-w} {\path{doi:10.1007/s00224-020-10005-w}}.

\bibitem{caldam/XuL24}
Jinghan Xu and Zhiyuan Li.
\newblock Broadcast graph is {NP}-complete.
\newblock In {\em Algorithms and Discrete Applied Mathematics - 11th International Conference, {CALDAM} 2025, Proceedings}, volume 15536 of {\em Lecture Notes in Computer Science}, pages 343--357. Springer, 2025.
\newblock \href {https://doi.org/10.1007/978-3-031-83438-7_29} {\path{doi:10.1007/978-3-031-83438-7_29}}.

\end{thebibliography}

\end{document}